%% file: main-ccs2026.tex
\def\ExtendedVersion{}
\ifdefined\ExtendedVersion
  \documentclass[sigconf,nonacm]{acmart}
\else
  \documentclass[sigconf]{acmart} 
\fi
\newif\ifextended
\ifdefined\ExtendedVersion \extendedtrue \else \extendedfalse \fi

\usepackage[english]{babel}
\usepackage{booktabs}
\usepackage{listings}
\usepackage{stmaryrd}    
\usepackage{mathtools}   
\usepackage[usenames,dvipsnames]{xcolor}
\usepackage{graphicx}
\usepackage{multirow}
\usepackage[inline]{enumitem}
\usepackage{amsmath,amsthm,mathtools,thmtools}
\usepackage[inference]{semantic}
\usepackage{cleveref}

\input{macro}

\allowdisplaybreaks

\AtBeginDocument{%
  }

\ifextended
\else
\copyrightyear{2026}
\acmYear{2026}
\setcopyright{cc}
\setcctype{by}
\acmConference[CCS '26] {Proceedings of the 2026 ACM SIGSAC Conference on Computer and Communications Security}{November 15--19, 2026}{The Hague, Netherlands.}
\acmBooktitle{Proceedings of the 2026 ACM SIGSAC Conference on Computer and Communications Security (CCS '26), November 15--19, 2026, The Hague, Netherlands}
\acmISBN{979-8-4007-2871-6/2026/11}
\acmDOI{10.1145/3830454.3846689}

\fi

\begin{document}

\title{Enforcement of In-Kernel Stateful Security Policies via eBPF}
\ifextended
  \subtitle{Extended Version}
  \titlenote{This is the extended version of the paper appearing in
    \emph{Proc.\ ACM SIGSAC Conference on Computer and Communications
    Security (CCS~'26)}.  It includes the appendices with the full formal development
    and the additional language constructs omitted from the proceedings version.}
\fi

\author{Letterio Galletta}
\orcid{0000-0003-0351-9169}
\affiliation{%
  \institution{IMT School for Advanced Studies Lucca}
  \city{Lucca}
  \country{Italy}
}
\email{letterio.galletta@imtlucca.it}

\renewcommand{\shortauthors}{Letterio Galletta}

\begin{abstract}
Many attacks against workloads running on multi-tenant systems are multi-step and history-dependent: sequences of innocuous operations whose malicious nature emerges only over an execution trace. 
Defending against them requires security policies that are \emph{stateful}, are enforced \emph{within the kernel}, and have a \emph{precise semantics}.  
Currently deployed proposals fail on at least one count: 
kernel's built-in syscall filtering and classical MAC frameworks are stateless, 
while current eBPF-based tools express their policies through ad hoc YAML rules that
cannot capture temporal relations among events, and whose semantics is defined only by the implementation.

We present \thetool, an in-kernel runtime-verification framework that satisfies the properties above.  
\thetool provides a policy language with a formal semantics that can express temporal relations among events. 
It also provides a type system that statically distinguishes events the kernel can \emph{control} from those it can only \emph{observe}.  
Every well-typed policy is compiled into a finite-state monitor proved
correct with respect to its semantics, and then into eBPF programs that run inside the kernel. 
We evaluate \thetool on seven case studies drawn from real-world attack patterns, and on a set of micro- and macro-benchmarks to show that the enforcement overhead remains compatible with production deployment.
\end{abstract}

\ifextended\else
\begin{CCSXML}
	<ccs2012>
	<concept>
	<concept_id>10002978.10003006.10003007</concept_id>
	<concept_desc>Security and privacy~Operating systems security</concept_desc>
	<concept_significance>500</concept_significance>
	</concept>
	<concept>
	<concept_id>10002978.10002986</concept_id>
	<concept_desc>Security and privacy~Formal methods and theory of security</concept_desc>
	<concept_significance>500</concept_significance>
	</concept>
	<concept>
	<concept_id>10011007.10011006.10011050.10011017</concept_id>
	<concept_desc>Software and its engineering~Domain specific languages</concept_desc>
	<concept_significance>500</concept_significance>
	</concept>
	</ccs2012>
\end{CCSXML}

\ccsdesc[500]{Security and privacy~Operating systems security}
\ccsdesc[500]{Security and privacy~Formal methods and theory of security}
\ccsdesc[500]{Software and its engineering~Domain specific languages}

\keywords{Runtime verification, formal methods for security, eBPF} 
\fi

\maketitle

\section{Introduction} 

\input{sections/introduction}

\section{Background}\label{sec:background}
\input{sections/background}

\section{\thetool in a Nutshell}\label{sec:overview}
\input{sections/overview}

\section{Threat Model}\label{sec:threat}
\input{sections/threat-model}

\section{The Policy Language}\label{sec:lang}
\input{sections/policy-lang}

\section{Monitor Synthesis}\label{sec:synthesis}
\input{sections/monitor-synthesis}

\section{Compilation to eBPF}\label{sec:compilation}
\input{sections/compilation-ebpf}

\section{Implementation and Evaluation}\label{sec:impl}
\input{sections/implementation}

\section{Related Work}\label{sec:related}
\input{sections/related-work}

\section{Conclusion}\label{sec:concl}
\input{sections/conclusion}

\begin{acks}
We thank the anonymous reviewers for their comments and suggestions, which helped improve the paper.
This work has been partially supported by RDS PTR 25-27 CYBER 2.1 ``Progetto Cybersecurity'' WP3 LA 3.21 -- CUP: D63C24001060001.
\end{acks}

\bibliographystyle{ACM-Reference-Format}
\balance
\bibliography{main-ccs2026}

\appendix 

\crefalias{section}{appendix}
\crefalias{subsection}{appendix}
\crefalias{subsubsection}{appendix}
\crefname{appendix}{Appendix}{Appendices}
\Crefname{appendix}{Appendix}{Appendices}

\section{Generative AI Usage}
During the preparation of this work, the author used Anthropic's Claude Code to scaffold the structure of the paper, revise its text, and develop the prototype implementation and the scripts needed for the experimental evaluations in \Cref{sec:impl} and for packaging the artifact.
After using this tool, the author reviewed and edited the content of the paper as needed, and manually inspected, validated and tested the generated code.
The author takes full responsibility for the final content of the publication.

\ifextended

\section{Formal Development}\label{sec:formal}
\input{sections/appendix-formal.tex}

\section{Additional Language Constructs}\label{app:extensions}
\input{sections/appendix-impl.tex}

\else

\fi

\end{document}
\endinput

%% file: macro.tex
\usepackage{xspace}

\ifextended
  
\else
  
\fi

\lstdefinelanguage{bpfence}{
	morekeywords={schema,hook,params,ctx,let,happened,when,policy,apply,to,action,
		deny,kill,alert,forbid,then,within,import,as,use,and,or,not,any},
	morecomment=[l]{\#},
	morestring=[b]",
	sensitive=true
}
\newcommand{\thetool}{\textsc{BPFence}\xspace}
\newcommand{\field}{\ensuremath{\mathsf{field}}}
\newcommand{\Vars}{\ensuremath{\mathcal{V}}}
\newcommand{\Vals}{\ensuremath{\mathsf{Val}}}
\newcommand{\Id}{\ensuremath{\mathit{Id}}}

\newcommand{\Fbound}{\ensuremath{F^{\leq\Delta}}}
\newcommand{\Fboundj}{\ensuremath{F^{\leq\Delta_j}}}
\newcommand{\sat}{\ensuremath{\mathsf{sat}}}
\newcommand{\bnd}{\ensuremath{\mathsf{bnd}}}
\newcommand{\use}{\ensuremath{\mathsf{use}}}
\newcommand{\rulename}[1]{\textsc{#1}}

\ifextended
  \newcommand{\extref}[1]{\Cref{#1}}
  \newcommand{\extrefprop}[1]{\Cref{#1}}
\else
  \newcommand{\extref}[1]{the extended version~\cite{bpfence-extended}}
  \newcommand{\extrefprop}[1]{see the extended version~\cite{bpfence-extended}}
\fi
\newcommand{\mlet}{\texttt{let}\xspace}
\newcommand{\happened}{\texttt{happened}\xspace}
\newcommand{\when}{\texttt{when}\xspace}
\newcommand{\forbid}{\texttt{forbid}\xspace}
\newcommand{\within}{\texttt{then within}\xspace}
\newcommand{\Alert}{\textsc{Alert}\xspace}
\newcommand{\Deny}{\textsc{Deny}\xspace}
\newcommand{\Kill}{\textsc{Kill}\xspace}

\newcommand{\B}{\mathbb{B}_3}              
\newcommand{\ok}{\mathsf{ok}}
\newcommand{\pending}{\mathsf{pending}}
\newcommand{\violated}{\mathsf{violated}}
\newcommand{\dsl}[1]{\llbracket #1 \rrbracket_{\mathsf{DSL}}}
\newcommand{\sem}[1]{\llbracket #1 \rrbracket}
\newcommand{\pred}[1]{\llbracket #1 \rrbracket^a}
\newcommand{\Allowed}{\mathsf{Allowed}}

\newcommand{\GenMon}{\mathsf{GenMon}}
\newcommand{\NF}{\mathsf{NF}}
\newcommand{\updH}{\mathsf{UpdH}}
\newcommand{\pol}{\ensuremath{\Pi}}
\newcommand{\mtrue}{\ensuremath{\mathit{true}}}
\newcommand{\mfalse}{\ensuremath{\mathit{false}}}
\newcommand{\mnoapp}{\ensuremath{\mathit{na}}}
\newcommand{\Mem}{\ensuremath{\mathit{Mem}}}
\newcommand{\bad}{\ensuremath{\mathit{bad}}}
\newcommand{\pastF}{\ensuremath{\Phi_<}}
\newcommand{\hm}{\ensuremath{\mathcal{H}}}

\newcommand{\jm}{\ensuremath{\mathcal{J}}}
\newcommand{\eval}[3]{\ensuremath{\mathcal{E}\llbracket #1 \rrbracket^{#2}(#3)}}

%% file: sections/introduction.tex

Cloud-native deployments rely on a single shared kernel to host hundreds of mutually distrustful workloads: containers of different tenants, 
microservices managed by an orchestrator, and short-lived sandboxes spawned by build pipelines.
In this setting, the kernel is a critical isolation boundary, and a single compromised process can become a foothold for container escape, lateral movement across tenants, or persistent access to host resources.  
Operators rely on \emph{security policies} to constrain which operations a workload may perform and keep that boundary effective.  
However, vulnerabilities such as Dirty~Pipe~\cite{cve20220847}
witness how attackers can exploit the system calls allowed by default container profiles to escalate from a
compromised application to host privileges by invoking
operations that look innocuous in isolation.
What distinguishes such attacks is that the malicious behavior is \emph{multi-step} and
\emph{history-dependent}: a write to a sensitive file followed by an
\texttt{exec} syscall, or a successful invocation of \texttt{ptrace} followed by a shell spawn.
Containing these attack chains or simply detecting them reliably requires security policies that talk about \emph{execution traces}, not just about individual syscalls or system events.

In this setting, effective security policies must satisfy \emph{three}
complementary requirements.
First, they must be \emph{stateful}:
they need to remember the relevant history of the monitored
entity, e.g., a process or a container, and react to sequences of events and time windows.
Second, the policy language must have a \emph{well-defined semantics}:
the operator must be able to say what a policy denotes independently of its implementation,
and the toolchain must be able to reject policies it cannot enforce, rather than silently accepting them and producing a monitor whose behavior diverges from the operator's intent.
Third, enforcement must happen \emph{inside the kernel}: a
userspace monitor sits on the wrong side of the syscall boundary,
pays a context switch on every relevant event, and is exposed to
time-of-check/time-of-use races.
In-kernel enforcement, by
contrast, mediates the operation synchronously, on the same
execution path that performs it, and observes exactly the
arguments the kernel will use.
Unfortunately, the mechanisms practitioners actually deploy fail to meet at least one of these three requirements.

On the one hand, system-call filters such as
\texttt{seccomp-bpf}~\cite{seccomp-bpf} run entirely in the kernel,
but their policies are stateless: a filter cannot remember that the same process
has just opened \texttt{/etc/shadow}.  
Most production seccomp profiles, therefore, are simple static allow lists.
Mandatory access control frameworks such as
SELinux~\cite{smalley2001selinux} and AppArmor~\cite{apparmor} are
more expressive, but their decisions are still stateless with
respect to recent history and their policies are notoriously hard to
author and audit.
On the other hand, modern eBPF-based runtime security tools such as Falco~\cite{falco2023} and Tetragon~\cite{tetragon2024}
leverage programmable in-kernel storage and can react to security-relevant events, providing the mechanisms for stateful monitoring.
However, they miss the second requirement:
their policies are expressed through ad hoc YAML rules whose semantics is defined only by the implementation: there is no
specification of \emph{what} a rule denotes, no static check that a
rule asks the kernel to enforce something the kernel can in fact
enforce.
In addition, multi-step and bounded-time properties must be encoded as collections of
correlated rules whose joint behavior is left to the operator's intuition.

These previous efforts show that the in-kernel machinery to enforce programmable policy is already available, but what is still missing is a
policy layer that meets \emph{all} requirements at
once: expressive enough to talk about execution traces and time windows, 
equipped with a semantics the toolchain and the operator can reason about, and enforced inside the kernel.

\paragraph{Our proposal}
This paper presents \thetool, an in-kernel runtime-verification framework that meets the three requirements above.  
\thetool offers a policy language with a formal trace semantics whose constructs capture recurring classes of stateful security properties: safety obligations, bounded-response liveness, historical predicates, sliding-window counts, and cross-event correlation.  
A type system distinguishes events the kernel can actually \emph{control} (deny or terminate before they take effect) from those it can only \emph{observe}, and statically rejects policies that ask the kernel to enforce something it cannot.  
Each well-typed policy is compiled to a monitor that is provably correct with respect to its  semantics, and then to eBPF programs that run inside the kernel.  
While related ideas have appeared separately in prior work,
\thetool is the first framework to combine them under the eBPF verifier's constraints:
the policy language is designed to trade expressiveness for verifier-compatibility, so that
the resulting monitors not only pass the eBPF verifier checks but also run with an acceptable overhead, as our evaluation confirms.

Concretely, this paper makes the following contributions:
\begin{itemize}[nosep,leftmargin=*]
  \item \emph{A policy language for in-kernel stateful security properties}
    (\Cref{sec:lang}) with a denotational trace semantics and a type system that statically separates controllable from observable events, ruling out policies the kernel cannot enforce.
  \item \emph{A monitor synthesis procedure} (\Cref{sec:synthesis}) that
    turns every well-typed policy into a finite-state monitor. 
    We prove that this monitor is correct with respect to the policy language semantics.
  \item \emph{A compilation schema}
    (\Cref{sec:compilation}) that turns synthesized monitors into
    eBPF programs accepted by the in-kernel verifier and enforced
    synchronously on controllable events.
  \item \emph{A PoC implementation}
    (\Cref{sec:impl}) including the \thetool{} compiler and runtime, and an essential standard library of LSM hook schemas and policy templates. 
    We evaluate the effectiveness of our implementation on seven case studies drawn from real-world attack patterns; compare the expressiveness of our policy language against some upstream Falco/Tetragon rules; and measure the monitoring overhead via micro- and macro-benchmarks.
\end{itemize}

Before presenting these contributions, \Cref{sec:background} reviews the necessary background.  
\Cref{sec:overview} outlines \thetool through an example, and
\Cref{sec:threat} states our threat model.  
After presenting the contributions, \Cref{sec:related} situates \thetool with respect to the relevant literature, and \Cref{sec:concl} draws some conclusions and discusses future work.
All the auxiliary notions required by our formal development, the proofs of our theorems, and further details on our implementation are in \extref{sec:formal,app:extensions}.
The artifact to reproduce our experiments is available online~\cite{repo}.

%% file: sections/background.tex

Extended BPF (eBPF)~\cite{ebpf2023} is a technology that allows
user-defined programs to run inside the Linux kernel in response to
system events, without modifying kernel source code or loading kernel
modules.
An eBPF program is compiled to a platform-independent bytecode, loaded
into the kernel, JIT-compiled to native code, and attached to a
\emph{kernel hook}.

Kernel hooks intercept and modify the kernel execution flow at
predefined points.
They determine \emph{when} an eBPF program executes and \emph{what} it
can observe or control.
Three hook families are relevant to security monitoring:
\begin{enumerate*}[label=(\roman*)]
\item \emph{LSM hooks} belong to the Linux Security Module
  framework~\cite{wright2002lsm} and execute the eBPF program at
  security decision points \emph{before} a sensitive operation
  completes: if the program returns an error code, the operation is
  blocked synchronously;
\item \emph{tracepoints}~\cite{trace-kernel} are static probe points
  scattered throughout the kernel that trigger a program \emph{after} an event has
  occurred, so they can only observe, not block, a syscall;
\item \emph{kprobes}~\cite{kprobe-kernel} are dynamic probes that can
  be attached to any kernel routine to collect debugging and
  performance information non-disruptively.
\end{enumerate*}
Only LSM hooks make an event \emph{controllable}, so we can deny the action or kill the offending process;
tracepoints and kprobes make it only \emph{observable}, so we can only raise an alert.
This distinction led us to design the type system of \Cref{sec:lang} that
statically rejects policies attempting to deny an observable-only event.

At load time, every eBPF program is statically verified by the
in-kernel \emph{BPF verifier}, which enforces memory safety and
guarantees termination: programs with uninitialized registers, unbounded
loops, out-of-bounds accesses, or excessive instruction complexity are
rejected.
These constraints ensure that eBPF programs cannot crash or hang the
kernel, but they restrict the class of expressible computations,
a limitation that affects the design of the policy language of \Cref{sec:lang} and the compilation strategy of \Cref{sec:compilation}.

eBPF programs communicate with one another and with user space through
\emph{BPF maps}: key--value data structures that reside in kernel
memory and persist across program invocations.
There are several map types; the ones most relevant to us
are hash maps, arrays, and ring buffers. 
Maps are shared, so two eBPF programs attached to different hooks can
interact using the same map.

%% file: sections/overview.tex

Consider an attacker who gains initial access to a remote host, e.g., via a web
vulnerability, and attempts an SSH lateral movement: read an SSH private key (\texttt{/home/\allowbreak user/\allowbreak .ssh/id\_rsa}), open a connection to port~22 on an internal host, and execute a command on the remote machine.
Each individual action is legitimate: many services read SSH keys, connect on port~22, or spawn shells or subprocesses. 
Only the full three-step sequence reveals the attack~\cite{mitreT1021:004}.

To detect such an attack sequence and enforce a response to block the offending operations,
we define the following \thetool policy:
\begin{lstlisting}[numbers=left, numberstyle=\tiny,stepnumber=1,numbersep=5pt]
import stdlib linux files        // read syscall
import stdlib linux network      // connect syscall 
import stdlib linux process      // exec syscall

let key_read = happened(
  read("/root/.ssh/*") or read("/home/*/.ssh/*")
)

let ssh_connected = happened(
  connect(_, 22)
) when key_read

policy lateral_movement {
  apply to pid  action kill
  forbid exec(_) when ssh_connected
}
\end{lstlisting}
The policy involves three events raised by invoking three syscalls: 
\texttt{read} (file open and read), \texttt{connect}, and \texttt{exec} (process execution).
Each event is bound to an LSM hook via an event schema definition provided by the standard
library (e.g., \texttt{read} maps to the LSM hook
\texttt{security\_\allowbreak file\_open}, which intercepts both open and
read operations).
The actual parameters of each syscall are automatically extracted at runtime and made available  as event fields in the policy (e.g., the file path for \texttt{read}, the destination port for \texttt{connect}).

The \texttt{let} declaration at line 5 introduces the \emph{history
predicate} \texttt{key\_read}.
This predicate becomes true when the
monitored process reads the SSH private keys of the admin or a user from their \texttt{.ssh} directory.
Once this happens, the predicate remains true in the future: the \texttt{happened} operator records whether a predicate has ever been satisfied.
The predicate \texttt{ssh\_connected} at line 9 is \emph{conditioned} on the previous one: 
it becomes true only when a connection to port~22 is attempted and \texttt{key\_read} is already true.
An isolated SSH connection or an isolated key read does not affect it.
Note that in the two definitions the syntax \texttt{*} matches zero or more characters within a single directory level, and \texttt{\_} denotes ``any address''.

The policy \texttt{lateral\_movement} is defined at line 13.
Its header specifies that the \emph{enforcement scope} is at the process level 
(\texttt{apply to pid}), namely one monitor instance per process. 
The body of the policy contains a single \texttt{forbid} clause guarded by \texttt{ssh\_connected}: when the predicate holds, any subsequent \texttt{exec} call is blocked and the process is terminated (\texttt{action} \texttt{kill} in the policy header).
For this to be possible in practice, the \texttt{exec} event in the \texttt{forbid} clause must be bound to an LSM hook that can deny the corresponding operation (controllable event).
Our type system statically ensures the enforceability of the policy: it assigns the violation type $C$ (controllable) to the event \texttt{exec} according to its hook and checks that the \texttt{kill} action is indeed compatible with the type.
Had  \texttt{exec} been bound to a tracepoint (observable only), the compiler would have rejected the policy due to a type mismatch.

For the lateral movement policy, the compiler generates three BPF programs, one per hook, that share their state through a common eBPF map.
A user-space daemon loads the generated BPF bytecode, sets up the shared state, attaches the programs to the corresponding hooks, and starts dispatching violation events that  the loaded programs generate.
The next sections formalize each phase and present the implementation.

Finally, note that current eBPF-based monitoring tools cannot enforce policies such as the lateral movement above.
\texttt{seccomp-bpf}~\cite{seccomp-bpf} and AppArmor~\cite{apparmor} are \emph{stateless}: they can block \emph{all} connections to port~22 or \emph{none}, but cannot condition the decision on a previously observed file read.
Falco~\cite{falco2023} can only \emph{alert} on individual events; multi-step detection requires external engines.
Tetragon~\cite{tetragon2024} can block operations, but its filtering conditions can only inspect fields of the \emph{current} event: there is no mechanism to maintain state across events and condition a decision on what happened earlier.
Additional comparisons are in \Cref{sec:impl}.

%% file: sections/threat-model.tex

We consider a Linux host running user-space processes, possibly inside containers, monitored by \thetool.
We assume the \thetool loader installs the compiled monitors \emph{before} process execution begins.
The trusted computing base (TCB) comprises the Linux kernel (including the eBPF verifier and runtime), the \thetool compiler, its loader, and the hardware.

The attacker can execute arbitrary code in a monitored process and can issue any system call supported by the execution environment.
Attack strategies include multi-step privilege escalation, lateral movement, and data exfiltration.
We do not address kernel exploits, verifier bugs, or side-channel attacks.
We assume that the attacker cannot compromise any component of the TCB, and that they do not actively try to evade known \thetool policies (e.g., by splitting a kill chain across  
distinct processes via \texttt{fork}) or attack the monitor itself.

Under this threat model, we show that the generated monitors enjoy \emph{soundness} (no false positives) and \emph{detection completeness} (no missed violations), assuming that events are attached to the relevant hooks.
For controllable events, enforcement is \emph{synchronous}: the offending system call is denied before the kernel operation completes.
We also assume that updates to the monitor states are atomic and that policy reloads never expose a partially configured monitor, so guarantees hold throughout the lifetime of each monitored process; \Cref{sec:maps,sec:impl:tool} discuss how the implementation discharges these assumptions.

%% file: sections/policy-lang.tex
\thetool\ policies are written in a declarative domain-specific language (DSL) whose formal semantics is defined over a data-aware temporal logic. 
The logic operates on traces of events that abstractly represent the system calls invoked by a running program, and combines three key features: temporal operators from timed linear temporal logic with past~\cite{bauer2011}; a three-valued semantics over the domain $\mathbb{B}_3 = \{\ok, \pending, \violated\}$ to express monitor decisions; 
and data-aware event predicates that bind and constrain field values across events via correlation variables.
In addition, a type system statically classifies each policy by how violations arise at runtime and ensures that enforcement actions are associated only with policies whose violations can be effectively intercepted.

We present the trace model (\Cref{sec:events}), the temporal logic (\Cref{sec:logic}), the DSL (\Cref{sec:dsl}), and its type system (\Cref{sec:types}).

\subsection{Event Model}
\label{sec:events}
We abstractly represent a program execution as a trace of events generated by system call invocations.
According to the nature of events in eBPF, we distinguish between two types: events that can be intercepted via LSM hooks, enabling pre-event enforcement, and events that are only observable post hoc via tracepoints or kprobes.

Formally, let $\Sigma = \Sigma^C \uplus \Sigma^O$ be a finite set of event types,
partitioned into \emph{controllable} events $\Sigma^C$ and \emph{observable} events $\Sigma^O$.
An \emph{event} $e$ is a tuple $\langle E, \mathit{fields}, t\rangle$
where $E\in\Sigma$, $\mathit{fields}$ is a finite typed
parameter map, and $t$ is a monotone timestamp.
An \emph{execution trace} is a finite, time-ordered sequence of events
$\sigma = e_1 e_2 \cdots e_n$ with $e_i.t < e_{i+1}.t$ where we use the dot notation to access the fields of $e$.
In the following, we denote by $\sigma_k$ a trace of length $k$ and by $e_k$ its last event.

\subsection{Data-aware Temporal Logic}
\label{sec:logic}
Our data-aware temporal logic is defined at two levels: event predicates and temporal formulas, both of which use a three-valued semantics with an environment to handle variable binding.

\paragraph{Event predicates}
Intuitively, an event predicate is a propositional formula evaluated on the events of a trace. 
Since events carry data values, we introduce the notions of \emph{variables} and \emph{guards} to refer to these values and express constraints over them.

Formally, let $\Vars$ be a set of variables (ranged over by $X, Y, \ldots$).  
An \emph{event predicate} is defined as:
\begin{align*}
P \;::=\; & \alpha \mid \lnot P \mid P\land P \mid P\lor P \qquad \qquad \quad
\alpha \;::=\; (E, g) \\
g \;::=\; & 
\mathsf{true} \mid \field\sim\ell \mid \field={?X} \mid \field=X \mid g\land g	
\end{align*}
where an \emph{event atom} $\alpha = (E,g)$ pairs an event type $E$
with a guard $g$.
A guard $g$ is a conjunction of constraints over the fields of an event, where $\ell \in \Vals$ is a literal value, 
${\sim}\in\{=,{\neq},{<},{\leq},{>},{\geq}\}$ is a comparison operator, and $\field$ denotes the name of an event field. 
In addition, a guard introduces two variable constructs with distinct roles: 
$\mathsf{field}={?X}$ is a \emph{binding}, which captures the value of a field into a fresh variable $X$; whereas $\mathsf{field}=X$ is a \emph{constraint}, requiring the field to be equal to the previously bound variable $X$.
Note that  
the correspondence between kernel hooks and the events referenced in our predicates $P$ is assumed to be given by $\Sigma$. 

We assume hereafter that a well-formedness condition holds: each variable must be bound before use, and no bindings may appear under $\lnot$ or $\lor$ (see \extref{sec:well:form}).
Given an event $e$ and an \emph{environment} $\rho\colon\Vars \to \Vals$, we interpret an event predicate $P$ on a three-valued boolean domain $K_3 = \{\mtrue, \mfalse, \mnoapp \}$ using a strong Kleene semantics~\cite{Kleene52}.
The idea is that \emph{inapplicable} value $\mnoapp$ captures situations in which the event $e$ is not among those $P$ is talking about. 
Formally, the semantics of event predicate $P$ is given by $\pred{P}(e,\rho) = (v, \rho')$ where $v \in K_3$ and $\rho'$ is an updated environment with new bindings introduced by the guards of $P$.
The complete definition of $\pred{\cdot}$ is in \extref{sec:pred:sem}. 
Hereafter, we write $\pred{P}(\sigma_k,\rho)$ to denote the application of $\pred{\cdot}$ to the last event $e_k$ of a trace $\sigma_k$.

\begin{example}[Event predicates]\label{ex:pred}
Consider the event predicate at line 6 of the policy of \Cref{sec:overview}, written as
\begin{align*}
	P_k \;=\;& (\mathtt{read},\, \mathit{path} \sim \texttt{/root/.ssh/*}) \;\lor\\
	& (\mathtt{read},\, \mathit{path} \sim \texttt{/home/*/.ssh/*})
\end{align*}
where $\sim$ denotes glob matching, and the following three events:
\begin{align*}
	e_1 &= \langle\mathtt{read},\ \{\mathit{path}\mapsto\texttt{/home/alice/.ssh/id\_rsa}\},\ t_1\rangle\\
	e_2 &= \langle\mathtt{connect},\ \{\mathit{dest}\mapsto\texttt{10.0.0.7},\ \mathit{port}\mapsto 22\},\ t_2\rangle\\
	e_3 &= \langle\mathtt{read},\ \{\mathit{path}\mapsto\texttt{/etc/hosts}\},\ t_3\rangle.
\end{align*}
We have $\pred{P_k}(e_1, \emptyset) = (\mtrue, \emptyset)$ because the event arguments match those of the second disjunct. 
In contrast, $\pred{P_k}(e_2, \emptyset) = (\mnoapp, \emptyset)$: the
predicate $P_k$ does not apply to $e_2$ because it simply does not concern
\texttt{connect} events.  
Finally, $\pred{P_k}(e_3, \emptyset) = (\mfalse, \emptyset)$: the event type of $e_3$ matches, but the guard fails on both disjuncts.  Telling $\mfalse$ apart from $\mnoapp$ allows the monitor to evaluate predicates over unrelated event types
without spurious violations.
\end{example}

\paragraph{Temporal formulas}
The syntax of temporal formulas is:
\[
\phi \;::=\; P \mid \neg \phi \mid \phi\land\phi \mid {<}\phi \mid G\,\phi \mid P_1\,\Fbound\,P_2
\]
where $P$ is an event predicate, ${<}$ is the \emph{once} operator, $G$ is the \emph{always} operator, and
$\Fbound$ is a \emph{bounded eventually} operator where $\Delta \in \mathbb{N}$ is a finite temporal bound: $P_1\,\Fbound\,P_2$
requires that every time $P_1$ holds, $P_2$ holds within $\Delta$ time units.
We assume that a well-formedness condition holds to ensure each variable is bound before use and that no binding appears under $\neg$ or ${<}$ (see \extref{sec:well:form}).

The temporal formulas are evaluated over the semantic domain $\B = \{\ok, \pending, \violated\}$.
The value $\ok$ indicates that the monitor has not detected any violation so far, while $\violated$ indicates that a bad prefix has been detected.
The value $\pending$ denotes that the truth of the formula cannot yet be decided due to open obligations: based on the information currently available, the monitor requires additional future events to reach a verdict.

Formally, given a trace $\sigma_k$ and an environment $\rho$, the semantics of a formula $\phi$ is given by $\sem{\phi}(\sigma_k,\rho) = (v, \rho')$ where $v \in \B$ and $\rho'$ is an updated environment.
The evaluation function $\sem{\cdot}$ is inductively defined over the syntax of formulas. Below, we illustrate the semantics of the once and always operators through an example and report the definition of $\sem{\cdot}$ for the case of bounded response; the other cases are in \extref{sec:app:log:sem}.

\begin{example}[Once and always]\label{ex:once}
Consider the predicate $P_k$ of
\Cref{ex:pred} and the trace $\sigma_3 = e_1e_2e_3$ built from the events taken from that example. Also, let $P_e$ be the single-atom predicate $(\mathtt{exec},\, \mathsf{true})$.
We have that $\sem{{<}P_k}(\sigma_3,\emptyset) = \ok$ since $e_1$ reads a
key; in contrast, $\sem{{<}P_e}(\sigma_3,\emptyset) = \pending$ because no
$\mathtt{exec}$ has occurred yet and a future one could still make it $\ok$.
Finally, $\sem{G(\lnot P_k)}(\sigma_3,\emptyset) = \violated$, since $P_k$
holds on the prefix $e_1$, making $\lnot P_k$ $\violated$ there, and no
extension can retract it.
Both temporal operators look at all the prefixes of $\sigma_3$, but do so differently:
the once operator asks whether some prefix succeeded, whereas the always operator asks whether some prefix failed.
\end{example}

For \emph{bounded response} formulas $P_1 \Fbound P_2$, the $\pending$ value represents an obligation that is still unresolved: if the obligation is not fulfilled within   
the specified time bound, a violation occurs. 
We define the \emph{active instance set} generated every time $P_1$ holds:
\begin{align*}
\mathcal{I}(\sigma_k,\rho) & \;\stackrel{\mathit{def}}{=}\; \\
& \bigl\{\,(i,\rho_i,d_i) \;\big|\;
i<k,\;
\pred{P_1}(\sigma_i,\rho)=(\mtrue,\rho_i),\;
d_i=e_i.t+\Delta
\,\bigr\}
\end{align*}
where every instance $(i,\rho_i,d_i)$ records the index $i$ of the event where $P_1$ was true, the produced environment $\rho_i$, and the absolute deadline $d_i$.
An instance $(i,\rho_i,d_i)$ is satisfied on $\sigma_k$ if:
\begin{align*}
\sat(i, \rho_i, d_i, \sigma_k) &
\;\stackrel{\mathit{def}}{\iff}\; \\*
& \exists\, j \in (i, k].\;
e_j.t \leq d_i
\;\land\;
\pi_1\!\bigl(
\pred{P_2}(\sigma_j,\, \rho_i)
\bigr) = \mtrue
\end{align*}
Then, the semantics of the formula is:
\[
\sem{P_1\, \Fbound\, P_2}(\sigma_k,\, \rho)
\;=\; (v,\, \rho)
\]
where
\[
v =
\begin{cases}
	\violated & \exists(i,\rho_i,d_i)\in\mathcal{I}(\sigma_k, \rho).\;
	e_k.t > d_i \land \neg\sat(i,\rho_i,d_i,\sigma_k) \\
	\pending  & \exists(i,\rho_i,d_i)\in\mathcal{I}(\sigma_k, \rho).\;
	e_k.t\leq d_i \land \neg\sat(i,\rho_i,d_i,\sigma_k) \\
	\ok       & \text{otherwise}
\end{cases}
\]
The first case captures the situation in which the deadline has expired, and the expected event has never occurred; from that point on, the violation is definitive and cannot be remedied by any future events.
The second case captures the situation in which there is still time before the deadline, and thus the monitor can continue waiting. The third case captures the situation in which the formula becomes $\ok$ within the prescribed time bound.
Note that the $\pending$ value arises only when an obligation is active, but its deadline
has not yet expired, and that it is always transient since all deadlines are finite.

\begin{example}[Bounded response and bindings]\label{ex:bind}
Consider the following formula $\phi$, which requires every file to be
closed within five seconds of being opened:
\[
	\phi \;=\; G\bigl( (\mathtt{open},\, \mathit{ino}={?X})
	\;F^{\leq 5\mathrm{s}}\;
	(\mathtt{close},\, \mathit{ino}=X) \bigr)
\]
The $\mathtt{open}$ atom binds the i-node field to the variable
$X$, and the $\mathtt{close}$ atom then constrains its own i-node to match
$X$. In this way, the two events are correlated by the value of the i-node.
Upon the event $e_4 = \langle\mathtt{open},\{\mathit{ino}\mapsto 8123\},t_4\rangle$ the
$\mathtt{open}$ atom yields $(\mtrue,\{X\mapsto 8123\})$, so the instance
$(4,\{X\mapsto 8123\},t_4+5\mathrm{s})$ enters $\mathcal{I}$, and the
$\mathtt{close}$ atom is evaluated under the environment it records.  
Only a close on i-node $8123$ satisfies the instance, making $\sat$ hold and the verdict $\ok$. 
Any event other than a close of i-node $8123$ leaves the verdict $\pending$;
once the deadline expires, it turns $\violated$.
Note that the correlation between related $\mathtt{open}$ and $\mathtt{close}$ events would not be possible without $X$.
\end{example}

\subsection{The \thetool DSL}
\label{sec:dsl}

The \thetool DSL is a user-oriented declarative language for expressing security policies. 
It provides structured constructs for common security patterns, including \emph{safety}, \emph{bounded response}, \emph{history-based}, and \emph{taint} policies.
The DSL is intentionally designed to ensure online monitorability and runtime enforceability and to effectively target eBPF.
The DSL has no independent semantics, but each policy denotes a formula of the temporal logic defined above: the DSL is thus a monitorable fragment of the logic.
Note that since $\sem{\pol}$ is a function of a single trace, \thetool can express and monitor only trace properties: hyperproperties such as non-interference are outside the scope of this paper.

\paragraph{Syntax}
Given a finite set of event types $\Sigma$, the abstract syntax of a \thetool program is:
\[
\pol \;::=\; \langle d^*\; p, A \rangle 
\]
where $d^*$ is a (possibly empty) sequence of historical-predicate definitions, $p$ is the policy body, and $A \in \{\Alert, \Deny, \Kill\}$ is the action to take when a violation occurs:
\begin{align*}
	d \;::=\; & \mlet\ H := \happened(P) [\when\ H'] \\
	p \;::=\; & \forbid\ P\ [\when\ H] \mid\ \when\ P_1\ \within\ \Delta\ P_2 \mid\ p_1 \land p_2
\end{align*}
where $P, P_1, P_2$ are event predicates introduced above, $H, H'\in\mathit{Id}$ are historical-predicate identifiers, and $\Delta \in \mathbb{N}$ is a finite time bound.

A historical-predicate definition $d$ introduces a predicate that captures a property depending on the past history of events. A declaration introduces a new predicate named $H$ and may include an optional \when\ clause, conditioning the definition of $H$ on another historical predicate $H'$.
Historical predicates can be referenced in the policy body $p$.
A $\forbid$ clause specifies an event predicate $P$ that must never hold; it may optionally include a $\when$ clause, restricting the prohibition to situations in which the historical predicate $H$ holds.
A $\when\ \dots\ \within$ clause expresses a bounded response property: whenever $P_1$ holds, $P_2$ must hold within the time bound $\Delta$.
Finally, policy clauses can be combined via conjunction.

\paragraph{Semantics}\label{sec:denot}
The semantics is given by first translating the policy into a temporal formula and then interpreting this formula on a trace $\sigma_k$.
More precisely, we proceed as follows.
The historical-predicate declarations $d^*$ evaluate to a definition environment
$D$ mapping each identifier $H$ to a logical formula. 
We have $D(H) = {<}P$ for $\mlet\ H:=\happened(P)$, and
$D(H) = {<}(D(H')\land P)$ for the \when\ clause.
Acyclicity of $d^*$ (enforced statically by the type system) ensures $D$ is well-defined.

Then, given a definition environment $D$ and a policy body $p$, we define the \emph{function}
$\dsl{p}^D$ that maps policies to temporal formulas, inductively on the syntax of $p$:
\begin{align*}
	\dsl{\texttt{forbid}\ P}^D &= G(\lnot P) \\
	\dsl{\texttt{forbid}\ P\ \texttt{when}\ H}^D
	&= G\bigl(\lnot(D(H)\land P)\bigr) \\
	\dsl{\texttt{when}\ P_1\ \texttt{then within}\ \Delta\ P_2}^D
	&= G\bigl(P_1\,F^{\leq\Delta}\,P_2\bigr) \\
	\dsl{p_1 \land p_2}^D &= \dsl{p_1}^D\land\dsl{p_2}^D
\end{align*}
A $\forbid$ clause is translated into a $G$ formula, while a $\when\ \dots\ \mathtt{then}$ $\mathtt{within}$ clause is translated into an always bounded-response formula.
From the definition above, it follows directly that every policy is translated into a conjunction of $G$ formulas.

\begin{example}[From policy to formula]\label{ex:denot}
Consider the \texttt{lateral\_\allowbreak movement} policy of
\Cref{sec:overview}, whose body $p$ is the clause
\forbid $P_e$ \when $\mathtt{ssh\_connected}$, where $P_e$ is as in \Cref{ex:once}.  
The two let declarations yield an environment $D$ with the bindings
\begin{align*}
	D(\mathtt{key\_read}) &= {<}P_k\\
	D(\mathtt{ssh\_connected}) &= {<}({<}P_k \land P_c)
\end{align*}
where $P_c = (\mathtt{connect},\, \mathit{port}=22)$ and $P_k$ is as in
\Cref{ex:pred}, making it evident that the second declaration is conditioned on the first.  By the
$\forbid$--$\when$ case of $\dsl{\cdot}^D$ we obtain
\begin{align*}
	\dsl{p}^D &= G\bigl(\lnot(D(\mathtt{ssh\_connected}) \land P_e)\bigr)\\
	&= G\bigl(\lnot({<}({<}P_k \land P_c) \land P_e)\bigr)
\end{align*}
as the denotation of our policy. Note that the nesting of ${<}$ ensures that the inner ${<}P_k$ already holds when the connection
occurs, and that both hold when the $\mathtt{exec}$ is attempted.  
\end{example}

Finally, given a \thetool program $\pol = \langle d^*\;p, A\rangle$ and a trace $\sigma_k$, the semantics of $\pol$ is defined as:
\[
	\sem{\pol}(\sigma_k)
\;\triangleq\;
\pi_1\!\bigl(
\sem{\phi}(\sigma_k,\;\emptyset)
\bigr)
\;\in\;\B 
\qquad 
\text{where } \phi=\dsl{p}^{D(d^*)}. 
\]
We first translate $\pol$ into a temporal formula $\phi$, evaluate $\phi$ on the trace $\sigma_k$ and an empty environment $\emptyset$, and take the resulting value in $\B$: $\ok$ means no
clause is violated and no obligation is pending; $\pending$ means at
least one $F^{\leq\Delta}$ obligation is active but its deadline has not
expired; $\violated$ means a safety clause is violated
\emph{now}, or a deadline has elapsed without the required event.
Crucially, $\violated$ is irrevocable: no future extension of the trace
can undo it.
When a violation occurs, the monitor performs action $A$.

\paragraph{Formal results}
The following two results characterize the monitorability of our language.
The first establishes that once a DSL program detects a violation, no future event can undo the decision:
\begin{restatable}[Violation Persistence]{theorem}{persi}\label{thm:persistence}
	For every DSL program $\pol$ and finite traces $\sigma_k$, $\sigma_j$ such that $\sigma_k$ is a prefix of $\sigma_j$,
	if $\sem{\pol}(\sigma_k) = \violated$ then $\sem{\pol}(\sigma_j) = \violated$.
\end{restatable}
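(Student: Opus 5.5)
We argue by structural induction on the policy body $p$, after reducing the statement to the temporal formula $\phi=\dsl{p}^{D(d^*)}$. By definition, $\sem{\pol}(\sigma_k)=\pi_1(\sem{\phi}(\sigma_k,\emptyset))$, and $\phi$ is always a conjunction of $G$-formulas, of the form $G(\lnot P)$, $G(\lnot(D(H)\land P))$ or $G(P_1\,\Fbound\,P_2)$. It therefore suffices to prove persistence for these three shapes and to show that conjunction preserves it.

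For conjunction we rely on the three-valued semantics of $\land$ in the appendix, which we expect to be the pessimistic meet on the order $\violated < \pending < \ok$. Under that reading, $\sem{\phi_1\land\phi_2}=\violated$ holds iff some conjunct evaluates to $\violated$. If conjunct $\phi_i$ is violated on $\sigma_k$, the induction hypothesis gives that it is violated on $\sigma_j$, and hence so is the conjunction.

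The $G$ cases rest on the informal reading of $G$ in \Cref{ex:once}: $G\psi$ is $\violated$ on $\sigma_k$ iff some prefix $\sigma_i$ with $i\le k$ has $\pi_1(\sem{\psi}(\sigma_i,\rho))=\violated$. Since every prefix of $\sigma_k$ is also a prefix of $\sigma_j$, the same witness prefix makes $G\psi$ violated on $\sigma_j$. What has to be checked is that the evaluation of $\psi$ on that fixed prefix does not depend on events after it. For $\psi=\lnot P$ this is immediate, because $\pred{P}$ reads only the last event of the prefix. For $\psi=\lnot(D(H)\land P)$ it follows once we know that ${<}$ evaluated on $\sigma_i$ inspects only prefixes of $\sigma_i$. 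We prove that as a small lemma, \emph{past-locality}: for any past-only formula $\psi$ (built from predicates, $\lnot$, $\land$ and ${<}$), $\sem{\psi}(\sigma_i,\rho)$ depends only on $e_1\cdots e_i$. The proof is a straightforward induction on $\psi$.

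For $G(P_1\,\Fbound\,P_2)$, we unfold to the bounded-response clause. Suppose the verdict is $\violated$ on $\sigma_k$. Then there is an instance $(i,\rho_i,d_i)\in\mathcal{I}(\sigma_k,\rho)$ with $e_k.t>d_i$ and $\neg\sat(i,\rho_i,d_i,\sigma_k)$. We check three things on $\sigma_j$:
\begin{itemize}
\item The instance still belongs to $\mathcal{I}(\sigma_j,\rho)$, since membership depends only on $e_i$ and we have $i<k\le j$.
\item Timestamps are strictly increasing, so $e_j.t\ge e_k.t>d_i$.
\item $\sat$ is still false on $\sigma_j$. Any candidate witness $j'\in(k,j]$ satisfies $e_{j'}.t>e_k.t>d_i$, so it fails the deadline condition. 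Candidates in $(i,k]$ were already excluded on $\sigma_k$.
\end{itemize}
Hence the first case of the definition fires again, and the verdict on $\sigma_j$ is $\violated$. If instead $G$ wraps this formula, and the violation was witnessed at an earlier prefix, we apply the same argument to that prefix, combined with the prefix-monotonicity of $G$.

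The main obstacle we expect is the interaction between $G$, environments, and whatever precise definition of $G$ appears in \extref{sec:app:log:sem}. If $G$ is defined through an aggregation over prefixes with environment threading, rather than by the clean existential above, we must first prove a lemma recovering that characterization, namely that $G\psi$ is $\violated$ iff some prefix violates $\psi$. We also need a lemma that the environments threaded to each prefix are themselves prefix-determined. We would establish both lemmas first, by induction on $k$, and then run the case analysis above.
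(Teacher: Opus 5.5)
Your argument is correct and uses the same key fact as the paper's proof. A violation of $G\psi$ on $\sigma_k$ is witnessed by some prefix $\sigma_h$ with $h\le k$, and that same $h$ witnesses the violation on any extension $\sigma_j$. The paper collapses $\dsl{p}^D$ into a single $G(\bigwedge_\ell\psi_\ell)$ (implicitly via \Cref{prop:g-conj}) and wraps the step in a superfluous induction on $j-k$. You instead split over the conjuncts, which only needs the fact that $G$-formulas return $\rho$ unchanged, so that the second conjunct is evaluated under $\emptyset$ on both traces; you should state that point explicitly.

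Your past-locality lemma, the separate persistence analysis of $P_1\,\Fbound\,P_2$, and the anticipated environment lemmas are correct but unnecessary. The appendix defines $G$ by exactly that existential with $\rho$ fixed across prefixes, $\sem{\psi}(\sigma_h,\rho)$ is by definition a function of the finite trace $\sigma_h$ alone, and the existential already covers the bounded-response conjunct.
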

The second result ensures that each violating trace has a first violating prefix:
\begin{restatable}[Existence of the first violating prefix]{proposition}{vioprefix}\label{prop:fvp}
	Let $\pol$ be a DSL program and $\sigma_k$ be a finite trace such
	that $\sem{\pol}(\sigma_k) = \violated$. There exists a unique
	prefix $\sigma_{i^*}$ of $\sigma_k$ such that
	\[
	i^* = \min\{\, i \leq k \mid \sem{\pol}(\sigma_i) =
	\violated \,\}.
	\]
	We call $\sigma_{i^*}$ the \emph{first violating prefix} of $\sigma_k$.
\end{restatable}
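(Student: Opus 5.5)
The statement is essentially an instance of the well-ordering of $\mathbb{N}$, so the plan is short. Fix $\pol$ and $\sigma_k$ with $\sem{\pol}(\sigma_k)=\violated$, and consider the set
\[
S \;=\; \{\, i \leq k \mid \sem{\pol}(\sigma_i) = \violated \,\},
\]
where $\sigma_i$ denotes the prefix of $\sigma_k$ of length $i$.

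The first step is to check that $S$ is well defined. For each $i\le k$ the prefix $\sigma_i$ is itself a finite, time-ordered trace, because it inherits strict monotonicity of timestamps from $\sigma_k$. Since $\sem{\pol}$ is a total function from finite traces to $\B$, membership of $i$ in $S$ is determined. One small point is whether the empty prefix is allowed. I would restrict to $1\le i\le k$, since the semantics reads the last event $e_k$. If the empty trace is admitted, it never yields $\violated$: every instance set and every prefix quantifier is empty, so each $G$-clause evaluates to $\ok$. Either way the index $0$ causes no trouble.

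The second step is nonemptiness: $k\in S$ by hypothesis, because $\sigma_k$ is a prefix of itself. So $S$ is a nonempty subset of the finite set $\{1,\dots,k\}$ and has a minimum $i^*$. Uniqueness is then immediate, since a set of naturals has only one minimum. In addition, prefixes of $\sigma_k$ are determined by their length, so $\sigma_{i^*}$ is the unique prefix with the required property.

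I would close by invoking \Cref{thm:persistence}, which gives the meaningful reading of ``first'': $\sem{\pol}(\sigma_i)=\violated$ for every $i$ with $i^*\le i\le k$, and $\sem{\pol}(\sigma_i)\neq\violated$ for every $i<i^*$. Thus $S$ is exactly the interval $[i^*,k]$. No step is a real obstacle. The only care needed is the bookkeeping about which prefixes count as traces, in particular the empty one.
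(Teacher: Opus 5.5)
Your proposal is correct and matches the paper's proof. Both define $S=\{i\le k \mid \sem{\pol}(\sigma_i)=\violated\}$, note that $k\in S$ and $S\subseteq\{1,\dots,k\}$, and conclude that $i^*=\min S$ exists and is unique. Your remarks on the empty prefix and your use of \Cref{thm:persistence} to show that $S$ is the interval $[i^*,k]$ are harmless extras that the statement does not need.
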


\subsection{Type System}\label{sec:types}

The semantics defines when an execution trace violates a policy but does not guarantee that a policy can be effectively enforced at runtime. 
In particular, a semantically valid policy may still be: 
\emph{not enforceable}, because a violation is triggered by an observable event that can only be detected after its occurrence; or
\emph{not directly enforceable}, because the violation arises from a deadline
expiration rather than from a specific controllable event.
To address these issues, we introduce a type system that classifies DSL programs according to how violations may arise at runtime: due to a controllable event (type $C$), an observable event (type $O$), or by a deadline expiration (type $T$). 
The type system statically verifies that the action $A$ associated with the program is compatible with its violation type, e.g., \Deny is permitted only for policies of type $C$.
Since types describe modes of violation and enforcement capabilities, we refer to them as \emph{violation types}.
In addition to the typing rules presented here, our implementation also verifies standard well-formedness: every identifier must be declared before use, historical-predicate definitions must be acyclic, and values used in event-predicate guards must match the types declared in the corresponding event schemas.

\paragraph{Types and type environment}
The set of violation types consists of three elements
$
\tau \in \mathcal{T} = \{C,\, O,\, T\}
$
where $C$ means that the violation is triggered by a controllable event; $O$ that the violation is caused by an observable event; and $T$ that the violation arises from a deadline expiration or the absence of an event.
The types are ordered to form a finite lattice  $C \sqsubseteq O \sqsubseteq T$:
$C \sqsubseteq O$ because a controllable event is a special case of an observable one, and $O \sqsubseteq T$ because a
violation anchored to a specific event is strictly more enforceable than one determined by a timeout. We denote by $\sqcup$ the join operator.

The classification of events as controllable or observable depends on the deployment environment.
We capture this dependency through a typing environment $\Gamma$, which assigns to each event type $E \in \Sigma$ its corresponding violation type:
$\Gamma(E) = C$ if the event $E$ can be intercepted via an LSM hook, and $\Gamma(E) = O$ if $E$ is obtained through a tracepoint or kprobe hook.

Since policy bodies may refer to historical predicates $H$, 
we introduce an environment $\Theta\colon \Id \to \mathcal{T}$ that records the violation type of each identifier. The environment $\Theta$ is constructed inductively from the historical-predicate definitions $d^*$ occurring in the program, as defined by the rules in \extref{sec:app:typing}.

\begin{figure*}[t]
\begin{gather*}
\inference[\rulename{PEvent}]{ }{\Gamma \vdash (E, g) : \Gamma(E)} 
\qquad 
\inference[\rulename{PAnd}]{\Gamma \vdash P_1 : \tau_1 \quad \Gamma \vdash P_2 : \tau_2}{\Gamma \vdash P_1 \land P_2 : \tau_1 \sqcup \tau_2}
\qquad
\inference[\rulename{Comp}]{\Gamma;\Theta \vdash p_1 : \tau_1 \quad \Gamma;\Theta \vdash p_2 : \tau_2}{\Gamma;\Theta \vdash p_1 \land p_2 : \tau_1 \sqcup \tau_2}
\qquad
\inference[\rulename{Forbid}]{ \Gamma \vdash P : \tau }{ \Gamma;\Theta \vdash \forbid\ P : \tau }
\\[1ex]
\inference[\rulename{ForbidWhen}]{\Gamma \vdash P : \tau \quad \tau_H = \Theta(H)}{\Gamma;\Theta \vdash \forbid\ P\ \when\ H : \tau \sqcup \tau_H}
\qquad
\inference[\rulename{Response}]{ \Gamma \vdash P_1 : \tau_1 \quad \Gamma \vdash P_2 : \tau_2}{\Gamma;\Theta \vdash \when\ P_1\ \within\ \Delta\ P_2 : T}
\qquad
\inference[\rulename{Prog}]{\Gamma;\emptyset \vdash d^* : \Theta \quad \Gamma; \Theta \vdash p : \tau \\ \Allowed(\tau, A)}{\Gamma \vdash \langle d^*\, p,\; A \rangle : \checkmark}		
\end{gather*}
\caption{Relevant typing rules for \thetool DSL.}\label{fig:typing}	
\Description{Seven inference rules, in two rows, assign violation types to event predicates, under the environment Gamma, and to policy bodies, under Gamma and Theta. Top row: PEvent gives an event atom, made of an event type E and a guard g, the type that Gamma assigns to E; PAnd and Comp give a conjunction, of two event predicates and of two policy bodies respectively, the join of the two types; Forbid gives a forbid clause the type of its event predicate P. Bottom row: ForbidWhen gives a forbid clause guarded by a historical predicate H the join of the type of P and of the type that Theta records for H; Response gives a when-within clause the type T, whatever its two predicates are typed; Prog concludes that a whole program is well typed if the declarations derive Theta, the body has type tau, and the predicate Allowed holds of tau and of the action A.}
\end{figure*}

\paragraph{Typing rules}
The most relevant typing rules are in \Cref{fig:typing} (the full set is in \extref{sec:app:typing}).
The rules for event predicates, characterized by the judgment $\Gamma \vdash P : \tau$, assign a violation type $\tau$ to an event predicate $P$ under the environment $\Gamma$.
The type for an event $\alpha=(E,g)$ is taken from the environment $\Gamma$ (rule \rulename{PEvent}). 
Rule \rulename{PAnd} joins the types of the two sub-terms to ensure that a violation can occur in the worst possible way.

The rules for policies have typing judgments of the form $\Gamma; \Theta \vdash p : \tau$.
Rule \rulename{Comp} takes the join of the types of the two sub-terms, reflecting that a violation may arise from either of the conjuncts.
In rule \rulename{Forbid}, the resulting type is that of the event predicate $P$, reflecting the fact that \texttt{forbid} denotes a safety property whose violations are always anchored to a specific event. 
Rule \rulename{ForbidWhen} is similar, but the type is $\tau_H \sqcup \tau_P$, where $\tau_P$ is the type of $P$ and $\tau_H$ is the type of the historical predicate $H$.
In the previous two rules, the join means that a violation occurs when both $D(H)$ and $P$ hold simultaneously.
Since both event predicates and historical predicates are typed in $\{C, O\}$, in either case the resulting type is at most $O$, never $T$: no timeout can cause a violation.
In rule \rulename{Response}, the premises $\Gamma \vdash P_1 : \tau_1$ and $\Gamma \vdash P_2 : \tau_2$ ensure that $P_1$ and $P_2$ are well-typed event predicates, but their types do not affect the resulting type $T$, which is determined solely by the temporal structure $\Fbound$: 
any policy with a bounded-response obligation may produce a timeout violation, regardless of whether the involved events are controllable or observable.

Rule \rulename{Prog} derives the environment $\Theta$ from the declarations $d^*$, obtains the violation type $\tau$ of the policy body $p$, and checks that the action $A$ is compatible with $\tau$ via the predicate $\Allowed$:
\[
\Allowed(\tau, A) \;\triangleq\;
\begin{cases}
	A \in \{\Alert, \Deny, \Kill\}
	& \text{if } \tau = C, \\
	A = \Alert
	& \text{if } \tau \in \{O, T\}.
\end{cases}
\]
The actions \Deny and \Kill are permitted only when $\tau = C$, 
while the action \Alert is always permitted, since the monitor can always raise a warning.
Note that the type $\tau$ of a policy depends on the hooks available in the deployment environment: the same event may have type $C$ where an LSM hook is available and type $O$          
otherwise, making the policy \Deny-able in the first case and only \Alert-able in the second.

\begin{example}[Typing a policy]\label{ex:typing}
Consider again the \texttt{lateral\_\allowbreak movement} policy of \Cref{sec:overview}, on a
deployment where \texttt{read}, \texttt{connect}, and \texttt{exec} are all bound to LSM
hooks, so that $\Gamma$ assigns $C$ to all of them.  The let declarations
yield an environment $\Theta$ such that
$\Theta(\mathtt{key\_read}) = \Theta(\mathtt{ssh\_connected}) = C$.  
To type the policy, we apply rule \rulename{ForbidWhen}, which joins the type of $\mathtt{ssh\_connected}$ with that of \texttt{exec}, resulting in $\Gamma;\Theta \vdash p : C$ for its body $p$. 
Since $\Allowed(C, \Kill)$ holds, \rulename{Prog} accepts the program with the \texttt{kill} action.  Consider extending the policy with a new clause denoting the formula of \Cref{ex:bind},
which requires every file to be closed within five seconds of being opened.  
Even when $\Gamma$ assigns $C$ to both \texttt{open} and \texttt{close},
rule \rulename{Response} assigns the type $T$ to the new clause because the violation may come from the expiry of the deadline. 
By rule \rulename{Comp}, the type propagates to the whole program, so, for the $\Allowed$ predicate to hold, the action can only be \Alert.
\end{example}

\paragraph{Type soundness}
The following theorem establishes the soundness of the type system with respect to the semantics: the violation type $\tau$
assigned to a policy is a sound upper bound on the controllability class of the event that triggers the violation at runtime, when such an event exists:
\begin{restatable}[Type Soundness]{theorem}{typecor}\label{thm:type:cor}
Let $\Gamma$ be a type environment, $\pol = \langle d^*\,p, A\rangle$
be a DSL program such that $\Gamma;\emptyset \vdash d^* : \Theta$ and
$\Gamma;\Theta \vdash p : \tau$.
For each finite trace $\sigma_k$ such that $\sem{\pol}(\sigma_k) = \violated$, let
$\sigma_{i^*}$ be the first violating prefix of $\sigma_k$.
Then:
\begin{itemize}
	\item if $\tau \in \{C, O\}$, the violation is triggered by
	an offending event $e_{i^*}$ such that
	$\Gamma(e_{i^*}.\mathit{type}) \sqsubseteq \tau$;
	\item if $\tau = T$, either the violation is triggered by an
	offending event $e_{i^*}$ such that
	$\Gamma(e_{i^*}.\mathit{type}) \sqsubseteq O$, or it is due to a
	deadline expiration: there exists a bounded-response sub-formula
	$P_1 \Fbound P_2$ of $\dsl{p}^D$ and an instance $(i, \rho_i, d_i)
	\in \mathcal{I}(\sigma_{i^*}, \emptyset)$ of its active instance
	set such that $e_{i^*}.t > d_i$ and $\neg\,\sat(i, \rho_i, d_i,
	\sigma_{i^*})$.
\end{itemize}
\end{restatable}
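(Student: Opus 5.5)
The proof proceeds by structural induction on the policy body $p$, after a preliminary lemma about historical predicates. Its key ingredient is a characterization of the first violating prefix $\sigma_{i^*}$: by \Cref{thm:persistence}, $\sem{\pol}(\sigma_{i^*-1}) \neq \violated$ while $\sem{\pol}(\sigma_{i^*}) = \violated$. Since $\dsl{p}^D$ is a conjunction of $G$ formulas and conjunction takes the worst verdict, some conjunct $G\,\psi$ switches to $\violated$ exactly at $i^*$. The other conjuncts cannot be violated earlier, because each is monotone by the same persistence argument.

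The preliminary lemma concerns historical predicates. If $\Gamma;\emptyset \vdash d^* : \Theta$, then for every $H$, $\Theta(H) \in \{C,O\}$. Moreover, if $\sem{D(H)}$ is $\ok$ on $\sigma_i$, there is an index $j \leq i$ at which some atom of the innermost predicate holds with $\Gamma(e_j.\mathit{type}) \sqsubseteq \Theta(H)$. The proof is by induction on the acyclic order of $d^*$, using the rules for declarations in \extref{sec:app:typing}. Taken alone, this lemma says nothing about the offending event. Its role is only to show that types of historical predicates never reach $T$, so the \rulename{ForbidWhen} join stays in $\{C,O\}$.

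The main step handles the safety cases. Suppose the switching conjunct is $G(\lnot P)$ or $G(\lnot(D(H)\land P))$. By the semantics of $G$ given in \extref{sec:app:log:sem}, a violation first appears at $i^*$ exactly when the inner formula is $\violated$ on $\sigma_{i^*}$ itself, i.e.\ $\pred{P}(\sigma_{i^*},\cdot)$ yields $\mtrue$ (and $D(H)$ holds there). An event predicate can evaluate to $\mtrue$ only if some atom $(E,g)$ with $E = e_{i^*}.\mathit{type}$ is applicable; otherwise strong Kleene gives $\mnoapp$ or $\mfalse$. I will prove a small sublemma by induction on $P$: if $\Gamma \vdash P : \tau$ and $\pi_1(\pred{P}(e,\rho)) = \mtrue$, then $\Gamma(e.\mathit{type}) \sqsubseteq \tau$. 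The base case is \rulename{PEvent}; for \rulename{PAnd} the conclusion is immediate because the type is a join. The delicate cases are $\lor$ and $\lnot$, whose typing rules are in the appendix. I expect them to join or preserve types as well, so the witnessing atom's type sits below $\tau$. It then follows that $\Gamma(e_{i^*}.\mathit{type}) \sqsubseteq \tau_P \sqsubseteq \tau$, using \rulename{Forbid}, \rulename{ForbidWhen} and \rulename{Comp} for the overall type.

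The remaining case is the bounded-response conjunct $G(P_1\Fbound P_2)$. Here \rulename{Response} forces $\tau = T$, and the violation at $i^*$ is by definition witnessed by an instance in $\mathcal{I}(\sigma_{i^*},\rho)$ with $e_{i^*}.t > d_i$ and $\neg\sat$. I must check that $G$ over prefixes does not shift the environment: the instances are computed with $\emptyset$ since no bindings flow from outside. That yields the second disjunct. If instead $\tau = T$ but the switching conjunct is a safety clause, the previous paragraph gives $\Gamma(e_{i^*}.\mathit{type}) \sqsubseteq O$, since event types lie in $\{C,O\}$. I expect the main obstacle to be the $\lnot$/$\lor$ cases of the sublemma together with the exact appendix semantics of $G$: I must ensure that $\mnoapp$ never becomes $\violated$, so that violations are always anchored to an applicable atom of the current event.
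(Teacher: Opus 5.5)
\textbf{Gap: the predicate sublemma.} The one step that fails as written is this sublemma. You propose to show, by induction on $P$, that $\pi_1(\pred{P}(e,\rho))=\mtrue$ implies $\Gamma(e.\mathit{type})\sqsubseteq\tau$. That hypothesis is not inductive. $\pi_1(\pred{\lnot P'}(e,\rho))$ is $\mtrue$ exactly when $\pi_1(\pred{P'}(e,\rho))$ is $\mfalse$. So in the \rulename{PNeg} case the induction hypothesis, which only speaks about $\mtrue$, says nothing about $P'$. Proving an $\mfalse$ version for $\land$ and $\lor$ would in turn require it for their subterms.

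The obstacle is this polarity flip in the semantics, not the typing rules: \rulename{PNeg} does preserve the type and \rulename{POr} does join, as you expected. The fix is to strengthen the claim to $\pi_1(\pred{P}(e,\rho))\neq\mnoapp$, exactly as in the paper's \Cref{lem:type:pred}. Every case then appeals to the induction hypothesis on a subterm whose value is itself not $\mnoapp$. This is also what your ``applicable atom'' remark really needs: any value other than $\mnoapp$ forces an atom $(E,g)$ of $P$ with $E=e.\mathit{type}$, and the type of $P$ is the join of the types of its atoms.

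\textbf{Comparison with the paper's route.} With that repair your argument is correct, though organised differently from the paper's. The paper inducts on the derivation of $\Gamma;\Theta\vdash p:\tau$. In the \rulename{Comp} case it argues that $\sigma_{i^*}$ is also the first violating prefix of the violated sub-policy, then case-splits on $\tau_1$.

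You instead flatten $p$ into its clauses and use minimality of $i^*$ once, globally, to locate the clause whose inner formula is violated on $\sigma_{i^*}$ itself. This avoids the \rulename{Comp} case analysis. In exchange you should state two facts explicitly: every clause type lies below $\tau$, and all clause denotations are evaluated under $\emptyset$ (each is a $G$-formula, so environment invariance applies).

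Your route also makes explicit a step the paper leaves implicit in its base cases, namely that the witness index of $G$ is $i^*$ itself. Your reading of $\pi_1(\sem{\lnot P}(\sigma_{i^*},\emptyset))=\violated$ as $\pi_1(\pred{P}(e_{i^*},\emptyset))=\mtrue$ is the correct one. The paper's \rulename{Forbid} case writes $\mfalse$, which is harmless only because its lemma asks for $\neq\mnoapp$.

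\textbf{Unnecessary parts.} Your preliminary lemma on historical predicates can be dropped. In the \rulename{ForbidWhen} case, $\Gamma(e_{i^*}.\mathit{type})\sqsubseteq\tau_P\sqsubseteq\tau$ already settles the first item. When $\tau=T$, the bound $\sqsubseteq O$ holds simply because $\Gamma$ takes values in $\{C,O\}$. Likewise, $\sem{\pol}(\sigma_{i^*-1})\neq\violated$ follows directly from minimality; you do not need \Cref{thm:persistence} for it.
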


%% file: sections/monitor-synthesis.tex
A DSL program is transformed into a \emph{monitor automaton} that performs the monitoring and the enforcement.
We first introduce a \emph{normal form} to capture a structural property of the denotations of DSL programs (\Cref{sec:nf}).
We then define the synthesis function $\GenMon$, which translates such normal form into a monitor, and establish that the resulting monitor is sound and complete with respect to the semantics of \thetool DSL (\Cref{sec:automata}).

\subsection{Normal Form}
\label{sec:nf}

Every formula produced by the denotation function $\dsl{\cdot}$ of \Cref{sec:lang} either has the $G$ operator at the outermost level or is a conjunction of two formulas. 
Since the operator $G$ distributes over $\land$ (\extrefprop{prop:g-conj}),
we can always collect all conjuncts under a single $G$, obtaining the following normal form:
\[
G \Bigl(
\bigwedge_{i \in I} \lnot\psi_i
\;\land\;
\bigwedge_{j \in J} \bigl(P^j_1\,\Fboundj\,P^j_2\bigr)
\Bigr)
\]
where each $\psi_i$ is either an event predicate $P$ or a
formula of the form ${<}\phi_H \land P$, whereas $P^j_1, P^j_2$ are event predicates,
$\Delta_j \in \mathbb{N}$ is a finite bound, and $I$, $J$ are finite index
sets.
The formula ${<}\phi_H$ is the one associated by $D$ with the historical predicate $H$.
The rationale is that the normal form highlights two kinds of constraints a policy can express. 
The conjuncts $\lnot\psi_i$ express safety conditions: a violation occurs when $\psi_i$ holds on the current event, either because the event predicate $P$ is satisfied directly
(\forbid\ $P$), or because both the past condition
${<}\phi_H$ and the predicate $P$ hold simultaneously
(\forbid\ $P$ \when $H$). 
The conjuncts $P^j_1 \Fboundj P^j_2$ express bounded-response obligations that are violated by a deadline expiry.

Given a DSL program $\pol = \langle d^* p, A\rangle$, we define the normalization
function $\NF(\pol)$ inductively on the structure of $p$ (see \extref{sec:app:norm:form}). 
Intuitively, for a \forbid\ clause, it produces a safety conjunct $\lnot\psi_i$; for a \within clause, it produces a
bounded-response conjunct $P^j_1 \Fbound P^j_2$.
For a conjunction $p_1 \land p_2$, it applies $\NF$
recursively to $p_1$ and $p_2$, and merges the resulting formulas into a
single $G$.
The normalization is correct with respect to the semantics of formulas, processes each clause exactly once, and terminates in a number of steps equal to the number of clauses in $p$ (\extrefprop{th:nf-sem}).

\begin{example}[Normal form of a two-clause policy]\label{ex:nf}
Consider the \texttt{lateral\_\allowbreak movement} policy of \Cref{sec:overview}
extended, as in \Cref{ex:typing}, with a bounded-response clause
that requires every file to be closed within five seconds of
being opened. Denote by $P_o = (\mathtt{open},\, \mathit{ino}={?X})$ and by
$P_{cl} = (\mathtt{close},\, \mathit{ino}=X)$ the two predicates of \Cref{ex:bind}.
The body of the extended policy is the conjunction $p = p_1 \land p_2$ of the two clauses, so its denotation is $\dsl{p}^D = \dsl{p_1}^D \land \dsl{p_2}^D$:
\begin{align*}
	\dsl{p_1}^D &= G\bigl(\lnot({<}({<}P_k \land P_c) \land P_e)\bigr)\\
	\dsl{p_2}^D &= G\bigl(P_o\;F^{\leq 5\mathrm{s}}\;P_{cl}\bigr)
\end{align*}
where the denotation of $p_1$ is the one computed in \Cref{ex:denot}.
Each clause of $p$ thus carries its own $G$. 
The normalization function $\NF$ collects the two into a single one:
\[
\NF(\pol) \;=\; G\bigl(
\lnot({<}({<}P_k \land P_c) \land P_e)
\;\land\;
P_o \; F^{\leq 5\mathrm{s}} \; P_{cl}
\bigr)
\]
\end{example}

\subsection{Building Monitor Automata}
\label{sec:automata}

The normal form plays a key role in monitor synthesis: by the semantics of $G$, every $\violated$ outcome of $\NF(\pol)$ on a prefix $\sigma_k$ is due to a violation
of at least one conjunct, either a safety conjunct
$\lnot\psi_i$  or a bounded-response conjunct $P^j_1
\Fboundj P^j_2$. 
This separation allows us to construct the monitor compositionally: we
synthesize a local automaton for each conjunct and then combine them into a single monitor by parallel composition.

Specifically, a \emph{monitor automaton} is a tuple 
\[
\mathcal{A} = \langle Q,\,q_0,\,\Mem,\,{\rightarrow},\,\bad\rangle
\]
where $Q$ is a finite control-state set, $q_0$ is the initial state,
$\Mem$ is the monitor memory (defined below), ${\rightarrow}$ is the transition relation, and
$\bad\subseteq Q\times\Mem$ is the set of bad configurations.
Below, we first define the monitor memory and then the synthesis function $\GenMon$ to translate each conjunct into a local monitor.

\paragraph{Monitor memory}
The memory consists of two components. The first records whether each past formula ${<}\phi_H$ occurring in the safety conjuncts of $\NF(\pol)$ has been satisfied at some point in the past (\emph{historical memory}).
The second records the set of active obligations and a violation flag for each bounded-response conjunct.

Formally, let $\pastF$ denote the set of past formulas ${<}\phi_H$ occurring in $\NF(\pol)$. 
The \emph{historical memory} is a function 
$\hm\colon \pastF \to \{\ok, \pending\}$
where $\hm({<}\phi_H) = \ok$ means that ${<}\phi_H$ has been satisfied in the past, whereas $\hm({<}\phi_H) = \pending$ means that it is has not. 
The historical memory is monotone: once an entry becomes $\ok$, it remains $\ok$ on all subsequent steps of the monitor.

For each bounded-response conjunct $P^j_1 \Fboundj P^j_2$, the monitor maintains a \emph{clause state} $(I_j, f_j)$. 
The set $I_j$ collects the currently \emph{active instances}: each instance $(d, \rho)$ records the absolute deadline $d$ and the environment $\rho$ captured when $P^j_1$ matched.
The flag $f_j \in \{\ok, \violated\}$ is set to $\violated$ when at least one instance expires without the required response, and remains $\ok$ otherwise.

The \emph{monitor memory} is then a pair $m = (\hm,
\jm)$, where $\hm$ is the historical memory and $\jm$ maps each bounded-response conjunct $j \in J$ to a clause state $(I_j, f_j)$.
We denote by $m_\hm$ and $m_\jm$ the first and the second element of $m$, respectively.

\begin{figure*}[t]
\centering
\begin{gather*}
\inference[\rulename{S-Viol}]{
	\eval{\psi_i}{m_\hm}{e} = \mtrue
}{
	(\ok, m) \xrightarrow{\,e\,}^s_i (\violated, m)
}	
\qquad 
\inference[\rulename{S-Ok}]{
	\eval{\psi_i}{m_\hm}{e} \neq \mtrue
}{
	(\ok, m) \xrightarrow{\,e\,}^s_i (\ok, m)
}
\qquad
\inference[\rulename{S-Absorb}]{
	\phantom{X}
}{
	(\violated, m) \xrightarrow{\,e\,}^s_i (\violated, m)
}
\\[1ex]
\inference[\rulename{R-Ok}]{
	U(m_\jm(j), e) = (I', \ok)
	\quad
	m'_\jm = m_\jm[j \mapsto (I', \ok)]
}{
	(\ok, m) \xrightarrow{\,e\,}^r_j (\ok, m')
}
\qquad
\inference[\rulename{R-Viol}]{
	U(m_\jm(j), e) = (I', \violated)
	\quad
	m'_\jm = m_\jm[j \mapsto (I', \violated)]
}{
	(\ok, m) \xrightarrow{\,e\,}^r_j (\violated, m')
}
\\[1ex]
\inference[\rulename{R-Absorb}]{}{
	(\violated, m) \xrightarrow{\,e\,}^r_j (\violated, m)
}
\end{gather*}
\caption{Rules defining the transition relations $\to^s_i$ and $\to^r_j$ of the local automata $A^s_i$ and $A^r_j$, for every $i \in I$ and $j \in J$.}\label{fig:monitor:trans}
\Description{Six inference rules, in three rows, define the transition relations of the safety and of the bounded-response local automata; each rule reads one event and rewrites a configuration, that is a control state paired with the monitor memory. The top row holds the safety rules: S-Viol moves the automaton from ok to violated when the conjunct psi-i evaluates to true on the event under the historical memory, S-Ok keeps it in ok otherwise, and neither touches the memory. The lower rows hold the bounded-response rules, which apply the update function U to the clause state of the conjunct j and to the event: according to the flag U returns, R-Ok keeps the automaton in ok and R-Viol moves it to violated, and both store the updated clause state. Each family has a further rule, S-Absorb and R-Absorb, with no premise: an automaton already in violated stays there, so that state is absorbing.}
\end{figure*}

\paragraph{Safety automaton}
For each safety conjunct $\lnot\psi_i$ in $\NF(\pol)$, the synthesis function $\GenMon$ returns a safety monitor automaton $A^s_i = \langle Q^s, q^s_0, \Mem, \to^s_i, \bad^s
\rangle$ where:
\[
Q^s = \{\ok, \violated\}, \quad
q^s_0 = \ok, \quad
\bad^s = \{(\violated, m) \mid m \in \Mem\}.
\]
Note that the violation condition depends solely on the control state:
$A^s_i$ is in violation if and only if its state is $\violated$, regardless of the memory contents.
The transition relation $\to^s_i$ is defined by the rules \rulename{S-*} at the top of \Cref{fig:monitor:trans}.
In the rules, we use an \emph{evaluation function} 
$\eval{\psi_i}{\hm}{e} \in K_3$ that evaluates a safety conjunct $\psi_i$ on a single event $e$ using the current historical memory $\hm$ to resolve past sub-formulas. 
For $\psi_i = P$, evaluation reduces to the evaluation of the predicate $P$ on $e$.
For $\psi_i = {<}\phi_H \land P$, we first resolve the past condition by looking it up in $\hm$: if $\hm({<} \phi_H) = \pending$, the evaluation results in  $\mfalse$ regardless of $P$; otherwise the evaluation results in the evaluation of $P$ on $e$. 
Rule \rulename{S-Viol} fires when $\psi_i$ evaluates to $\mtrue$ on the event $e$ under the current historical memory $m_\hm$: in this case, the automaton transitions to the $\violated$ state.
Rule \rulename{S-Ok} covers the cases  $\mfalse$ and $\mnoapp$, both of which leave the automaton in the $\ok$ state: $\mnoapp$ means the current event is irrelevant to $\psi_i$ and does not constitute a violation. 
Rule \rulename{S-Absorb} makes $\violated$ an absorbing state:
once a safety violation is detected, it is irrevocable, mirroring the semantics of $G(\lnot\psi_i)$.

\paragraph{Bounded-response automaton}
For each bounded-response conjunct $P^j_1\,\Fboundj
\,P^j_2$, the synthesis function $\GenMon$ constructs a bounded-response automaton $A^r_j = \langle Q^r, q^r_0, \Mem,\to^r_j, \bad^r \rangle$ where:
\[
Q^r = \{\ok, \violated\}, \quad
q^r_0 = \ok, \quad
\bad^r = \{(\violated, m) \mid m \in \Mem\}.
\]
The automaton tracks the clause state $(I_j, f_j)$ in the shared memory.
For each event $e$, the automaton: 
\begin{enumerate*}[label*=(\emph{\roman*})]
	\item checks whether any active instance has expired without satisfaction;
	\item removes from $I_j$ all instances that are expired or
	satisfied by $e$; and
	\item adds a new instance if $P^j_1$ is triggered by $e$, recording the deadline $e.t + \Delta_j$ and the environment $\rho'$ captured at
	activation.
\end{enumerate*}
These steps are performed by the update function $U((I_j, f_j), e)$.

The transition relation $\to^r_j$ is defined by the rules \rulename{R-*} in \Cref{fig:monitor:trans}, where $m'$ is the memory $m$ updated with the new
clause state.
Rule \rulename{R-Ok} fires when the memory update produces the violation flag $\ok$: \emph{no} active instance has expired without satisfaction, so the automaton remains in $\ok$ and the
clause state is updated to $(I', \ok)$, reflecting the removal of satisfied instances and the possible addition of a new one. 
Rule \rulename{R-Viol} fires when the update
produces the violation flag $\violated$: \emph{some} active instance has expired without satisfaction, so the automaton transitions to the $\violated$ state and the clause state is updated to $(I', \violated)$. 
Rule \rulename{R-Absorb} makes the $\violated$ state absorbing: once a deadline violation is detected, the memory is left unchanged and the violation is irrevocable.

\paragraph{Monitor automaton}
Given a normal form formula $\NF(\pol)$, the synthesis function $\GenMon$ constructs the global monitor as the parallel composition of one local automaton for each conjunct:
\[
\mathcal{A} \;=\;
\Bigl(\|_{i \in I}\,A^s_i\Bigr)
\;\|\;
\Bigl(\|_{j \in J}\,A^r_j\Bigr)
\]
where the set of control states is $Q = \prod_{i \in I} Q^s \times \prod_{j \in J} Q^r$, the initial state $q_0$ is given componentwise by $(q_0)_i = q^s_0$ for every
$i \in I$ and $(q_0)_j = q^r_0$ for every $j \in J$, the initial monitor memory is $m =(\hm,\jm)$ such that $\hm$ maps every entry to $\pending$ and $\jm$ maps every $j$ to the clause state $(\emptyset, \ok)$.
A global configuration $(q, m)$ is \emph{bad} if and
only if at least one local automaton is in $\violated$:
\[
\bad = \{(q, m) \mid \exists\, i \in I.\; q_i = \violated \;\lor\;
\exists\, j \in J.\; q_j = \violated\}.
\]

On each event $e$, the monitor transition performs two
steps. 
First, the  historical memory $\hm$ is updated to $\hm'$: 
for each past formula ${<}\phi_H$, the corresponding entry is set to $\ok$ if $\phi_H$ holds on $e$, and left unchanged otherwise. 
This update is performed before any local automaton fires, so that all automata read the same updated $\hm'$. 
Second, all local automata fire independently: for every $i \in I$ the automaton $A^s_i$ performs a $\to^s_i$ transition, and for every $j \in J$ the automaton $A^r_j$ performs a $\to^r_j$ transition, on the same updated memory. 
Since safety automata do not modify any clause state and the clause states of distinct bounded-response automata are disjoint, all updates are independent. 
The formal definition of the global monitor behavior is in \extref{sec:app:glob:mon}.

After each event $e$, the monitor applies the enforcement function $\mathsf{Enf}$ to the updated configuration $(q', m')$:
\[
\mathsf{Enf}(q', m') =
\begin{cases}
	\mathit{allow} &
	\text{if } (q', m') \notin \bad \\
	A & \text{otherwise } 
\end{cases}
\]
where $A \in \{\Alert, \Deny, \Kill\}$ is the declared action in the DSL program. 
Recall that for a well-typed program, the type system guarantees statically that $A$ respects the nature of the event $e$.

\ifextended
Note that the resulting monitor is deterministic (\Cref{prop:determinism}) and has size linear in the size of the formula and historical-predicate declarations (\Cref{prop:monitor-size}).
\else
Note that the resulting monitor is deterministic and has size linear in the size of the formula and historical-predicate declarations; both properties are proved in the extended version~\cite{bpfence-extended}.
\fi 
These two properties are critical to achieve an effective eBPF implementation.

\paragraph{Correctness}
The following theorems ensure the correctness of the monitor synthesis function $\GenMon$. 
\emph{Soundness} guarantees that the monitor never raises false alarms, whereas  \emph{detection completeness} guarantees that every semantic violation is
eventually detected. 
Together, they ensure that the monitor is faithful to the formal semantics of $\NF(\pol)$:
\begin{restatable}[Soundness]{theorem}{monsound}\label{thm:msound}
For every DSL program $\pol$ and finite trace $\sigma_k$, if $\mathcal{A} =
\GenMon(\NF(\pol))$ reaches a bad configuration after
reading $\sigma_k$, then $\sem{\pol}(\sigma_k) = \violated$.
\end{restatable}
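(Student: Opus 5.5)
The plan is to prove, by induction on $k$, an invariant that ties each component of the monitor configuration reached after reading $\sigma_k$ to the semantics of the corresponding sub-formula of $\NF(\pol)$. Soundness then follows because a bad configuration means some local automaton is in $\violated$. By correctness of normalization (\extrefprop{th:nf-sem}), $\sem{\pol}(\sigma_k) = \pi_1(\sem{\NF(\pol)}(\sigma_k,\emptyset))$. By the semantics of $G$ over conjunctions, this value is $\violated$ as soon as a single conjunct is violated on some prefix $\sigma_i$, $i\le k$. Violation Persistence (\Cref{thm:persistence}) lets us transfer a violation found at an earlier prefix up to $\sigma_k$.

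The invariant has three parts, all stated for the configuration $(q^{(k)},m^{(k)})$ reached after $\sigma_k$. \emph{(H)} For every ${<}\phi_H\in\pastF$, $m^{(k)}_\hm({<}\phi_H)=\ok$ iff there is $i\le k$ with $\phi_H$ true on $\sigma_i$. This is proved by induction on $k$ together with induction on the nesting depth of historical definitions, since $\phi_H$ may itself be $D(H')\land P$. The key point is that the update of $\hm$ happens before the automata fire, which matches the strict/non-strict convention of ${<}$ used in the semantics; I would check this against the appendix definition. \emph{(S)} For every $i\in I$, $q^{(k)}_i=\violated$ iff there is $l\le k$ with $\eval{\psi_i}{m^{(l)}_\hm}{e_l}=\mtrue$. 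Using (H), the latter is equivalent to $\psi_i$ holding on $\sigma_l$, i.e. $\lnot\psi_i$ being $\violated$ at $\sigma_l$. \emph{(R)} For every $j\in J$, the set $I_j$ equals the set of instances of $\mathcal{I}(\sigma_k,\emptyset)$ for $P^j_1\Fboundj P^j_2$ that are neither satisfied nor expired, up to forgetting indices. Moreover, $f_j=\violated$ iff some instance $(i,\rho_i,d_i)$ with $i<l\le k$ has $e_l.t>d_i$ and $\neg\sat(i,\rho_i,d_i,\sigma_l)$.

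Given the invariant, soundness is immediate. If $q^{(k)}_i=\violated$, then (S) supplies a prefix $\sigma_l$ on which $G(\lnot\psi_i)$, and hence the whole normal form, evaluates to $\violated$; persistence lifts this to $\sigma_k$. If $q^{(k)}_j=\violated$, rule \rulename{R-Viol} must have fired at some step $l$, so by (R) the violated case of the bounded-response semantics holds at $\sigma_l$, and again persistence gives $\sem{\pol}(\sigma_k)=\violated$. The absorbing rules guarantee that the control state stays $\violated$ afterwards, so the induction step for these cases is trivial.

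I expect part (R) to be the main obstacle, specifically the inductive step through the update function $U$. Two things need care. First, instance removal: an instance is dropped when satisfied or expired, so I must show that the instances kept are exactly the pending ones. In particular, an instance satisfied by the current event $e_k$ with $e_k.t\le d$ is removed without flagging, while an expired, unsatisfied one sets the flag. The expiry check has to be performed before the satisfaction check, and using the timestamp $e_k.t$, so that it matches $e_k.t>d_i$. Second, the environment captured at activation must coincide with $\rho_i$ from $\pred{P^j_1}(\sigma_i,\emptyset)$. Note also that instances created at step $k$ are not in $\mathcal{I}(\sigma_k,\cdot)$ because $i<k$ there, so $U$ must add them only after the checks. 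I would handle this by proving a small lemma characterizing $U$ instance-wise, then doing a case split on whether $e_k$ expires, satisfies, or leaves each instance pending. Strict monotonicity of timestamps ensures that an expired instance can never later be satisfied.
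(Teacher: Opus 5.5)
Your plan has the same skeleton as the paper's proof. It uses a correctness property of the historical memory (your (H); the paper's \Cref{lem:updh-step} and \Cref{lem:hm-sound}) and soundness of $\eval{\psi_i}{\hm}{e}$ under that memory (your (S); \Cref{lem:eval-sound}). It traces the expired instance back to its activation by $\mathsf{new}$, and closes with \Cref{th:nf-sem} and \Cref{thm:persistence}.

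The difference is strength. You state every invariant as a biconditional and characterize $I_j$ exactly. The paper proves only the implications soundness needs: $\ok$ entries of $\hm$ are semantically $\ok$, and the instance found by $\mathsf{vi}$ was inserted at some $i<k$ with environment $\rho_i$ and never removed, hence $\lnot\sat$. Your converse directions are what the paper proves separately for completeness (\Cref{lem:updh-compl}, \Cref{lem:hm-compl}, and Steps 1--3 in the proof of \Cref{thm:mcompl}). Your route therefore yields both theorems at once, at the price of a heavier simultaneous induction.

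As stated, though, the set part of (R) is not an invariant, for two reasons.
\begin{itemize}
\item After step $k$, $I_j$ also contains the instance activated by $e_k$. That instance is not in $\mathcal{I}(\sigma_k,\emptyset)$, because that set requires $i<k$. You noticed this, but the invariant must range over activations with $i\le k$. For $i=k$, $\sat$ is vacuously false and $e_k.t\le d_k$, so the fresh instance fits.
\item \rulename{R-Absorb} leaves the memory untouched. Once $A^r_j$ enters $\violated$, its clause state $(I_j,f_j)$ is frozen, while the semantic set of pending instances keeps changing. Your remark that the absorbing cases are trivial covers the control states but not this memory component. A literal induction on (R) therefore breaks at the first step after the violation.
\end{itemize}

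The repair is to claim the set equality only for those $k$ with $q^{(k-1)}_j=\ok$. At such steps the transition is \rulename{R-Ok} or \rulename{R-Viol} and really applies $U$. That is all your argument uses, since you invoke (R) only at the step where \rulename{R-Viol} first fires. For soundness, the inclusion of $I_j$ into the pending activations already suffices. Parts (H), (S), and the flag part of (R) hold for every $k$ as you state them.
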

\begin{restatable}[Completeness]{theorem}{moncompl}\label{thm:mcompl}
For every DSL program $\pol$ and finite trace $\sigma_k$, if $\sem{\pol}
(\sigma_k) = \violated$, then $\mathcal{A} = \GenMon(\NF(\pol))$ reaches a bad configuration on some prefix $\sigma_{k'}$ with $k' \leq k$, provided that \emph{(i)} every
event relevant to the policy is delivered to the monitor and \emph{(ii)} at least one event is delivered after every deadline expiration.	
\end{restatable}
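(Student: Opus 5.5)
The plan is to prove completeness by contraposition on the \emph{first violating prefix}, using a simulation invariant between the monitor configuration after reading $\sigma_i$ and the semantics of $\NF(\pol)$ on $\sigma_i$. By \Cref{prop:fvp} there is a unique $i^*\le k$ with $\sem{\pol}(\sigma_{i^*})=\violated$ and $\sem{\pol}(\sigma_i)\neq\violated$ for $i<i^*$. By correctness of normalization (\extrefprop{th:nf-sem}) the semantics of $\pol$ coincides with that of $\NF(\pol)$. By the semantics of $G$ and $\land$, some conjunct must then be violated at $\sigma_{i^*}$: either a safety conjunct $\lnot\psi_i$ evaluates to $\violated$ on the last event, or some instance of a bounded-response conjunct $P^j_1\Fboundj P^j_2$ has expired unsatisfied. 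It therefore suffices to show that in each case the corresponding local automaton enters $\violated$ on some prefix $\sigma_{k'}$ with $k'\le k$. The global configuration is then in $\bad$, and it stays there because of \rulename{S-Absorb}/\rulename{R-Absorb}.

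The core is an invariant proved by induction on $i$, under assumption (i) that every relevant event reaches the monitor. The invariant has two parts. First, for every ${<}\phi_H\in\pastF$, $\hm_i({<}\phi_H)=\ok$ if and only if $\sem{{<}\phi_H}(\sigma_i,\emptyset)=\ok$. This is a nested induction on the acyclic order of the declarations $d^*$, since $\phi_H$ may itself be $D(H')\land P$. The update of $\hm$ that precedes the local steps matches exactly the inclusive ``some prefix up to now'' reading of once. Second, while $f_j=\ok$, the set $I_j$ equals the active instances of $\mathcal{I}(\sigma_i,\emptyset)$ that are neither satisfied nor expired, each recorded as the pair $(d_i,\rho_i)$. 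With this invariant, the safety case follows directly. $\sem{\psi_i}$ on the last event of $\sigma_{i^*}$ is $\mtrue$, and $\eval{\psi_i}{\hm}{e_{i^*}}$ computes the same value because $\hm$ agrees with the semantics of the past subformula. Hence \rulename{S-Viol} fires at step $i^*$, so $k'=i^*\le k$.

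The bounded-response case is where assumption (ii) enters, and I expect it to be the main obstacle. The issue is that the semantics declares $\violated$ at $\sigma_{i^*}$ because $e_{i^*}.t>d$ for some unsatisfied instance $(i,\rho_i,d)$, whereas the monitor only inspects deadlines when an event arrives. I would show that the instance is in $I_j$ just before $e_{i^*}$ is processed, using the invariant and the fact that it was not satisfied by any $e_j$ with $j<i^*$ and $e_j.t\le d$. Then step (i) of $U$ checks expiry against $e_{i^*}.t$ \emph{before} removing instances satisfied by $e_{i^*}$. This ordering matters: a $P_2$-match at a time past $d$ must not rescue the instance, and this matches the side condition $e_j.t\le d_i$ in $\sat$. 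So $U$ returns flag $\violated$, \rulename{R-Viol} fires, and again $k'=i^*$. Assumption (ii) is what guarantees that such a post-deadline event $e_{i^*}$ is actually delivered to the monitor rather than merely present in the semantic trace, and that no relevant events are filtered out in between. The delicate points I would check carefully are how $U$ handles boundary timestamps ($e.t=d$, which counts as satisfied) and instances created at the current event. For the latter, $\mathcal{I}$ requires $i<k$, so a newly added instance must not be checked on its own triggering event. I would verify this by direct inspection of the definition of $U$.
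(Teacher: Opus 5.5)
Your proposal is correct and follows essentially the paper's proof: first violating prefix plus \Cref{th:nf-sem}, agreement of the historical memory with the once-semantics by induction over $d^*$ for the safety case, and, for the bounded-response case, showing that the witnessing instance inserted by $\mathsf{new}$ at step $i$ is still in $I_j$ at step $i^*$ and is caught by $\mathsf{vi}$ before $\mathsf{rm}$ runs. The paper proves only the one direction you need and tracks just that single instance, which avoids the off-by-one in your invariant as literally stated (after reading $\sigma_i$, $I_j$ also contains the instance created from $e_i$, which $\mathcal{I}(\sigma_i,\emptyset)$ excludes); also, in the formal model neither (i) nor (ii) is actually used, since the monitor reads exactly $\sigma_k$ and the semantic violation already requires the post-deadline event $e_{i^*}$.
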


Note that detection completeness requires two additional assumptions: first, that every event relevant to the policy is delivered to the monitor without loss; second, that for bounded-response clauses, at least one event is delivered after every deadline expiration, so that the monitor has the opportunity to detect the timeout. 
The first assumption holds when the eBPF hook is attached before any relevant event can occur; the second is satisfied in practice by the continuous stream of kernel events.
%

%% file: sections/compilation-ebpf.tex

This section describes how a monitor $\mathcal{A}$ produced in \Cref{sec:synthesis} is compiled into concrete eBPF programs that enforce the policy inside the Linux kernel.
We give an overview of the compilation scheme (\Cref{sec:comp:overview}), show how the monitor state is realized via BPF maps (\Cref{sec:maps}), present the per-event processing logic (\Cref{sec:processing}), and then describe the runtime support that completes the enforcement loop (\Cref{sec:runtime}).

\subsection{Compilation Overview}
\label{sec:comp:overview}

Each eBPF program is attached to exactly one kernel hook and can only intercept events raised at that hook.
A policy, however, can refer to events from multiple hooks: a bounded-response clause $P^j_1\,\Fboundj\,P^j_2$ may trigger on one hook and expect a response on another, and a safety conjunct $\lnot({<}\phi_H \land P)$ may record a historical flag on one hook and check it on another.
The \thetool compiler therefore translates a policy $\pol$ into one BPF program per hook mentioned in the policy.
Since these programs collectively implement the single monitor $\mathcal{A}$, they must share the monitor memory $(\hm, \jm)$ so that the state written by one program is visible to the others.

Concretely, the compilation of a program $\pol$ is driven by its normal form $\NF(\pol)$ (\Cref{sec:nf}), which splits the policy into safety conjuncts $\lnot\psi_i$ and bounded-response conjuncts $P^j_1\,\Fboundj\,P^j_2$.
Each conjunct maps directly to a block of generated code: safety conjuncts become \texttt{if}--\texttt{goto} checks implementing the rules \rulename{S-Viol}/\rulename{S-Ok} of \Cref{fig:monitor:trans}, while bounded-response conjuncts become lookups and updates on the monitor state implementing the $U$ function and the rules \rulename{R-Viol}/\rulename{R-Ok}.
Since the number of clauses is fixed, all conjuncts are \emph{unrolled at compile time}: the generated code is a straight-line sequence with no loops, which guarantees that every program passes the eBPF verifier's termination check.

Additionally, the generated programs run in the kernel and capture events from all processes simultaneously, making the monitor inherently system-wide.
However, policies track the behavior of individual entities independently, e.g., a taint flag set by one process must not propagate to another.
A policy defines the \emph{unit of monitoring} by declaring a \emph{scope} (\texttt{pid}, \texttt{tgid}, \texttt{cgroup}, or \texttt{namespace}): two events belong to the same monitored entity if they share the same scope value.
To achieve this grouping, on each intercepted event, the generated code extracts an \emph{execution key} $\kappa$ using an appropriate BPF helper (e.g., \texttt{bpf\_get\_current\_pid\_tgid()} for per-process scope, \texttt{bpf\_get\_current\_cgroup\_id()} for per-container scope).
The execution key $\kappa$ partitions the monitor memory: each monitored entity maintains its own independent copy $(\hm_\kappa, \jm_\kappa)$.
However, kernel identifiers can be recycled, for example, once a process exits, its \texttt{pid} can be reassigned to an unrelated process, which would then inherit the state accumulated under the same key.
For \texttt{pid}-scoped policies, the generated code therefore invalidates the historical memory of a key when the process it identifies exits, so that the monitor memory follows the lifetime of that process and not the reuse of its identifier.
Below, we detail how this per-entity state is concretely stored and shared across programs.

\subsection{Representing the Monitor State}
\label{sec:maps}

eBPF programs have no heap and no writable global variables, so the only mechanism for keeping persistent state is the \emph{BPF map}.
The generated \thetool programs use statically sized BPF maps to maintain the per-entity monitor memory $(\hm_\kappa, \jm_\kappa)$.

The historical memory $\hm_\kappa$ is implemented by a hash table keyed by a pair $(\kappa, h)$, where $h$ is a compile-time numeric identifier assigned to each historical predicate ${<}\phi_H$.
When the generated program evaluates $\phi_H$ to true, it sets the corresponding entry to~$1$; the entry is never reset while the monitored entity exists, preserving the monotonicity of $\hm$.

The active instance set $I_j$ of each bounded-response clause is stored in a map 
keyed by a pair $(\kappa, j)$, where each entry holds a deadline (a timestamp in nanoseconds) and the correlation environment $\rho'$ captured at activation of the instance, encoded as a fixed-size byte array.
When the map is full, an LRU policy discards the least-recently-used entry; as long as all the active instances fit in it, the generated program faithfully realizes the formal semantics of $I_j$ of \Cref{sec:automata}, which assumes an unbounded $I_j$.

The memory space required by a monitored entity can be determined by analyzing the policy: since the keys have the form $(\kappa, h)$ and $(\kappa, j)$, an entity occupies at most $|\pastF|$ entries of the first map and $|J|$ of the second.

Finally, the monitor is system-wide and its programs run on every core, so the state of one entity may be updated concurrently.
The generated code relies on the atomic operations the kernel provides for eBPF maps, and makes atomic those updates that read a value and write it back.

\subsection{Per-event Processing}
\label{sec:processing}

\Cref{fig:pseudocode} shows the pseudocode of a generated BPF program.
For each intercepted event $e$, the program first extracts the execution key~$\kappa$ from the event context and, if the policy restricts enforcement to a target set, short-circuits to \texttt{allow} for entities outside it (lines 3--5).
The code then reads the relevant event fields from the hook arguments into a local instance of a suitable structure (lines 7--9). 

\begin{figure}[t]
\begin{lstlisting}[
  language=C,
  basicstyle=\ttfamily\footnotesize,
  keywordstyle=\bfseries,
  commentstyle=\itshape\color{gray},
  frame=single,
  numbers=left,
  numberstyle=\tiny,
  xleftmargin=1.8em,
  escapeinside={(*@}{@*)},
  morekeywords={SEC,BPF_PROG}
]
SEC("lsm/hook_name")
int BPF_PROG(bpfence_hook, ...) {
  // 1. Key extraction and target filter
  u64 kappa = bpf_get_current_pid_tgid();
  if (!in_target_set(kappa)) return 0;

  // 2. Event extraction
  struct bpfence_ev ev;
  extract_fields(&ev, ctx);

  // 3. Historical-flag update (*@$\updH$@*)
  for each (*@${<}\phi_H \in \hm$@*):     // unrolled during compilation
    if ((*@$\phi_H$@*) holds on ev)
      history_map[kappa, h] = 1;

  // 4. Safety-clause checks (*@$\lnot\psi_i$@*)
  for each (*@$\psi_i \in S$@*):           // unrolled during compilation
    if ((*@$\sem{\psi_i}^{H_\kappa}(\mathtt{ev})$@*))
      goto violation;

  // 5. Bounded-response update (*@$P^j_1\,\Fboundj\,P^j_2$@*)
  for each clause j in R:        // unrolled during compilation
    rv = response_map[kappa, j];
    if (rv && expired(rv))       // (i) detect
      goto violation;
    if (rv && (*@$P^j_2$@*) matches ev)    // (ii) satisfy
      delete response_map[kappa, j];
    if ((*@$P^j_1$@*) matches ev)          // (iii) activate
      response_map[kappa, j] =
        { deadline: now + (*@$\Delta_j$@*), env: (*@$\rho'$@*) };

  return 0; // allow
violation:
  emit_alert(kappa, policy, clause, ...);
  // Deny: return -EPERM;
  // Kill: bpf_send_signal(SIGKILL);
}
\end{lstlisting}
\caption{Pseudocode of a generated BPF program.}\label{fig:pseudocode}
\Description{C-like pseudocode of the BPF program generated for one LSM hook, in five steps: extract the execution key and allow entities outside the target set; extract the event fields; set the historical flag of every past formula that holds; jump to the violation label if a safety conjunct holds; and, for each bounded-response clause, detect an expired deadline, delete a satisfied instance, activate a new one. The loops over clauses are unrolled at compile time. At the violation label the program emits an alert and, for the deny and kill actions, returns an error or signals the process.}
\end{figure}

The core of the program implements the monitor behavior as described in \Cref{sec:automata}.
The program first evaluates all historical predicates: for each ${<}\phi_H$, if $\phi_H$ holds on the current event, the corresponding entry in $\hm_\kappa$ is set to~$1$ (lines 11--14).
All historical updates are performed \emph{before} any safety or response check, so that subsequent evaluations read an updated $\hm_\kappa$.
Next, for each safety conjunct $\lnot\psi_i$, the program evaluates $\sem{\psi_i}^{\hm_\kappa}$ on the current event.
If the result is $\mtrue$, the event violates $\lnot\psi_i$, and the program jumps immediately to the enforcement handler (lines 16--19). 
Finally, for each bounded-response clause $P^j_1\,\Fboundj\,P^j_2$, the program updates the active instance set $I_j$ following a three-phase logic: it detects expired instances, removes satisfied ones, and activates new ones when $P^j_1$ matches (lines 21--30).

When a violation is detected, the BPF program applies the action declared in the policy:
\Deny returns \texttt{-EPERM} from the LSM hook, blocking the operation before it completes;
\Kill invokes \texttt{bpf\_send\_signal(SIGKILL)};
\Alert emits the violation event to user space without blocking.
A bounded-response violation occurs when a deadline expires without the required response.
Since the BPF program only executes when an event arrives, the runtime daemon provides a dedicated \emph{timeout scanner} that periodically checks for expired deadlines.

\subsection{Runtime Support}
\label{sec:runtime}

A user-space daemon loads the generated BPF programs and attaches them to their corresponding kernel hooks.
When a policy spans multiple hooks, the programs must share the monitor memory $(\hm_\kappa, \jm_\kappa)$ introduced in \Cref{sec:maps}.
The daemon achieves this sharing by loading programs sequentially: after the first program is loaded, the daemon binds each subsequent program's map slots to the maps already created by the first, so that all programs in the policy read and write the same BPF maps.
This mechanism enables cross-hook correlation of variables and values.

Once all programs are attached, the daemon performs two tasks.
First, it polls a shared ring buffer where the BPF programs write violation events (each carrying the policy identifier, execution key~$\kappa$, violated clause, and contextual information) and dispatches them to a configurable handler.
Second, it runs a \emph{timeout scanner} that periodically checks the response map for expired deadlines, detecting bounded-response violations.

%% file: sections/implementation.tex

\thetool has been implemented as two open-source tools available online~\cite{repo}: a compiler that translates policies into loadable eBPF programs, and a daemon that loads the eBPF programs, attaches them to their corresponding kernel hooks, and manages the violation notification.
\Cref{sec:impl:tool} briefly presents our implementation, whereas 
\Cref{sec:eval} presents the experimental evaluation.

\subsection{Implementation}
\label{sec:impl:tool}

The implementation consists of an OCaml compiler (${\sim}7K$ LoC) and a C
runtime daemon (${\sim}2K$ LoC).
The compiler translates a \texttt{.bpfence} source file into C files, which are then compiled to \texttt{.bpf.o} objects via \texttt{clang} and loaded by the daemon. 
In addition to the tasks described in \Cref{sec:runtime}, the daemon dispatches violation events and manages policy updates.
Violation events are processed by a configurable handler chain that ships with
three built-in handlers: 
\texttt{log} (structured logging to file or stdout), \texttt{alert}
(forwarding to syslog), and \texttt{exec}
(execution of a shell command to handle the event).
Policy updates can be applied at runtime, without interrupting
enforcement, upon receipt of a suitable Linux signal.

A subtle but critical invariant of the daemon concerns the order in which
the compiled programs are started.
eBPF programs attached to LSM hooks start intercepting system calls the
moment they are linked to their hook, so if the daemon attached a
\texttt{deny} or \texttt{kill} policy before configuring its set of target
processes, the policy would momentarily apply to every process on the host.
To avoid this race, the daemon enforces a three-phase initialization:
it first \emph{loads} each BPF object into the kernel, creating the maps, 
then \emph{populates the target map} with the PIDs or cgroup identifiers, and
only at the end \emph{attaches} the programs to their hooks.
The same invariant is preserved across policy reloads: new programs are
loaded, and their target map is transferred from the old instance before
the old programs are detached, so there is no time window in which an
unconfigured monitor is live.

The user can control how much space is allocated to the monitor memory, either directly or through one of four available profiles.
The daemon applies this choice before the loading phase and refuses to start when the requested capacity is below the requirements recorded by the compiler in the generated object.

The compiler and the daemon also implement two countermeasures against techniques an attacker may use to evade a policy.
These countermeasures are a first step towards strengthening the threat model of \Cref{sec:threat}, which assumes an attacker that does not actively evade the deployed policies.
The first targets evasion through \texttt{fork}: a policy scoped to \texttt{pid} correlates events from a single process, so an attacker who splits a kill chain across a \texttt{fork} would present the monitor with two unrelated processes.
The compiler, therefore, emits additional programs that propagate a process's history to its children and reclaim it when the process exits.
The second targets evasion through file aliasing: a path is one of the names of a file rather than the file itself, so symbolic and hard links, bind mounts, and renames give an attacker distinct names for the same file.
When a policy mentions literal paths, the compiler keys the match on the file's identity and keeps it up to date as files are created, aliased, or removed.

Beyond the core constructs of
\Cref{sec:lang}, the implementation supports further language features to simplify policy creation.
It provides four additional temporal operators:
\emph{bounded sequences} (\texttt{sequence $P_1$ then within $\Delta_1$
$P_2$ \dots}) for multi-step actions with per-step deadlines,
\emph{bounded until} (\texttt{while $P_1$ then within $\Delta$
$P_2$}) for asserting that a condition $P_2$ must occur within
$\Delta$ once $P_1$ holds, 
\emph{bounded count} (\texttt{count $E \geq N$ within $\Delta$}) for
detecting $N$ or more occurrences of an event in a sliding window, and \emph{bounded safety} (\texttt{forbid P when G within $\Delta$}).
Bounded sequences desugar into independent bounded-response clauses of the base temporal logic. 
Bounded until, count, and bounded safety are compiled via dedicated schemes.
To promote reuse, the DSL supports \emph{parameterized templates}
and \emph{policy inheritance}.
A policy can declare parameters that are bound at instantiation time via
the \texttt{use} construct, allowing a single template to be reused
across different policies.
A policy can also \emph{extend} another, inheriting its clauses and
optionally adding or specializing them.
Both mechanisms are resolved at desugaring time and do not affect the
formal semantics of the language.
In addition, \thetool ships a standard library of event schemas covering common
Linux subsystems (file operations, process lifecycle, network
connections, raw kernel events) and reusable policy templates, e.g., Chinese Wall, sandboxing, data-loss prevention, and rate limiting.
More details on all additional constructs, their compilation, and the standard library are in \extref{app:extensions}.

\subsection{Experimental Evaluation}
\label{sec:eval}

Our experimental evaluation addresses three research questions:
\begin{description}[nosep,leftmargin=1.5em,labelindent=0em]
  \item[\textbf{RQ1.}] What overhead does a \thetool monitor
    impose on hooked system calls, and how does it scale with policy
    complexity?
  \item[\textbf{RQ2.}] Can \thetool monitors block real
    attack patterns while avoiding false positives on benign workloads?
  \item[\textbf{RQ3.}] What policies can \thetool easily express and enforce that existing tools cannot?
\end{description}

All experiments ran on a laptop with an Intel Core
i7-1195G7 (4 cores, 8 threads, $2.90$\,GHz base clock), 32\,GB RAM,
Ubuntu 24.04, Linux kernel 6.8.0.
All runs were performed with the CPU frequency pinned to the base clock (governor
\texttt{performance}, turbo disabled).

\subsubsection{RQ1: Runtime Overhead}
\label{sec:eval:perf}

We evaluated \thetool on five micro-benchmarks (B-1 to B-5) and
one macro-benchmark (B-6).

\paragraph{Micro-benchmarks}
The micro-benchmarks use a synthetic workload consisting of a C
program that invokes \texttt{open}/\texttt{close} syscalls $N{=}100{,}000$ times.
The monitor execution time is measured via the kernel's built-in
profiling infrastructure, which maintains, for each loaded BPF
program, a cumulative invocation counter (\texttt{run\_cnt}) and a
cumulative execution-time counter (\texttt{run\_time\_ns}).
By reading both counters before and after the $N$ syscalls, we
compute the per-invocation BPF cost as
$\Delta\texttt{run\_time\_ns} / \Delta\texttt{run\_cnt}$,
where $\Delta$ denotes the difference between the two readings,
obtaining a direct kernel-level measurement without user-space
timing noise.
We repeat this process for $R{=}10$ trials and report mean~$\pm$~standard~deviation.
The overhead is computed as $\text{BPF cost} / \text{baseline cost} \times 100$,
where the baseline cost ($1318$\,ns/op) is measured via \texttt{clock\_gettime} without any BPF program loaded.
The baseline was measured again after each run: it drifted by less than
$1\%$, and the CPU never throttled.
We generate synthetic policies attached to the \texttt{security\_file\_open} LSM hook.
Each policy defines an event \texttt{open} with a string field (the file path), extracted via the kernel helper \texttt{bpf\_d\_path}.

\begin{table}[t]
\caption{Monitor execution time per
  \texttt{open}/\texttt{close} syscall (baseline without monitor:
  $1318$\,ns/op).}\label{tab:overhead}
\centering\small
\begin{tabular}{@{}llrr@{}}
\toprule
 & \textbf{Configuration} & \textbf{BPF (ns/op)} & \textbf{Overhead} \\
\midrule
B-1 & $C{=}1$ clause        & $240 \pm 2$ & $18.2\%$ \\
B-1 & $C{=}5$ clauses       & $249 \pm 2$ & $18.9\%$ \\
B-1 & $C{=}20$ clauses      & $295 \pm 2$ & $22.4\%$ \\
\midrule
B-2 & $H{=}1$ predicate     & $260 \pm 1$ & $19.7\%$ \\
B-2 & $H{=}5$ predicates    & $341 \pm 2$ & $25.9\%$ \\
B-2 & $H{=}10$ predicates   & $432 \pm 2$ & $32.8\%$ \\
\midrule
B-3 & $K{=}1{-}100$ targets   & $61 \pm 1$  & $4.6\%$ \\
B-4 & No path extraction      & $71 \pm 1$  & $5.4\%$ \\
B-5 & Non-targeted process    & $61 \pm 1$  & $4.6\%$ \\
\bottomrule
\end{tabular}
\end{table}

\Cref{tab:overhead} reports the results of these benchmarks.
The first two measure how the cost scales with the size of the policy.
In benchmark B-1, we vary the number of \texttt{forbid} clauses ($C$) and show
that the cost is \emph{nearly constant}: from $C{=}1$ to
$C{=}20$, the monitor execution time grows from $240$ to $295$\,ns
(${\sim}3$\,ns per additional clause).
Benchmark B-2 varies the number of historical predicates
($H$) in a policy and shows a \emph{linear} cost: each predicate requires one
BPF hash-map lookup, contributing ${\sim}19$\,ns (from $260$\,ns at $H{=}1$ to $432$\,ns at $H{=}10$).
The other three benchmarks allow us to understand where the fixed cost comes from. Benchmark B-4 repeats the measurement of B-1 at $C{=}1$ with an event taking an integer parameter, eliminating the call to \texttt{bpf\_d\_path} and the checks that
accompany it: the cost drops to $71$\,ns, showing that resolving the file path
accounts for ${\sim}169$\,ns, about $70\%$ of the per-invocation cost.
Benchmarks B-3 and B-5 measure a process that the policy does not target:
the generated code checks the target map and returns \emph{before} extracting
the event, so neither reaches \texttt{bpf\_d\_path}.
Benchmark B-3 varies the number of scoped targets ($K{=}1$ to $100$) on the same
path-based schema as B-1 at $C{=}1$. It confirms that the target lookup is constant in $K$: all configurations yield $61 \pm 1$\,ns.
Benchmark B-5 repeats the measurement on the integer schema of B-4, paying only $61$\,ns.
In all benchmarks except B-3 and B-5, we use global scope, which is a worst case because every
process pays the hook cost.

\paragraph{Macro-benchmark}
To assess the impact on an application workload, we deploy a
non-trivial policy on nginx serving a 1\,KB static file.
We generate sustained HTTP load using \texttt{wrk} with 2 threads
and 100 connections, running each trial for 30\,s and repeating
for $R{=}10$ trials.
The load generator and the server are pinned to disjoint physical cores to prevent the measurement from being dominated by contention between them.
The considered policy combines the post-exploitation sandbox
(CS-1) with SSH lateral-movement detection (CS-5), both described in the next subsection.
During normal HTTP traffic, nginx never spawns a shell or reads
SSH keys, so all historical predicates remain false throughout the
test; the monitor still executes on every hooked syscall, but
every guard evaluates to false and no violation is triggered.
This represents the common case in production, where the monitor
is active, but no attack is in progress.
Without any policy, nginx sustains
$155{,}798 \pm 666$\,req/s; with the policy loaded, throughput
drops to $147{,}223 \pm 436$\,req/s, a reduction of $5.5\%$.
Using the same kernel counters as above, the per-invocation BPF cost is
$149$\,ns, consistent with the micro-benchmark results for policies
of comparable complexity.

\medskip
Note that the per-invocation cost is not specific to \thetool, as every BPF-LSM monitor pays this cost whenever a hooked operation occurs. 
In fact, the overhead of \thetool is similar to that of other monitors that rely on LSM hooks.
For example, BPFContain~\cite{findlay2021bpfcontain} reports $12.69$--$14.31\%$ on Apache and up to $16.76\%$ on micro-benchmarks, and BPFBox~\cite{findlay2020bpfbox} $1.0$--$15.0\%$ across the Phoronix suite.
For the tools of \Cref{sec:eval:comparison}, an independent study~\cite{her2025ebpf} reports $9.79\%$ for Tetragon and $9.96\%$ for Falco on a per-execution micro-benchmark, while their vendors report lower overheads.
Our results, $5.5\%$ on nginx and $4.6$--$5.4\%$ per hooked syscall when the hook does not resolve a file path (B-3--B-5), sit at the low end of this range.
Our comparison is indicative only, as the overheads above were measured across different workloads and machines.

\subsubsection{RQ2: Attack Surface Reduction}
\label{sec:eval:cases}

We evaluated the effectiveness of \thetool on the seven case studies (CS-1 to CS-7) of \Cref{tab:casestudies}.
Each policy targets a post-exploitation syscall pattern associated with a documented attack or CVE (third column); the original exploit is outside the scope of the policy.
For each case study, we execute two trigger programs: an \emph{attack trigger} that
reproduces the syscall pattern, and a \emph{benign trigger} that performs similar but legitimate operations.

\begin{table}[t]
\caption{Case studies: attack surface reduction.
  \emph{Outcome}: how the attack trigger is handled.
  In all cases the benign trigger completes without interference.}\label{tab:casestudies}
\centering\small
\begin{tabular}{@{}clp{2.5cm}c@{}}
\toprule
 & \textbf{Scenario} & \textbf{Motivation} & \textbf{Outcome} \\
\midrule
CS-1 & Post-exploit.\ sandbox  & Log4Shell~\cite{cve202144228}         & Killed  \\
CS-2 & Resource exhaustion     & Fork-bomb~\cite{mitreT1499}           & Alerted \\
CS-3 & Container isol.\ (BLP) & CVE-2024-21626~\cite{cve202421626}    & Denied  \\
CS-4 & Data exfiltration       & Equifax 2017~\cite{equifax2018gao}    & Denied  \\
CS-5 & 3-step SSH kill chain   & MITRE T1021~\cite{mitreT1021}         & Killed  \\
CS-6 & Persistence + FD leak    & Persistence~\cite{mitreT1053}     & Killed  \\
CS-7 & Clark-Wilson (E2)       & Dirty Pipe~\cite{cve20220847}         & Denied  \\
\bottomrule
\end{tabular}
\end{table}

In all cases, the attack trigger is correctly blocked or
alerted, whereas the benign trigger completes without interference (zero false negatives and zero false positives).
In CS-1 (\emph{post-exploitation sandbox}), the policy detects when a server process
spawns a shell and then prevents that shell from executing further
binaries, opening network connections, or writing configuration files. 
This behavioral pattern is common in post-exploitation steps.
The policy relies on a historical predicate to condition the enforcement on a previous \texttt{exec}.
In CS-2 (\emph{resource exhaustion}), the policy raises an alert when a process clones itself more than $100$ times within a $1\,\text{s}$ time window, mitigating fork-bomb attacks, or opens more than $50$ new connections within a $10\,\text{s}$ sliding time window, mitigating connection-flooding patterns.
CS-3 (\emph{container isolation}) enforces a Bell--LaPadula no-read-up /
no-write-down policy across container cgroups, preventing
information flows from a high-sensitivity container to a low one via the host file system.
In CS-4 (\emph{data exfiltration}), the policy detects when a process reads a
sensitive file and \emph{subsequently} opens a network connection; it correlates two events on different LSM hooks via cross-hook binding.
Case study CS-5 (\emph{SSH lateral movement}) was already presented in \Cref{sec:overview}: the policy detects a three-step kill chain in which the attacker reads an SSH private key, opens an outbound connection on port~22, and finally executes a command.  The process is killed on the final \texttt{exec}.
In CS-6 (\emph{persistence containment + FD-leak watchdog}), the
policy combines two clauses with different enforcement
semantics: an inline safety check kills the process if it attempts
post-exploitation persistence operations, e.g.,
\texttt{exec(/usr/bin/crontab)}, or write to \texttt{/etc/passwd},
after a shell has been spawned, while a liveness
obligation alerts when a file descriptor is opened but not closed
within $30\,\text{s}$, mitigating resource-leak patterns.
Case study CS-7 (\emph{Clark--Wilson integrity}) enforces the E2 rule of the Clark--Wilson model: protected files (e.g., 
\texttt{/etc/passwd}, setuid binaries) can be modified only by authorized processes. 
Any write from an unauthorized process is denied, blocking attacks such as Dirty Pipe, in which an unprivileged process bypasses standard permissions to overwrite a protected file.  The policy uses a negated history guard of the form ``forbid write unless an authorized program was executed''.

Additionally, the artifact includes six further case studies (CS-8 to CS-13), in which the attack trigger is a public proof-of-concept exploit for a documented (patched upstream) kernel vulnerability, and the policy forbids the system call patterns that may lead to the attack, e.g., the use of exotic crypto sockets.
Since a policy describes the attack pattern rather than a specific bug, it may intercept multiple vulnerabilities. For example, the policy of CS-8 covers \emph{copy\_fail}~(CVE-2026-31431) and \emph{Fragnesia}~(CVE-2026-46300), which affect different kernel subsystems but share the same behavioral pattern: the setup of an exotic crypto socket followed by the execution of a setuid binary.
These case studies show that \thetool may also serve as a containment layer for this class of bugs when the patch is not yet deployed.

\subsubsection{RQ3: Comparison with Existing Tools}
\label{sec:eval:comparison}

To compare \thetool with production tools, we took four real rules
from Falco~\cite{falco2023} and Tetragon~\cite{tetragon2024}, two
well-known open-source projects, and re-expressed each of them as a
\thetool policy.
The two Falco rules are \emph{Read sensitive file
untrusted}~\cite{falcoRuleReadSensitive} and \emph{Write below
binary directories}~\cite{falcoRuleWriteBin}: both belong to the
default ruleset shipped with every Falco installation and are among
the most cited examples in the project's documentation and tutorials,
targeting credential exfiltration and binary tampering, respectively.
The two Tetragon policies are
\texttt{monitor-kernel-modules}~\cite{tetragonKmodPolicy} and
\texttt{sys\_mount}~\cite{tetragonMountPolicy}, both taken from the
examples in the upstream repository, and cover two attack surfaces that the documentation prominently
advertises as use cases: kernel-rootkit installation and mount-based
container escape.
For each rule, we provide a \emph{direct port policy} that reproduces the
upstream behavior, and one or more \emph{companion policies} that
capture behavior the upstream rule cannot express.
The policies and their attack and benign triggers are in the online
repository.

Falco's \emph{Read sensitive file untrusted} ($\sim 35$ LoC) 
matches reads of credential, SSH-key, TLS-key, and Kubernetes-secret
files performed outside an allowlist of trusted readers, and raises
an alert.
Our direct port policy ($\sim 16$ LoC) keeps the same set of paths, but instead of a tracepoint uses the LSM hook \texttt{security\_file\_open}, so
the same event can be synchronously denied.
A companion policy denies the same reads only \emph{after} the
offending process has spawned a shell, capturing a ``shell exec~$\to$~credential read'' post-exploitation kill
chain~\cite{mitreT1552}. 
Falco cannot express this stateful correlation, since its rule language is single-event and stateless.
For \emph{Write below binary directories} ($\sim 19$ LoC), 
we follow the same approach: a direct port ($\sim 16$ LoC) denies writes to
\texttt{/bin}, \texttt{/sbin}, \texttt{/usr/bin}, \texttt{/usr/lib},
and similar locations, while the companion policy kills the offending process when it attempts these writes after a shell spawn.

Tetragon's \texttt{monitor-kernel-modules} policy ($\sim 50$ LoC) relies on the 
\texttt{security\_kernel\_module\_request} and \texttt{do\_init\_module} hooks:
it logs every attempt to load a kernel module but does not block it.
Our direct port policy ($\sim 7$ LoC) relies on the same hooks but upgrades the action
to a synchronous LSM deny, so the module is rejected before it is
linked into the kernel.
We provide two companion policies that perform cross-hook correlations that a single Tetragon policy cannot express.
The first synchronously blocks the \texttt{bpf()} syscall
when invoked with the argument \texttt{BPF\_PROG\_LOAD} for loading a BPF program or with the argument \texttt{BPF\_MAP\_CREATE} for creating a BPF map;
the second policy blocks any \texttt{bpf()} syscall \emph{after} the same
process has loaded a kernel module, capturing in a single policy a kill chain in which an attacker first installs a kernel-mode rootkit~\cite{mitreT1014} and then leverages BPF for further in-kernel persistence.
Tetragon's \texttt{sys\_mount} policy ($\sim 12$ LoC) traces every
\texttt{mount} syscall and reports its source and target without
filtering or blocking.
Our direct port policy ($\sim 11$ LoC) attaches to the LSM hook \texttt{security\_sb\_mount}
and synchronously denies mounts of the filesystem types most commonly
abused for container escape, such as \texttt{proc} and \texttt{sysfs}.
The two companion policies extend it, considering multiple hooks: the first denies any
\texttt{bpf()} syscall after a suspicious mount has been observed,
capturing a kill chain in which an attacker first escapes the
container by mounting host filesystems~\cite{mitreT1611} and then uses BPF to consolidate the
breakout; the second denies every outbound \texttt{connect()} after
the process has spawned a shell, capturing post-escape lateral
movement~\cite{mitreT1021}.

We tested all four ports by providing an attack and a benign trigger
for each policy, following the same approach as in RQ2: in every case
the attack trigger is blocked, while the benign trigger completes
without interference.
The following observations emerge from the comparison.
\thetool policies are typically shorter. 
Falco rules are intrinsically alert-only and operate on a single hook,
because they observe syscalls via tracepoints \emph{after} the kernel
has executed them; re-expressing them in \thetool turns detection
into synchronous prevention with no change to the matching logic,
simply by attaching the same predicate to the corresponding LSM hook.
Tetragon supports synchronous enforcement, but its policies are
single-hook by construction, so cross-event correlations such as
``module load \emph{then} BPF program load'' or ``shell exec
\emph{then} connect'' cannot be expressed in a single policy,
whereas \thetool's historical predicates and \texttt{when} construct express
these correlations directly inside one policy.
The comparison also revealed two matching primitives that Tetragon
offers and \thetool does not: testing whether a specific bit is set
in a syscall flag argument, and filtering on the privileges of the
calling process. 
We leave their implementation in \thetool for future work.

%% file: sections/related-work.tex

We discuss the two lines of research to which our work is closely related: in-kernel security monitoring and runtime verification.

\paragraph{In-kernel security monitoring}
Falco~\cite{falco2023}, Tetragon~\cite{tetragon2024}, and Kube\-Armor~\cite{kubearmor} are the industry references for eBPF-backed runtime security: they express policies as YAML
rules whose semantics is defined only by the implementation, and they
do not statically distinguish the hooks on which blocking is possible from those on which it is not. 
We compared \thetool against Falco and Tetragon in \Cref{sec:impl} (RQ3). 
The tool bpftrace~\cite{bpftrace} is complementary: a scripting
language for kernel tracing focused on observability, with no
LSM hook support and with write access restricted to error-injection points.
BPFBox~\cite{findlay2020bpfbox} and BPFContain~\cite{findlay2021bpfcontain} compile process
confinement policies to eBPF LSM programs, demonstrating the feasibility of the hook point but without temporal or stateful operators. 
Jia et al.~\cite{jia2023} extend Seccomp-BPF with temporal specialization and counter-based checks at the system-call boundary; both are subsumed by \thetool's policy language.
More recently, eBPF-PATROL~\cite{patrol2025} applies eBPF to
runtime threat recognition in containerized environments, again
without formal guarantees or a type discipline.
The closest system in spirit is the Linux kernel's own runtime
verification subsystem~\cite{bristot2019linuxrv,linuxrv-docs},
upstream since v6.0, which loads hand-drawn deterministic
automata as kernel modules attached to tracepoints and reacts to violations
with \texttt{printk} or \texttt{panic}. 
\thetool shares the same goal of being an online monitor inside the kernel, but
offers a DSL for the specification of policies with a formal
semantics and a type system that statically separates
controllable from merely observable hooks, delivers monitors as
eBPF programs rather than kernel modules, attaches to LSM hooks as well, and scopes enforcement per process, namespace, or cgroup.

Overall, none of these systems combines a policy language with formal semantics, a type discipline separating controllable from observable events, and provably correct compilation to in-kernel monitors,  which is what \thetool contributes.

\paragraph{Runtime verification}
\thetool brings runtime verification inside the kernel. 
The construction of finite-state monitors from
temporal-logic specifications dates back to the work of
Bauer et al.~\cite{bauer2011} for LTL and
TLTL, and to the automata-based techniques of
Havelund and Ro\c{s}u~\cite{havelund2004}.
On the theoretical side, Schneider~\cite{schneider2000} and Basin et al.~\cite{basin13} characterized the classes of policies enforceable by execution monitors;
Aceto et al.~\cite{aceto2023foe} extended enforcement to
first-order branching-time properties, and Hublet et
al.~\cite{hublet2024proactive,hublet2025scaling} developed
proactive enforcement for metric first-order temporal logic
together with scalable implementations.
At the tool level, MonPoly~\cite{basin2011} monitors metric
first-order temporal properties over logs,
JavaMOP~\cite{meredith2012javamop} compiles parametric
specifications into aspect-oriented instrumentation, and
DejaVu~\cite{havelund2018dejavu} targets first-order past-time
logic; decentralized variants such as RIARC~\cite{aceto2024riarc}
address the instrumentation of reactive components. 
All of these monitors rely on very expressive logics and run in user space, in the same address space as the monitored program, or offline over recorded traces; \thetool aims at in-kernel policy enforcement and realizes monitor synthesis under the constraints of the
eBPF verifier: bounded memory, no unbounded loops, a fixed
instruction budget.
In-kernel enforcement requires trading off the expressiveness of the policy language with implementation constraints, as done in \Cref{sec:compilation}.

%% file: sections/conclusion.tex

We have presented \thetool, a runtime-verification
framework to enforce trace-based security policies inside the Linux kernel.
\thetool offers a policy language with a formal semantics over execution traces and a type system
that statically separates controllable from observable events.
It then synthesizes, from every well-typed policy, a correct finite-state monitor, and compiles it to eBPF programs that run synchronously inside the kernel. 
We implemented a PoC that includes a compiler and a runtime daemon to manage the policy life-cycle.
We have evaluated our implementation on seven case studies drawn from real-world attack patterns and on micro- and macro-benchmarks to show that stateful security policies  can be enforced inside the kernel with an acceptable overhead.
Finally, we have qualitatively evaluated the expressiveness of our policy language against upstream Falco and Tetragon rules, showing that our language can prevent more complex attack patterns than these widely adopted tools.

\thetool inherits two structural limitations from its target environment.  
First, synchronous enforcement is only possible on events mediated by an LSM hook: policies whose violating events occur only through tracepoints or kprobes can be
\emph{detected} but not \emph{prevented}.  
Second, the eBPF execution model forbids dynamic allocation; as a consequence, the monitor state is stored in fixed-size maps, whose capacity is set when a policy is loaded and checked against the requirements computed at compile time.  

There are several directions in which to extend the present work.  
On the \emph{language} side, we plan to introduce richer predicates and inline aggregations, and lift correlation from kernel identities (\texttt{pid}, \texttt{tgid}, \texttt{cgroup}) to causal ones (process trees and data-flow taint): the propagation of history across \texttt{fork} (\Cref{sec:impl:tool}) is a first step, but attacks that spread through \texttt{execve} or across cooperating processes require the unit of monitoring itself to be defined by causality rather than by a declared scope.
On the \emph{formal} side, we are interested in mechanizing our soundness proof and in monitoring hyperproperties such as non-interference, which lie beyond plain trace properties.  
On the \emph{deployment} side, we plan to explore distributed enforcement across a cluster of containers, automatic policy synthesis from high-level intents or from observed benign behavior, and integration with container orchestrators to tie the life-cycle of policies to that of the workloads they protect.

%% file: sections/appendix-formal.tex
\subsection{Policy Language}

\subsubsection{Well-formedness Conditions}\label{sec:well:form}
We define well-formedness conditions for the temporal logic of \Cref{sec:logic}.
We first define the function $\bnd$ to compute the set of variables whose bindings propagate to the enclosing context in a guard $g$, an event predicate $P$, and a temporal formula $\phi$ (below, $\alpha.g$ denotes the guard of $\alpha$):  
\begin{gather*}
\bnd(\mathsf{true}) = \bnd(\field \sim \ell) = \emptyset \quad \bnd(\field = {?X}) = \{X\} \\
\bnd(\field = X) = \emptyset \quad \bnd(g_1 \land g_2) = \bnd(g_1) \cup \bnd(g_2)
\\
\bnd(\alpha) = \bnd(\alpha.g) \quad \bnd(\lnot P) = \bnd(P)
\\
\bnd(P_1 \land P_2) = \bnd(P_1 \lor P_2) = \bnd(P_1) \cup \bnd(P_2)
\\
\bnd(\phi_1 \land \phi_2) = \bnd(\phi_1) \cup \bnd(\phi_2) \quad \bnd(\neg\phi) =\bnd(\phi)
\\
 \bnd({<}\phi) = \bnd(G\,\phi) = \bnd(P_1\,\Fbound\,P_2) = \emptyset
\end{gather*}
Note that for the operators $G\,\phi$, ${<}\phi$, and $\Fbound$, the set of variables is
$\emptyset$ due to an intrinsic property of each operator: in all three cases,
the semantics returns the environment $\rho$ unchanged --- $G$ and $\Fbound$ return $\rho$ invariant explicitly in their semantic clauses, and ${<}\phi$ evaluates $\phi$
under the empty environment and returns $\rho$ unchanged. 
No new binding reaches the outer context, so computing $\emptyset$ faithfully
reflects the semantics.

\begin{figure*}[t]
\begin{gather*}
\inference[\rulename{GTrue}]{ }{ \rho \vdash \mathsf{true}\ \mathsf{wf} } 
\qquad 
\inference[\rulename{GLit}]{ }{\rho \vdash \field \sim \ell\ \mathsf{wf}} 
\qquad
\inference[\rulename{GBind}]{ X \notin \mathsf{dom}(\rho) }{ \rho \vdash \field = {?X}\ \mathsf{wf}} 
\qquad
\inference[\rulename{GConstr}]{ X \in \mathsf{dom}(\rho) }{\rho \vdash \field = X\ \mathsf{wf}} 
\\[1ex]
\inference[\rulename{GAnd}]{\rho \vdash g_1\ \mathsf{wf} \quad \rho \cup \bnd(g_1) \vdash g_2\ \mathsf{wf}}{\rho \vdash g_1 \land g_2\ \mathsf{wf}} 
\qquad\quad
\inference[\rulename{PAtom}]{ \rho \vdash \alpha.g\ \mathsf{wf} }{\rho \vdash \alpha\ \mathsf{wf}} 
\qquad\quad
\inference[\rulename{PNeg}]{\rho \vdash P\ \mathsf{wf} \quad \bnd(P) = \emptyset}{\rho \vdash \lnot P\ \mathsf{wf}} 
\\[1ex] 
\inference[\rulename{PAnd}]{\rho \vdash P_1\ \mathsf{wf} \quad \rho \cup \bnd(P_1) \vdash P_2\ \mathsf{wf}}{\rho \vdash P_1 \land P_2\ \mathsf{wf}}
\qquad
\inference[\rulename{Por}]{\rho \vdash P_1\ \mathsf{wf} \quad \rho \vdash P_2\ \mathsf{wf} \\
	\bnd(P_1) = \emptyset \quad \bnd(P_2) = \emptyset}{\rho \vdash P_1 \lor P_2\ \mathsf{wf}} 
\qquad
\inference[\rulename{FAnd}]{\rho \vdash \phi_1\ \mathsf{wf} \quad \rho \cup \bnd(\phi_1) \vdash \phi_2\ \mathsf{wf}}{\rho \vdash \phi_1 \land \phi_2\ \mathsf{wf}}
\\[1ex]
\inference[\rulename{FNeg}]{\rho \vdash \phi\ \mathsf{wf} \quad \bnd(\phi) = \emptyset}{\rho \vdash \neg\phi\ \mathsf{wf}}
\qquad
\inference[\rulename{FPast}]{\bnd(\phi) \cup \use(\phi) = \emptyset}{\rho \vdash {<}\phi\ \mathsf{wf}} 
\qquad
\inference[\rulename{FAlways}]{\rho \vdash \phi\ \mathsf{wf}}{\rho \vdash G\,\phi\ \mathsf{wf}} 
\qquad
\inference[\rulename{FResp}]{\rho \vdash P_1\ \mathsf{wf} \quad \rho \cup \bnd(P_1) \vdash P_2\ \mathsf{wf}}{\rho \vdash P_1\,\Fbound\,P_2\ \mathsf{wf}}
\end{gather*}
\caption{Well-formedness conditions for guards, event predicates, and temporal formulas.}
\label{fig:well:form}	
\end{figure*}

\Cref{fig:well:form} shows the well-formedness conditions for guards, event predicates, and temporal formulas (we denote by $\rho \cup V$ the environment obtained by extending $\rho$ with a binding to a placeholder value for each variable in $V$).
A guard $g$ is well formed under an environment $\rho$, written $\rho \vdash g\ \mathsf{wf}$, iff it satisfies the rules \rulename{G*}.
Intuitively, for each sub-term $g_1 \land g_2$ where $g_2$ contains a constraint $\field = X$, it must hold that $X \in \mathsf{dom}(\rho) \cup \bnd(g_1)$.
Similarly, an event predicate $P$ is well formed under an environment $\rho$, written $\rho \vdash P\ \mathsf{wf}$, iff it satisfies the rules \rulename{P*}.
Intuitively, these rules enforce the following conditions:
for each conjunction $P_1 \land P_2$ where $P_2$ includes a constraint $\mathsf{field} = X$, it must hold that $X \in \mathsf{dom}(\rho) \cup \bnd(P_1)$;
no binding $?X$ may be introduced under disjunction $\lor$.

A temporal formula $\phi$ is well formed under an environment $\rho$, in symbols $\rho \vdash \phi\ \mathsf{wf}$, iff it satisfies the rules \rulename{F*}. 
Intuitively, rule \rulename{FAnd} requires that for each conjunction $\phi_1 \land \phi_2$ where $\phi_2$ includes a constraint $\mathsf{field} = X$, it must hold that
$X \in \mathsf{dom}(\rho) \cup \bnd(\phi_1)$: bindings introduced
by $\phi_1$ are visible in $\phi_2$.
Rule \rulename{FNeg} has two independent premises: $\rho \vdash \phi\ \mathsf{wf}$ ensures that $\phi$ is well formed under $\rho$, and $\bnd(\phi) = \emptyset$ ensures that no binding escapes from the negated formula. 
These premises are independent: a formula can be well formed and still produce bindings.
The second premise is necessary because under negation a binding may never actually be
produced: if $\phi$ evaluates to $\violated$, no field was matched and no variable was bound. 
Allowing bindings to escape from $\lnot\phi$ would therefore be unsound.
Rule \rulename{FPast} requires only $\bnd(\phi) \cup \use(\phi) = \emptyset$, i.e., $\phi$ neither introduces nor uses correlation variables. 
A formula with no variables is well formed under any  environment, so the explicit premise $\rho \vdash \phi\ \mathsf{wf}$ would be redundant and is omitted. 
The condition ensures that the semantics of ${<}\phi$ evaluates $\phi$ on past prefixes under the empty environment, independently of the current $\rho$. 
If $\phi$ referred to variables bound in $\rho$, those bindings would be
unavailable at past positions, making the evaluation ill-defined.
Rule \rulename{FAlways} is direct: we check $\phi$ is well formed.
Rule \rulename{FResp} mirrors \rulename{FAnd}: the bindings
introduced by $P_1$ are visible in $P_2$, expressed by the premise
$\rho \cup \bnd(P_1) \vdash P_2\ \mathsf{wf}$.

\subsubsection{Event Predicate Semantics}\label{sec:pred:sem}
Let $e$ be an event and $\rho \colon \Vars \to \Vals$ a correlation environment.
The semantics of an event predicate is defined by the function 
$ 
\pred{\cdot} \colon \mathcal{P} \times \mathcal{E} \times \mathsf{Env} \;\to\; K_3 \times \mathsf{Env} 
$,
where $\mathcal{P}$ is the set of event predicates, $\mathcal{E}$ is the set of events, $\mathsf{Env}$ is the set of correlation environments, and $K_3$ is the set $\{\mtrue, \mfalse, \mnoapp \}$. 
The function $\pred{\cdot}$ is defined inductively on the syntax of event predicates as follows (below $\sim$ is an operator in $\{=, \neq, <, \leq, >, \geq\}$):

\smallskip
\noindent\emph{Atoms:} Given an event atom $\alpha = (E, g)$:
	\[
	\pred{\alpha}(e,\, \rho)
	\;=\;
	\begin{cases}
		\llbracket g \rrbracket(e,\, \rho) & \text{if } e.\mathit{type} = E \\
		(\mnoapp,\, \rho)           & \text{otherwise}
	\end{cases}
	\]
	where $\llbracket g \rrbracket(e, \rho) = (b, \rho')$ is inductively defined as:
	\begin{align*}
		\llbracket \mathsf{true} \rrbracket(e, \rho)&= \bigl(\mtrue,\;\rho\bigr)\\
		\llbracket \mathsf{field} = {?X} \rrbracket(e, \rho)
		&= \bigl(\mtrue,\;
		\rho[X \mapsto e.\mathsf{params}[\mathsf{field}]]\bigr) \quad \text{if } X \notin \mathsf{dom}(\rho) \\
		\llbracket \mathsf{field} = X \rrbracket(e, \rho)
		&= \bigl(e.\mathsf{params}[\mathsf{field}] = \rho(X),\; \rho\bigr) \quad \text{if } X \in \mathsf{dom}(\rho) \\
		\llbracket \mathsf{field} \sim \ell \rrbracket(e, \rho)
		&= \bigl(e.\mathsf{params}[\mathsf{field}] \sim \ell,\; \rho\bigr) 
		 \\ 
		\llbracket g_1 \land g_2 \rrbracket(e, \rho)
		&= (b_1 \land b_2,\; \rho_2), 
		\quad (b_i, \rho_i) = \llbracket g_i \rrbracket(e,\rho_{i-1}),\;
		\rho_0 = \rho
	\end{align*}

\smallskip
\noindent\emph{Negation:} 
	\[
	\pred{\lnot P}(e,\, \rho)
	\;=\; 
	\begin{cases}
		(\lnot b, \rho) & \text{if } b = \pi_1\left(\pred{P}(e,\rho)\right) \in \{\mtrue, \mfalse\}\\
		(\mnoapp, \rho) & \text{otherwise}
	\end{cases}
	\]
	Note that we do not add new bindings to the environment $\rho$: the well-formedness condition ensures that $\lnot P$ introduces no new bindings.

\smallskip	
\noindent\emph{Conjunction:}
\[
\pred{P_1 \land P_2}(e,\, \rho) \;=\; r 
\]
where $(b_1, \rho_1) = \pred{P_1}(e, \rho)$ and 
\[
 r \;=\;
 \begin{cases}
 	(\mfalse, \rho) & \text{if } b_1 = \mfalse \\
 	(\mtrue, \rho_2)  & \text{if } b_1 = \mtrue \land (\mtrue, \rho_2) = \pred{P_2}(e, \rho_1) \\
 	(\mfalse, \rho) & \text{if } (\mfalse, \rho_1) = \pred{P_2}(e, \rho_1)\\
 	(\mnoapp, \rho) & \text{otherwise} 
 \end{cases}
\]
Note that the bindings introduced by $P_1$ are visible in $P_2$ (the conjunction is not commutative). 

\smallskip
\noindent\emph{Disjunction:}
\[
\pred{P_1 \lor P_2}(e,\, \rho) \;=\; (v,\; \rho)
\]
where $(b_1, \rho) = \pred{P_1}(e, \rho)$, $(b_2, \rho) = \pred{P_2}(e, \rho)$, and $v$ is 
\[
v \;=\;
\begin{cases}
	\mtrue  & \text{if } b_1 = \mtrue \lor b_2 = \mtrue\\
	\mfalse & \text{if } b_1 = \mfalse \land b_2 = \mfalse\\
	\mnoapp & \text{otherwise}
\end{cases}
\]
Note that $P_1$ and $P_2$ contain no binding by the well-formedness condition, so the environment always remains the same. 
\smallskip

Hereafter, we write $\pred{P}(e,\rho) = v$, for $v \in K_3$, to mean
$\pi_1(\pred{P}(e,\rho)) = v$, when the resulting environment is immaterial.

\subsubsection{Temporal Formulas Semantics}\label{sec:app:log:sem}
The evaluation function $\sem{\cdot}$ is inductively defined over the syntax of temporal formulas.
Below, we report the semantics for all the constructs except for the bounded response, whose semantics is in \Cref{sec:lang}.
 
\noindent For an event predicate $P$:
\[
\sem{P}(\sigma_k, \rho) =
\begin{cases}
	(\ok,      \rho') &
	\text{if } \pred{P}(e_k, \rho) = (\mtrue, \rho') \\
	(\violated, \rho) & \text{if } \pred{P}(e_k, \rho) = (\mfalse, \rho) \\
	(\pending,  \rho) & \text{if } \pred{P}(e_k, \rho) = (\mnoapp, \rho)
\end{cases}
\]
In case of violation, $\rho$ does not change: no new binding occurs.
If $e_k$ is irrelevant to the current predicate $P$, the result is $\pending$: we postpone the decision.

\noindent For the $\neg$ operator:
\[
\sem{\lnot \phi}(\sigma_k, \rho) 
\;=\;
\begin{cases}
	(\violated, \rho) & \text{if } \sem{\phi}(\sigma_k, \rho) = (\ok, \rho) \\
	(\ok, \rho) & \text{if } \sem{\phi}(\sigma_k, \rho) = (\violated, \rho) \\
	(\pending, \rho) & \text{otherwise}
\end{cases}
\]
By our well-formedness conditions, no bindings are introduced, so $\rho$ remains the same.

\noindent For conjunction:
\[
\sem{\phi_1 \land \phi_2}(\sigma_k,\, \rho)
\;=\; (v,\, \rho_2)
\]
where $(v_1, \rho_1) = \sem{\phi_1}(\sigma_k, \rho)$,
$(v_2, \rho_2) = \sem{\phi_2}(\sigma_k, \rho_1)$, and 
\[
v \;=\;
\begin{cases}
	\violated
	& \text{if } v_1 = \violated
	\text{ or } v_2 = \violated\\
	\pending
	& \text{if } v_1 = \pending
	\text{ or } v_2 = \pending \\
	\ok
	& \text{otherwise}
\end{cases}
\]
The bindings introduced by $\phi_1$ are visible in $\phi_2$.
We adopt an evaluation order from left to right to take care of bindings.

\noindent For the \emph{once} operator:
\[
\sem{{<}\phi}(\sigma_k, \rho)
\;=\; 
\begin{cases}
	(\ok, \rho) & \text{if } \exists i \leq k.\; \pi_1(\sem{\phi}(\sigma_i, \rho)) = \ok \\
	(\pending, \rho) & \text{otherwise}
\end{cases} 
\]
The once operator ${<}\phi$ holds on a prefix $\sigma_k$ if $\phi$ held
at some position $i \leq k$, including the current one. 
It never produces $\violated$: the absence of a past satisfaction yields $\pending$, not a violation, since no finite prefix can witness that $\phi$ will never hold. 
Consequently, ${<}\phi$ is monotone: once $\ok$, it remains $\ok$ on all subsequent prefixes. 
Note that the environment $\rho$ is returned unchanged: the
well-formedness condition ensures that $\phi$ neither introduces nor uses correlation
variables, so its evaluation is independent of $\rho$.

\noindent For the \emph{always} operator:
\[
\sem{G\,\phi}(\sigma_k,\rho) =
\begin{cases}
	(\violated,\rho) & \text{if } \exists i\leq k.\;
	\pi_1(\sem{\phi}(\sigma_i,\rho))=\violated \\
	(\ok,\rho)       & \text{otherwise}
\end{cases}
\]
where $\pi_1$ is the standard projection operator. This operator expresses safety properties: every violation is due to a finite bad prefix.
Consequently, $G\,\phi$ can only be $\ok$ or $\violated$, never $\pending$ because it does not require that something happen in the future.

\subsubsection{Formal Results}

\persi*
\begin{proof}
	We proceed by induction on $j - k \geq 0$.
	
	\medskip
	\noindent\emph{Base case ($j - k = 0$):} trivial, $\sigma_j = \sigma_k$ and the conclusion holds by hypothesis.
	
	\medskip
	\noindent\emph{Inductive step ($n = j - k > 0$):} As the inductive hypothesis,   assume the proposition holds for all pairs $(\sigma_k, \sigma_{j'})$ with
	$j' - k < n$.
	We need to show the proposition for $(\sigma_k, \sigma_j)$ with $j - k = n$.
	Since $n > 0$, the prefix $\sigma_{j-1}$ exists and $\sigma_k$ is a prefix of $\sigma_{j-1}$ with $(j-1) - k =
	n - 1 < n$.	
	By the inductive hypothesis applied to $(\sigma_k, \sigma_{j-1})$ we have:
	\[
	\sem{\pol}(\sigma_{j-1}) = \violated.
	\]
	Now, let $\phi$ be the denotation of the program $\pol$. By the definition of the semantic function, 
 $\phi$ has the form $\bigwedge_{\ell=1}^r G(\psi_\ell)$, which can be rewritten as $G(\psi)$ with $\psi = \bigwedge_{\ell=1}^r \psi_\ell$.
	By the semantics of $G$:
	\[
	\sem{G (\psi)}(\sigma_{j-1}, \emptyset) = \violated
	\implies
	\exists\, h \leq j-1.\;
	\pi_1(\sem{\psi}(\sigma_h, \emptyset)) = \violated.
	\]
	Since $h \leq j - 1 < j$, the same $h$ witnesses $\sem{G\,\psi}(\sigma_j, \emptyset) = \violated$, thus proving the thesis.
\end{proof}

\vioprefix*
\begin{proof}
	The set $S = \{i \leq k \mid \sem{\pol}(\sigma_i) = \violated\}$ is non-empty since $k \in S$ by hypothesis, and finite since $S \subseteq \{1, \ldots, k\}$. Hence $i^* = \min S$ exists and is unique.
\end{proof}
The following proposition ensures that a formula introducing no binding leaves the correlation environment unchanged:
\begin{proposition}[Environment invariance]
	\label{prop:env-inv}
	For every temporal formula $\phi$ such that $\bnd(\phi) = \emptyset$, every finite trace
	$\sigma_k$, and every environment $\rho$, there exists $v \in \B$ such that
	\[
	\sem{\phi}(\sigma_k, \rho) = (v, \rho).
	\]
\end{proposition}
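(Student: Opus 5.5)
We prove the statement by structural induction on $\phi$, generalizing over $\sigma_k$ and $\rho$. The temporal operators that discard bindings are handled by inspecting their semantic clauses directly. The propositional cases need a companion lemma for event predicates.

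\emph{Lemma for predicates.} For every event predicate $P$ with $\bnd(P)=\emptyset$, every event $e$ and every $\rho$, we have $\pred{P}(e,\rho)=(b,\rho)$ for some $b\in K_3$. We prove it by induction on $P$, first proving the analogous fact for guards. A guard $g$ with $\bnd(g)=\emptyset$ contains no binding $\field={?X}$, since $\bnd$ of a conjunction is the union of the parts. Every remaining guard clause ($\mathsf{true}$, $\field\sim\ell$, $\field=X$) returns its input environment, and the conjunction clause threads environments. For atoms we then use the guard fact, or the $\mnoapp$ branch, which returns $\rho$. Negation and disjunction return $\rho$ by definition. For $P_1\land P_2$, every branch of the definition returns either $\rho$ or $\rho_2$. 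Both conjuncts have empty $\bnd$, so the induction hypothesis gives $\rho_1=\rho$ and then $\rho_2=\rho_1=\rho$.

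\emph{Main induction on $\phi$.}
\begin{itemize}
\item Case $\phi=P$: here $\bnd(P)=\emptyset$. The $\ok$ branch returns the $\rho'$ produced by $\pred{P}(e_k,\rho)$, which equals $\rho$ by the lemma. The other two branches return $\rho$ explicitly.
\item Case $\lnot\phi$: every branch returns $\rho$. We must also check that the case analysis is exhaustive, i.e.\ that the guards ``$\sem{\phi}(\sigma_k,\rho)=(\ok,\rho)$'' are not vacuous because of a differing environment. Since $\bnd(\lnot\phi)=\bnd(\phi)=\emptyset$, the induction hypothesis gives that $\sem{\phi}$ returns $\rho$, so the value alone decides the branch.
\item Case $\phi_1\land\phi_2$: $\bnd$ of a conjunction is the union, so both conjuncts have empty $\bnd$. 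The induction hypothesis on $\phi_1$ gives $\rho_1=\rho$. The induction hypothesis on $\phi_2$, applied at $\rho_1=\rho$, gives $\rho_2=\rho$.
\item Cases ${<}\phi$, $G\,\phi$ and $P_1\,\Fbound\,P_2$: their semantic clauses return $\rho$ unchanged explicitly in every branch. No hypothesis is needed, which is consistent with $\bnd$ being $\emptyset$ there by definition.
\end{itemize}

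In each case the value $v$ is the one produced by the corresponding clause, so it lies in $\B$.

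The main obstacle is the conjunction case of the predicate lemma. Its definition has several branches that return $\rho$ or $\rho_2$ depending on pattern matches against $\pred{P_2}(e,\rho_1)$. We must show that these matches cover all outcomes, and that the returned environment is $\rho$ in each branch, which needs both induction hypotheses chained through $\rho_1$.
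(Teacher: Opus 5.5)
Your proof is correct and follows the same structural induction as the paper. The negation, once, always, and bounded-response cases are read off their semantic clauses, the conjunction case chains the two induction hypotheses through $\rho_1 = \rho$, and the predicate case uses the fact that a binding-free predicate returns its input environment. The only differences are these: you prove that last fact by a separate induction over guards and predicates, whereas the paper only asserts it; and your exhaustiveness check in the negation case is unnecessary, since that clause's ``otherwise'' branch already returns $\rho$.
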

\begin{proof}
	We proceed by structural induction on $\phi$.
	For the negation, once, always, and bounded-response cases, the thesis follows immediately
	from the corresponding semantic clauses, which return $\rho$ unchanged.
	For an event predicate $P$, $\bnd(P) = \emptyset$ means that $P$ contains no binding
	$\field = {?X}$, hence $\pred{P}(e_k, \rho)$ returns $\rho$ unchanged; by the semantics of
	predicates as formulas, $\sem{P}(\sigma_k, \rho)$ returns $\rho$ as well.
	For $\phi_1 \land \phi_2$ we have $\bnd(\phi_1) = \bnd(\phi_2) = \emptyset$; 
	by the induction hypothesis on $\phi_1$ we have that $\sem{\phi_1}(\sigma_k, \rho)$ leaves the environment unchanged, so $\phi_2$ is evaluated under $\rho$. 
	The thesis thus follows by the induction hypothesis on $\phi_2$.
\end{proof}
Below, we prove distributivity of $G$ over conjunction 
against the three-valued semantics of \Cref{sec:logic}.
\begin{proposition}[Distributivity of $G$ over $\land$]
	\label{prop:g-conj}
	For all temporal formulas $\phi_1, \phi_2$ such that $\bnd(\phi_1) = \emptyset$, all finite
	traces $\sigma_k$, and all environments $\rho$:
	\[
	\pi_1(\sem{G(\phi_1) \land G(\phi_2)}(\sigma_k, \rho))
	\;=\;
	\pi_1(\sem{G(\phi_1 \land \phi_2)}(\sigma_k, \rho)).
	\]
\end{proposition}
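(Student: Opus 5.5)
Both sides range only over $\{\ok,\violated\}$: by the semantics of $G$ the right-hand side never yields $\pending$, and the left-hand side is a conjunction of two formulas that each never yield $\pending$. So it suffices to show that the left-hand side is $\violated$ if and only if the right-hand side is. We first fix the environments. Since $\bnd(G\,\phi_1)=\emptyset$, evaluating the left-hand side runs $\sem{G\,\phi_1}(\sigma_k,\rho)$, which returns $\rho$ unchanged, and then runs $\sem{G\,\phi_2}(\sigma_k,\rho)$. On the right-hand side, for each prefix $\sigma_i$ the conjunction evaluates $\phi_1$ under $\rho$. By \Cref{prop:env-inv} applied to $\phi_1$ (here we use $\bnd(\phi_1)=\emptyset$), this returns $(v_1,\rho)$, so $\phi_2$ is also evaluated under $\rho$. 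Both sides therefore evaluate $\phi_1$ and $\phi_2$ on every prefix under the same environment $\rho$.

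Now we unfold the definitions. By the conjunction clause, the left-hand side is $\violated$ iff $\sem{G\,\phi_1}(\sigma_k,\rho)=\violated$ or $\sem{G\,\phi_2}(\sigma_k,\rho)=\violated$. By the $G$ clause, this holds iff $\exists i\le k.\ \pi_1(\sem{\phi_1}(\sigma_i,\rho))=\violated$, or $\exists i\le k.\ \pi_1(\sem{\phi_2}(\sigma_i,\rho))=\violated$. The right-hand side is $\violated$ iff $\exists i\le k.\ \pi_1(\sem{\phi_1\land\phi_2}(\sigma_i,\rho))=\violated$. By the conjunction clause, together with the environment observation above (the environment passed to $\phi_2$ is $\rho$), this holds iff $\exists i\le k$ such that $\pi_1(\sem{\phi_1}(\sigma_i,\rho))=\violated$ or $\pi_1(\sem{\phi_2}(\sigma_i,\rho))=\violated$. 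The existential distributes over the disjunction, so the two conditions coincide. In the complementary case both sides are $\ok$, which concludes the proof.

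The only delicate step is the environment bookkeeping. Without $\bnd(\phi_1)=\emptyset$, on the right-hand side $\phi_2$ would be evaluated on each $\sigma_i$ under bindings that $\phi_1$ produced at $\sigma_i$, whereas on the left-hand side $\phi_2$ is evaluated under the outer $\rho$. The hypothesis together with \Cref{prop:env-inv} rules out this discrepancy. Everything else is a direct unfolding of the definitions.
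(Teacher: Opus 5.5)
Your proof is correct and follows essentially the same route as the paper's. Both arguments restrict attention to the $\violated$ value because $G$-formulas and conjunctions of them are two-valued, unfold the conjunction and $G$ clauses, invoke \Cref{prop:env-inv} with $\bnd(\phi_1)=\emptyset$ so that $\phi_2$ is evaluated under $\rho$ on every prefix of the right-hand side, and conclude by distributing the existential over the disjunction; your remark that agreement on $\violated$ already forces agreement on $\ok$ closes the argument slightly more tightly than the paper's ``the $\ok$ case is similar.''
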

\begin{proof}
	Since $G$ always results in $\ok$ or $\violated$, never $\pending$, we show that  	the two sides agree on $\violated$; the $\ok$ case is similar. 
	By the semantics of $G$ and conjunction:
	\begin{align*}
		&\pi_1(\sem{G(\phi_1) \land G(\phi_2)}(\sigma_k, \rho))
		= \violated \\
		&\iff
		\pi_1(\sem{G(\phi_1)}(\sigma_k, \rho)) = \violated
		\;\lor\; \\
		& \qquad \quad \pi_1(\sem{G(\phi_2)}(\sigma_k, \rho)) = \violated \\
		&\iff
		\bigl(\exists\, i \leq k.\;
		\pi_1(\sem{\phi_1}(\sigma_i, \rho)) = \violated\bigr)
		\;\lor\; \\
		& \qquad \quad
		\bigl(\exists\, j \leq k.\;
		\pi_1(\sem{\phi_2}(\sigma_j, \rho)) = \violated\bigr)
	\end{align*}
	and, by \Cref{prop:env-inv}, the hypothesis $\bnd(\phi_1) = \emptyset$ ensures that in
	$\sem{\phi_1 \land \phi_2}(\sigma_i, \rho)$ the second conjunct is evaluated under $\rho$ as
	well, hence:
	\begin{align*}
		&\pi_1(\sem{G(\phi_1 \land \phi_2)}(\sigma_k, \rho))
		= \violated \\
		&\iff
		\exists\, i \leq k.\;
		\pi_1(\sem{\phi_1 \land \phi_2}(\sigma_i, \rho))
		= \violated \\
		&\iff
		\exists\, i \leq k.\;
		\pi_1(\sem{\phi_1}(\sigma_i, \rho)) = \violated
		\;\lor\; \\
		& \qquad\qquad \pi_1(\sem{\phi_2}(\sigma_i, \rho)) = \violated.
	\end{align*}
	The two conditions are logically equivalent by the
	propositional equivalence
	$
	\bigl(\exists\, i.\; A(i)\bigr) \lor
	\bigl(\exists\, j.\; B(j)\bigr)
	\;\iff\;
	\exists\, i.\; A(i) \lor B(i).
	$
\end{proof}

\begin{figure*}[t]
	\begin{gather*}
		\inference[\rulename{PEvent}]{ }{\Gamma \vdash (E, g) : \Gamma(E)} 
		\qquad 
		\inference[\rulename{PNeg}]{ \Gamma \vdash P : \tau }{\Gamma \vdash \neg P : \tau}
		\qquad
		\inference[\rulename{PAnd}]{\Gamma \vdash P_1 : \tau_1 \quad \Gamma \vdash P_2 : \tau_2}{\Gamma \vdash P_1 \land P_2 : \tau_1 \sqcup \tau_2}
		\qquad
		\inference[\rulename{POr}]{\Gamma \vdash P_1 : \tau_1 \quad \Gamma \vdash P_2 : \tau_2}{\Gamma \vdash P_1 \lor P_2 : \tau_1 \sqcup \tau_2}
		\\
		\inference[\rulename{Comp}]{\Gamma;\Theta \vdash p_1 : \tau_1 \quad \Gamma;\Theta \vdash p_2 : \tau_2}{\Gamma;\Theta \vdash p_1 \land p_2 : \tau_1 \sqcup \tau_2}
		\qquad 
		\inference[\rulename{Forbid}]{ \Gamma \vdash P : \tau }{ \Gamma;\Theta \vdash \forbid\ P : \tau }
		\qquad
		\inference[\rulename{ForbidWhen}]{\Gamma \vdash P : \tau \quad \tau_H = \Theta(H)}{\Gamma;\Theta \vdash \forbid\ P\ \when\ H : \tau \sqcup \tau_H}
		\\
		\inference[\rulename{Response}]{ \Gamma \vdash P_1 : \tau_1 \quad \Gamma \vdash P_2 : \tau_2}{\Gamma;\Theta \vdash \when\ P_1\ \within\ \Delta\ P_2 : T}
		\qquad
		\inference[\rulename{DEmpty}]{ }{\Gamma; \Theta \vdash \epsilon : \Theta} 
		\qquad 
		\inference[\rulename{DLet1}]{ H \notin \mathsf{dom}(\Theta) \\ \Gamma \vdash P : \tau \quad  \Gamma;\Theta[H \mapsto \tau]\vdash d^* : \Theta'}{\Gamma; \Theta \vdash \left(\mlet\ H := \happened(P)\right) d^* : \Theta'}
		\\
		\inference[\rulename{DLet2}]{ H \notin \mathsf{dom}(\Theta) \quad H' \in \mathsf{dom}(\Theta) \\ \Gamma \vdash P : \tau \quad \Gamma;\Theta[H \mapsto \tau \sqcup \Theta(H')]\vdash d^* : \Theta'}{\Gamma; \Theta \vdash \left(\mlet\ H := \happened(P)\ \when\ H'\right) d^* : \Theta'}
		\qquad
		\inference[\rulename{Prog}]{\Gamma;\emptyset \vdash d^* : \Theta \quad \Gamma; \Theta \vdash p : \tau \quad \Allowed(\tau, A)}{\Gamma \vdash \langle d^*\, p,\; A \rangle : \checkmark}
	\end{gather*}
	\caption{Typing rules for \thetool DSL.}\label{fig:full:typing}	
\end{figure*}

\subsubsection{Type System}\label{sec:app:typing}
\Cref{fig:full:typing} shows all the typing rules of our type system.
We briefly comment on the rules not already discussed in \Cref{sec:types}.
Rule \rulename{PNeg} obtains the type of $P$ and returns it since the negation operator does not change the way a violation may occur.
Rule \rulename{POr} is similar to the rule for conjunction: it takes the join of the types of the two sub-terms so as to ensure that a violation can occur in the worst way possible.

The typing rules for historical-predicate declarations are characterized by the judgment $\Gamma;\Theta \vdash d^* : \Theta'$,
where $\Theta$ is the violation-type environment accumulated so far and $\Theta'$ is the environment produced after processing $d^*$.
Rule \rulename{DEmpty} returns $\Theta$ unchanged when the declaration sequence is empty.
Rule \rulename{DLet1} derives the type $\tau$ of $P$ under $\Gamma$, extends $\Theta$ with the
binding $H \mapsto \tau$, and processes the remaining definitions $d^*$ in the extended environment, producing $\Theta'$. 
Rule \rulename{DLet2} follows the same approach for conditional declarations: 
it requires $H' \in \mathsf{dom}(\Theta)$ to ensure that $H'$ has already been declared and to guarantee
acyclicity. It extends $\Theta$ with $H \mapsto \tau \sqcup \Theta(H')$, where the join reflects that the violation type of
$H$ depends on both the predicate $P$ and the historical condition $H'$.  
Both rules \rulename{DLet1} and \rulename{DLet2} require $H \notin \mathsf{dom}(\Theta)$ to prevent redefinition of identifiers.
Note that both judgments only assign types in $\{C, O\}$: the type of an event predicate is either taken from $\Gamma$, whose codomain is $\{C, O\}$, or obtained by joining such types, and $\Theta$ is built from event-predicate types in the same way.
The type $T$ is therefore introduced only by rule \rulename{Response}.

Before proving type soundness, we prove an auxiliary result:
\begin{lemma}\label{lem:type:pred}
	Let $e$ be an event; if $\Gamma \vdash P : \tau$ and $\pred{P}(e,\,\emptyset) \neq \mnoapp$, then 
	$\Gamma(e.\mathit{type}) \sqsubseteq \tau$.
	\begin{proof}
	By induction on the derivation $\Gamma \vdash P : \tau$ and then by cases on the last rule applied.
	
	\medskip
	\noindent\emph{Case} \rulename{PEvent}: We have that $P = (E, g)$ and $\tau = \Gamma(E)$. 
	By the semantics we have $\pred{(E,g)}(e, \emptyset) \neq \mnoapp$ implies $e.\mathit{type} = E$. Therefore $\Gamma(E) = \Gamma(e.\mathit{type}) = \tau$.
	
	\medskip
	\noindent\emph{Case} \rulename{PNeg}: We have that $P = \lnot P'$, $\Gamma \vdash P' : \tau$. 
	By the semantics of negation, $\pred{\lnot P}(e, \emptyset) \neq \mnoapp$ implies $\pred{P'}(e, \emptyset) \neq \mnoapp$. By the induction hypothesis applied to $P'$, $\Gamma(e.\mathit{type}) \sqsubseteq \tau$.
	
	\medskip
	\noindent\emph{Case} \rulename{PAnd}: We have that $P = P_1 \land P_2$, $\tau = \tau_1 \sqcup \tau_2$, $\Gamma \vdash P_1 : \tau_1$, and $\Gamma \vdash P_2 : \tau_2$. 
	By hypothesis, 	$\pred{P_1 \land P_2}(e, \emptyset) \neq \mnoapp$, and we proceed by cases on $\pred{P_1}(e, \emptyset)$.
	If $\pred{P_1}(e, \emptyset) \neq \mnoapp$, we can apply the induction hypothesis to $P_1$, obtaining
	$\Gamma(e.\mathit{type}) \sqsubseteq \tau_1 \sqsubseteq \tau$.
	Otherwise, if $\pred{P_1}(e, \emptyset) = \mnoapp$, by the semantics of conjunction, the hypothesis
	$\pred{P_1 \land P_2}(e, \emptyset) \neq \mnoapp$ implies $\pred{P_2}(e, \emptyset) = \mfalse \neq \mnoapp$. By the induction hypothesis applied to $P_2$,
	$\Gamma(e.\mathit{type}) \sqsubseteq \tau_2 \sqsubseteq \tau$.
	In all cases $\Gamma(e.\mathit{type}) \sqsubseteq \tau$.
	
	\medskip
	\noindent\emph{Case} \rulename{POr}: We have that $P = P_1 \lor P_2$, $\tau = \tau_1 \sqcup \tau_2$, $\Gamma \vdash P_1 : \tau_1$, and $\Gamma \vdash P_2 : \tau_2$. 
	By hypothesis, $\pred{P_1 \lor P_2}(e, \emptyset) \neq \mnoapp$, and we proceed by cases on $\pred{P_1}(e, \emptyset)$.
	If $\pred{P_1}(e, \emptyset) = \mtrue \neq \mnoapp$, by the induction hypothesis applied to $P_1$, we have $\Gamma(e.\mathit{type}) \sqsubseteq \tau_1 \sqsubseteq \tau$.
	Otherwise, if $\pred{P_1}(e, \emptyset) = \mfalse$ or $\pred{P_1}(e, \emptyset) = \mnoapp$, by the semantics of disjunction, the hypothesis $\pred{P_1 \lor P_2}(e,
	\emptyset) \neq \mnoapp$ implies $\pred{P_2}(e, \emptyset) \neq \mnoapp$. 
	By the induction hypothesis applied to $P_2$, $\Gamma(e.\mathit{type}) \sqsubseteq \tau_2 \sqsubseteq \tau$.
	 In all cases $\Gamma(e.\mathit{type}) \sqsubseteq \tau$.
	\end{proof}
\end{lemma}
Finally, we prove \Cref{thm:type:cor}:
\typecor*
\begin{proof}
Let $\phi = \dsl{p}^D$ be the formula denoted by $p$, where $D$ is the definition environment obtained from $d^*$. Let $\sigma_{i^*}$ be the first violating prefix of $\sigma_k$. We proceed by induction on the derivation $\Gamma;\Theta \vdash p : \tau$ and by cases on the last rule applied.

\medskip
\noindent\emph{Case} \rulename{Forbid}: We know that $p = \texttt{forbid}\ P$, $\phi = G(\lnot P)$, $\tau = \tau_P$, and $\Gamma \vdash P : \tau_P$. 
Since $P$ is an event predicate, $\tau_P \in \{C,O\}$ by the predicate typing rules, hence $\tau \in \{C, O\}$.
By the semantics, $\sem{\pol}(\sigma_{i^*}) = \violated$ implies $\pi_1(\sem{\lnot P}(\sigma_{i^*}, \emptyset)) =
\violated$, that in turn implies $\pred{P}(e_{i^*}, \emptyset) = \mfalse$. Since $\pred{P}(e_{i^*}, \emptyset) \neq \mnoapp$, by \Cref{lem:type:pred} and since $\tau_P \in \{C,O\}$, $\Gamma(e_{i^*}.\mathit{type}) \sqsubseteq \tau_P = \tau$.

\medskip
\noindent\emph{Case} \rulename{ForbidWhen}: $p = \texttt{forbid}\ P\ \texttt{when}\ H$, $\phi = G(\lnot(D(H) \land P))$ where $D(H) = {<}\phi_H$ for some $\phi_H$, $\tau = \Theta(H) \sqcup \tau_P$,  $\Gamma \vdash P : \tau_P$ and $\Theta(H) = \tau_H$.
By the semantics, $\sem{\pol}(\sigma_{i^*}) = \violated$ implies $\pi_1(\sem{\lnot(D(H) \land P)}(\sigma_{i^*},
\emptyset)) = \violated$. By the semantics of negation, this implies $\pi_1(\sem{D(H) \land P}(\sigma_{i^*}, \emptyset)) = \ok$, that in turn implies $\pi_1(\sem{D(H)}(\sigma_{i^*}, \emptyset)) = \ok$ and $\pi_1(\sem{P}(\sigma_{i^*}, \emptyset)) = \ok$. 
The first condition means that $D(H) = {<}\phi_H$  holds on $\sigma_{i^*}$, meaning
there exists $j \leq i^*$ such that $\phi_H$ was satisfied at some past position $j$.
This means that a relevant past event occurred, but it does not identify the offending event.
The second condition $\pred{P}(e_{i^*}, \emptyset) = \mtrue$ identifies the offending event because 
it is satisfied by the current event $e_{i^*}$, which is the event that triggers the
violation at the first violating prefix.
Since $\pred{P}(e_{i^*}, \emptyset) = \mtrue \neq \mnoapp$, by \Cref{lem:type:pred} applied to
$P$, $\Gamma(e_{i^*}.\mathit{type}) \sqsubseteq \tau_P$.
Since $\tau_P \sqsubseteq \tau_H \sqcup \tau_P = \tau$, we conclude $\Gamma(e_{i^*}.\mathit{type}) \sqsubseteq \tau$; moreover $\tau_P, \tau_H \in \{C,O\}$, hence $\tau \in \{C,O\}$. 

\medskip
\noindent\emph{Case} \rulename{Response}: We know that $p = \when\ P_1\ \within\ \Delta\ P_2$, $\phi = G(P_1
\Fbound P_2)$, and $\tau = T$.
Since the resulting type is $\tau = T$, it suffices to establish the second disjunct of the thesis, namely that the violation is due to a deadline expiration. 
By the semantics, we have that $\sem{\pol}(\sigma_{i^*}) = \violated$ implies $\pi_1(\sem{P_1 \Fbound P_2}(\sigma_{i^*}, \emptyset)) = \violated$.
By the semantics of the operator $\Fbound$, this means there exists $(i, \rho_i, d_i) \in \mathcal{I}(\sigma_{i^*}, \emptyset)$ such that $e_{i^*}.t > d_i$ and $\neg\,\sat(i, \rho_i, d_i,
\sigma_{i^*})$, which is exactly the thesis, with $P_1 \Fbound P_2$ itself as the witness sub-formula.

\medskip
\noindent\emph{Case} \rulename{Comp}: We have that $p = p_1 \land p_2$, $\phi = \phi_1 \land \phi_2$, $\tau = \tau_1 \sqcup \tau_2$,  $\Gamma;\Theta \vdash p_1 : \tau_1$ and $\Gamma;\Theta \vdash p_2 : \tau_2$.
By a straightforward auxiliary induction on the syntax of $p$ we have $\bnd(\dsl{p}^D) = \emptyset$: the three base cases are $G$-formulas and $\bnd(G\,\psi) = \emptyset$ by definition, while $\bnd(\phi_1 \land \phi_2) = \bnd(\phi_1) \cup \bnd(\phi_2)$.
Hence, \emph{both} conjuncts are evaluated under the empty environment.
By the semantics of conjunction, $\sem{\pol}(\sigma_{i^*}) = \violated$ implies $\pi_1(\sem{\phi_1}(\sigma_{i^*},
\emptyset)) = \violated$ or $\pi_1(\sem{\phi_2}(\sigma_{i^*}, \emptyset)) = \violated$. 
Consider the case where $\phi_1$ is violated; the other one is symmetric.
Then $\sigma_{i^*}$ is also the \emph{first} violating prefix of $p_1$: if $\phi_1$ were violated at some $j < i^*$, the same clause for conjunction would give $\sem{\pol}(\sigma_j) = \violated$, contradicting the minimality of $i^*$.
The induction hypothesis for $p_1$ thus applies at $\sigma_{i^*}$, and we distinguish two cases on $\tau_1$.

If $\tau_1 \sqsubseteq O$, the induction hypothesis is the first item of the thesis for $p_1$: there is an offending event $e_{i^*}$ with $\Gamma(e_{i^*}.\mathit{type}) \sqsubseteq \tau_1$, and the same event is offending for $p_1 \land p_2$.
If $\tau \sqsubseteq O$, we must prove the first item, and $\Gamma(e_{i^*}.\mathit{type}) \sqsubseteq \tau_1 \sqsubseteq \tau$.
If $\tau = T$, we must prove the second item, and $\Gamma(e_{i^*}.\mathit{type}) \sqsubseteq \tau_1 \sqsubseteq O$ gives its first possibility.

If instead $\tau_1 = T$, then $\tau = T \sqcup \tau_2 = T$, so we must prove the second item of the thesis.
Since $\tau_1 = T$, the induction hypothesis is the second item of the thesis for $p_1$.
It gives us one of two possibilities.
Either there is an offending event $e_{i^*}$ for $p_1$ with $\Gamma(e_{i^*}.\mathit{type}) \sqsubseteq O$, and the same event is offending for $p_1 \land p_2$.
Or there exists a bounded-response sub-formula $P_1 \Fbound P_2$ of $\dsl{p_1}^D$ and an instance $(i, \rho_i, d_i) \in \mathcal{I}(\sigma_{i^*}, \emptyset)$ such that $e_{i^*}.t > d_i$ and $\neg\,\sat(i, \rho_i, d_i, \sigma_{i^*})$.
Since $\dsl{p}^D = \dsl{p_1}^D \land \dsl{p_2}^D$, the same sub-formula is one of $\dsl{p}^D$, and the same instance witnesses the thesis.
\end{proof}

\subsection{Monitor Synthesis}
\subsubsection{Normal Form}\label{sec:app:norm:form}
Given a DSL program $\pol = \langle d^*\,p, A\rangle$ and the definition environment $D$ obtained from $d^*$, we define our normalization function $\NF$ as follows:
\begin{align*}
	\NF(\langle d^*\,p, A\rangle) \;=\;& \NF(p) \\
	\NF(\forbid\ P) \;=\; & G(\lnot P)\\
	\NF(\forbid\ P\ \when\ H) \;=\; & G(\lnot (D(H) \land P))\\
	\NF(\when\ P_1\ \within\ \Delta\ P_2) \;=\; & G(P_1 \Fbound P_2)\\
	\NF(p_1 \land p_2) \;=\; & G(\phi'_1 \land \phi'_2)
\end{align*}
where $\NF(p_1) = G(\phi'_1)$, and $\NF(p_2) = G(\phi'_2)$.
Note that for every $p$, $\NF(p) = G(\phi')$ for a unique formula $\phi'$.
Hereafter, we denote by ${<}\phi_H$ the past formula $D(H)$ associated with the historical predicate $H$ in the definition environment $D$.

The following result ensures that our normalization process preserves the semantics of a DSL program:
\begin{theorem}[Semantic preservation]
	\label{th:nf-sem}
For every program $\pol = \langle d^*\,p, A \rangle$,
let $\phi = \dsl{p}^{D(d^*)}$ 
be the formula denoted by $\pol$. Then for every finite trace
$\sigma_k$:
\[
\pi_1(\sem{\phi}(\sigma_k, \emptyset))
\;=\;
\pi_1(\sem{\NF(p)}(\sigma_k, \emptyset)).
\]
\end{theorem}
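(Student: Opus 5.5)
The proof goes by structural induction on the policy body $p$, with a strengthened hypothesis. For every $p$ we show that $\NF(p) = G(\phi'_p)$ with $\bnd(\phi'_p) = \emptyset$, and that for every finite trace $\sigma_k$
\[
\pi_1(\sem{\dsl{p}^D}(\sigma_k,\emptyset)) \;=\; \pi_1(\sem{\NF(p)}(\sigma_k,\emptyset)).
\]
The side condition on $\bnd$ is what allows \Cref{prop:g-conj} to be applied in the inductive step, so it has to be carried along.

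\emph{Base cases.} For the three base clauses, $\dsl{p}^D$ and $\NF(p)$ are syntactically identical. Both are $G(\lnot P)$, $G(\lnot(D(H)\land P))$ and $G(P_1\Fbound P_2)$ respectively, so the equality is trivial. For the side condition, $\bnd(\lnot P) = \bnd(P)$, and well-formedness (rule \rulename{FNeg}, via \rulename{FAlways}) forces this to be $\emptyset$. Likewise $\bnd(\lnot(D(H)\land P)) = \emptyset$ by the same rule. Finally, $\bnd(P_1\Fbound P_2)=\emptyset$ holds by definition.

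\emph{Inductive case $p = p_1 \land p_2$.} By the induction hypotheses, $\NF(p_i) = G(\phi'_i)$ with $\bnd(\phi'_i)=\emptyset$. The LHS is $\sem{\dsl{p_1}^D\land\dsl{p_2}^D}$.

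The first step is to observe that the three-valued conjunction depends only on the first projections of its conjuncts, together with the environment threaded from the first into the second. Since $\bnd(\dsl{p_1}^D)=\emptyset$ (every denotation is a conjunction of $G$-formulas), \Cref{prop:env-inv} ensures that $\dsl{p_2}^D$ is evaluated under $\emptyset$. The same holds for $G(\phi'_1)\land G(\phi'_2)$. Applying the induction hypotheses componentwise then gives
\[
\pi_1\sem{\dsl{p}^D}(\sigma_k,\emptyset) \;=\; \pi_1\sem{G(\phi'_1)\land G(\phi'_2)}(\sigma_k,\emptyset).
\]
Next, \Cref{prop:g-conj}, whose hypothesis $\bnd(\phi'_1)=\emptyset$ is exactly the strengthened induction hypothesis, turns this into $\pi_1\sem{G(\phi'_1\land\phi'_2)}(\sigma_k,\emptyset) = \pi_1\sem{\NF(p)}(\sigma_k,\emptyset)$. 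Finally, $\bnd(\phi'_1\land\phi'_2)=\emptyset$ is immediate from the definition of $\bnd$, which re-establishes the invariant.

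The main obstacle is bookkeeping rather than depth. We must make sure that the conjunction case of $\sem{\cdot}$ can be rewritten using only first projections, which requires the environment-invariance argument in both expressions. We must also check that $\bnd(\phi')=\emptyset$ is genuinely preserved, including the \forbid\ base cases, where it relies on well-formedness forbidding bindings under negation. The count of normalization steps, one per clause, follows from the same induction, since each clause contributes exactly one defining equation of $\NF$.
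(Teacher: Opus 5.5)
Your proposal is correct and follows the paper's proof. Both argue by structural induction on $p$, with syntactic identity in the three base cases; for $p_1 \land p_2$, both use \Cref{prop:env-inv} to evaluate the second conjunct under $\emptyset$ on each side, then apply the induction hypotheses componentwise, and finish with \Cref{prop:g-conj}.

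The only difference is that you carry $\bnd(\phi'_p)=\emptyset$ explicitly as an induction invariant and ground it in well-formedness (\rulename{FNeg}). This is slightly more careful than the paper's closing remark that $\phi'_1$ is built from negations and bounded responses and so introduces no bindings: since $\bnd(\lnot\phi)=\bnd(\phi)$, that remark itself silently relies on well-formedness.
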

\begin{proof}
We proceed by structural induction on the syntax of $p$, showing that $\pi_1(\sem{\phi}(\sigma_k, \emptyset)) = \pi_1(\sem{\NF(p)}(\sigma_k, \emptyset))$ for every finite trace $\sigma_k$.
	
	\medskip
	\noindent\emph{Case} \forbid\ $P$: 
	By the definition of $\dsl{\cdot}^D$ we have $\phi = G(\lnot P)$; since $\NF(p) = G(\lnot P)$,
	$\phi$ coincides syntactically with the normal form and the thesis holds trivially.
	
	\medskip
	\noindent\emph{Case} \forbid\ $P$ \when\ $H$:
	By the definition of $\dsl{\cdot}^D$ we have $\phi = G(\lnot(D(H) \land P))$, which is exactly
	$\NF(p)$; hence $\phi$ coincides syntactically with the normal form.
	
	\medskip
	\noindent\emph{Case} \when\ $P_1$ \within\ $\Delta$ $P_2$: 
	By the definition of $\dsl{\cdot}^D$ we have $\phi = G(P_1 \Fbound P_2)$; since $\NF(p) = \phi$,
	$\phi$ coincides syntactically with the normal form and the thesis holds trivially.
	
	\medskip
	\noindent\emph{Case} $p_1 \land p_2$: By the definition of $\dsl{\cdot}^D$ we have
	$\phi = \phi_1 \land \phi_2$, where $\phi_i = \dsl{p_i}^{D(d^*)}$ for $i \in \{1,2\}$; 
	by the definition of $\NF$ on conjunctions, we have $\NF(p) = G(\phi'_1 \land \phi'_2)$
	where $\phi'_i$ is such that $\NF(p_i) = G(\phi'_i)$.
	By the definition, $\dsl{\cdot}^D$ produces a $G$-formula or a conjunction of $G$-formulas, and
	by the definition of $\bnd$ on $G$-formulas we have $\bnd(\phi_i) = \emptyset$, and
	$\bnd(G(\phi'_i)) = \emptyset$ as well.
	By \Cref{prop:env-inv}, the evaluation of $\phi_1$ and of $G(\phi'_1)$ leaves the environment unchanged, so in both $\sem{\phi_1 \land \phi_2}(\sigma_k, \emptyset)$ and $\sem{G(\phi'_1) \land G(\phi'_2)}(\sigma_k, \emptyset)$ the second conjunct is evaluated under $\emptyset$ as well. Hence, by the semantics of conjunction, the value of each of the two formulas is determined by the values of its conjuncts. By the induction hypothesis and the definition of $\phi'_i$, for $i \in \{1,2\}$:
	\[
	\pi_1(\sem{\phi_i}(\sigma_k, \emptyset)) =
	\pi_1(\sem{\NF(p_i)}(\sigma_k, \emptyset)) =
	\pi_1(\sem{G(\phi'_i)}(\sigma_k, \emptyset)).
	\]
	Hence:
	\begin{align*}
		\pi_1(\sem{\phi}(\sigma_k, \emptyset))
		&= \pi_1(\sem{G(\phi'_1) \land G(\phi'_2)}(\sigma_k,
		\emptyset)) \\
		&= \pi_1(\sem{G(\phi'_1 \land \phi'_2)}(\sigma_k,
		\emptyset)) \\
		&= \pi_1(\sem{\NF(p)}(\sigma_k, \emptyset))
	\end{align*}
	where the first step follows from the argument above, the second from \Cref{prop:g-conj}, and
	the third from the definition of $\NF$ on conjunctions. The hypothesis of \Cref{prop:g-conj} holds: $\phi'_1$ is built
	from bounded-response formulas and negations, and neither introduces bindings, so
	$\bnd(\phi'_1) = \emptyset$.
\end{proof}

\subsubsection{Safety Monitor}\label{sec:app:smon}
We formally define the evaluation function $\eval{\psi}{\hm}{e} \in K_3$ inductively on the structure of the conjunct $\psi$ that can be either $P$ or ${<}\phi_H \land P$:
\begin{align*}
	\eval{P}{\hm}{e} \;=\;& \pi_1(\pred{P}(e, \emptyset)) \\
	\eval{{<}\phi_H \land P}{\hm}{e} \;=\; & 
	\begin{cases}
		\mfalse & \hm({<}\phi_H) = \pending \\
		\pi_1(\pred{P}(e, \emptyset)) & \text{otherwise}
	\end{cases}
\end{align*}

\begin{lemma}[Safety evaluation soundness]\label{lem:eval-sound}
	Given a DSL program $\pol$, for every safety conjunct $\psi_i$ in $\NF(\pol)$, finite trace $\sigma_k$ with $k \geq 1$, and historical memory $\hm$ such that for all ${<}\phi_H \in
	\pastF(\pol)$:
	\[
	\hm({<}\phi_H) = \ok \;\implies\; \pi_1(\sem{{<}\phi_H}(\sigma_k, \emptyset)) = \ok,
	\]
	we have:
	\[
	\eval{\psi_i}{\hm}{e_k} = \mtrue \;\implies\; \pi_1(\sem{\psi_i}(\sigma_k, \emptyset)) = \ok.
	\]
\end{lemma}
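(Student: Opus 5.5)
The proof is a case analysis on the syntactic shape of the safety conjunct $\psi_i$. By the definition of $\NF$ in \Cref{sec:app:norm:form}, $\psi_i$ is either an event predicate $P$ (from a \forbid\ clause) or a formula ${<}\phi_H \land P$ (from a \forbid\ \when\ clause). In each case we unfold $\eval{\psi_i}{\hm}{e_k}$ according to its definition in \Cref{sec:app:smon} and compare it with the clauses of $\sem{\cdot}$ in \Cref{sec:app:log:sem}.

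\emph{Case $\psi_i = P$.} By definition, $\eval{P}{\hm}{e_k} = \pi_1(\pred{P}(e_k,\emptyset))$, so the hypothesis gives $\pred{P}(e_k,\emptyset) = (\mtrue,\rho')$ for some $\rho'$. The semantics of an event predicate viewed as a temporal formula then yields $\sem{P}(\sigma_k,\emptyset) = (\ok,\rho')$, hence $\pi_1(\sem{P}(\sigma_k,\emptyset)) = \ok$. The assumption $k\ge 1$ is used only to guarantee that $e_k$ exists.

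\emph{Case $\psi_i = {<}\phi_H \land P$.} Since $\eval{\psi_i}{\hm}{e_k} = \mtrue \neq \mfalse$, the first branch of the definition is excluded. Therefore $\hm({<}\phi_H) = \ok$, and $\pi_1(\pred{P}(e_k,\emptyset)) = \mtrue$. By the hypothesis on $\hm$ we get $\pi_1(\sem{{<}\phi_H}(\sigma_k,\emptyset)) = \ok$. The semantics of once returns the environment unchanged, so $\sem{{<}\phi_H}(\sigma_k,\emptyset) = (\ok,\emptyset)$. By the left-to-right evaluation rule for conjunction, $P$ is then evaluated under $\emptyset$. As in the first case, $\sem{P}(\sigma_k,\emptyset) = (\ok,\rho')$. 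Both conjuncts are $\ok$, so the conjunction clause gives value $\ok$: neither component is $\violated$ or $\pending$.

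The only point that needs care is this environment threading in the second case. We need the environment passed to $P$ to be exactly $\emptyset$, because that is what $\eval{\cdot}{\hm}{\cdot}$ uses. This follows either from the once clause directly or from \Cref{prop:env-inv}, since $\bnd({<}\phi_H)=\emptyset$. Everything else is direct unfolding of definitions, so I expect no genuine obstacle.
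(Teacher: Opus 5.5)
Your proposal is correct and follows the paper's proof: both do the same two-case analysis on whether $\psi_i$ is $P$ or ${<}\phi_H \land P$, unfolding $\eval{\cdot}{\hm}{\cdot}$ and the semantics of predicates, once, and conjunction. The only difference is that you make the environment threading in the conjunction case explicit (the once clause returns $\emptyset$, so $P$ is evaluated under $\emptyset$), whereas the paper leaves this step implicit.
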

\begin{proof}
	By cases on the structure of $\psi_i$.
	
	\medskip
	\noindent\emph{Case $\psi_i = P$}: 
	by definition of $\eval{\cdot}{\cdot}{\cdot}$, $\eval{P}{\hm}{e_k} = \pred{P}(e_k, \emptyset) = \mtrue$. 
	By the semantics of predicates as formulas, $\pred{P}(e_k, \emptyset) = \mtrue$ implies $\pi_1(\sem{P}(\sigma_k, \emptyset))
	= \ok$. 
	
	\medskip
	\noindent\emph{Case $\psi_i = {<}\phi_H \land P$}: 
	By definition, $\eval{{<}\phi_H \land P}{\hm}{e_k} = \mtrue$ requires
	$\hm({<}\phi_H) = \ok$ and $\pred{P}(e_k,\emptyset) = \mtrue$. 
	By the hypothesis on $\hm$, $\hm({<}\phi_H) = \ok$ implies $\pi_1(\sem{{<}\phi_H}(\sigma_k, \emptyset)) = \ok$.
	By the semantics of predicates as formulas, $\pred{P}(e_k, \emptyset) = \mtrue$ implies
	$\pi_1(\sem{P}(\sigma_k, \emptyset)) = \ok$. 
	By the semantics of conjunction, we conclude:
	\[
	\pi_1(\sem{{<}\phi_H \land P}(\sigma_k,
	\emptyset)) = \ok.
	\]
\end{proof}

\begin{lemma}[Safety evaluation completeness]\label{lem:eval-compl}
	Given a DSL program $\pol$, for every safety conjunct $\psi_i$ in $\NF(\pol)$,
	finite trace $\sigma_k$ with $k \geq 1$, and historical memory $\hm$ such that for all ${<}\phi_H \in
	\pastF(\pol)$:
	\[
	\pi_1(\sem{{<}\phi_H}(\sigma_k, \emptyset)) = \ok
	\;\implies\;
	\hm({<}\phi_H) = \ok,
	\]
	if $\pi_1(\sem{\psi_i}(\sigma_k, \emptyset)) = \ok$
	and $\pred{P}(e_k, \emptyset) \neq \mnoapp$, where $P$ is the event predicate occurring in
	$\psi_i$, then:
	\[
	\eval{\psi_i}{\hm}{e_k} = \mtrue.
	\]
\end{lemma}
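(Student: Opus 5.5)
The proof mirrors the one of \Cref{lem:eval-sound}: we go by cases on the structure of $\psi_i$, which in $\NF(\pol)$ is either an event predicate $P$ or a formula ${<}\phi_H \land P$. In both cases we start from $\pi_1(\sem{\psi_i}(\sigma_k,\emptyset)) = \ok$ and unfold the semantic clauses until we reach $\eval{\psi_i}{\hm}{e_k}$.

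\emph{Case $\psi_i = P$.} By the clause for event predicates as formulas, $\pi_1(\sem{P}(\sigma_k,\emptyset)) = \ok$ holds exactly when $\pred{P}(e_k,\emptyset) = \mtrue$. The $\mfalse$ and $\mnoapp$ outcomes yield $\violated$ and $\pending$ instead. Since $\eval{P}{\hm}{e_k} = \pi_1(\pred{P}(e_k,\emptyset))$ by definition, the thesis is immediate. In this case the hypotheses on $\hm$ and on $\mnoapp$ are not needed.

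\emph{Case $\psi_i = {<}\phi_H \land P$.} We apply the conjunction clause. The result $\ok$ requires that neither conjunct be $\violated$ nor $\pending$, so both $\pi_1(\sem{{<}\phi_H}(\sigma_k,\emptyset)) = \ok$ and $\pi_1(\sem{P}(\sigma_k,\rho_1)) = \ok$ hold, where $\rho_1$ is the environment returned by the first conjunct. The once clause returns its input environment unchanged, so $\rho_1 = \emptyset$; \Cref{prop:env-inv} also applies, since $\bnd({<}\phi_H)=\emptyset$. Hence $P$ is evaluated under $\emptyset$, and the previous case gives $\pred{P}(e_k,\emptyset)=\mtrue$. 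The hypothesis on $\hm$ turns the first fact into $\hm({<}\phi_H) = \ok$, so the $\pending$ branch in the definition of $\eval{\cdot}{\cdot}{\cdot}$ is not taken. Therefore $\eval{{<}\phi_H \land P}{\hm}{e_k} = \pi_1(\pred{P}(e_k,\emptyset)) = \mtrue$.

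The main point of care is the conjunction step: we must check that the three-valued conjunction of formulas yields $\ok$ only when \emph{both} conjuncts are $\ok$, and that $P$ is indeed evaluated under the empty environment. Environment invariance for ${<}$ settles the latter. The side condition $\pred{P}(e_k,\emptyset)\neq\mnoapp$ is then implied by the conclusion and never used essentially. It seems to be stated only for symmetry with how the lemma will be used in the proof of completeness (\Cref{thm:mcompl}), where violations of $\lnot\psi_i$ arise from $\psi_i$ being $\ok$.
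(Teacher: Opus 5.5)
Your proof is correct and follows the paper's argument essentially step for step. You split on the shape of $\psi_i$, use that the formula-level conjunction is $\ok$ only when both conjuncts are, use the hypothesis on $\hm$ to rule out the $\pending$ branch of the evaluation function, and use the predicate-as-formula clause to obtain $\mtrue$ for $P$. Two of your remarks are clarifications the paper leaves implicit: that the once operator returns its environment unchanged, so $P$ is evaluated under $\emptyset$, and that the $\mnoapp$ side condition is never actually used.
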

\begin{proof}
	By cases on the structure of $\psi_i$.
	
	\medskip
	\noindent\emph{Case $\psi_i = P$}:
	By the semantics of predicates as formulas, $\pi_1(\sem{P}(\sigma_k, \emptyset)) = \ok$ means
	$\pred{P}(e_k, \emptyset) = \mtrue$, hence by
	definition of $\eval{\cdot}{\cdot}{\cdot}$:
	\[
	\eval{P}{\hm}{e_k} = \mtrue.
	\]
	
	\medskip
	\noindent\emph{Case $\psi_i = {<}\phi_H \land P$}:
	$\pi_1(\sem{{<}\phi_H \land P}(\sigma_k, \emptyset)) = \ok$ implies by the semantics of conjunction both $\pi_1(\sem{{<}\phi_H}(\sigma_k, \emptyset)) = \ok$
	and $\pi_1(\sem{P}(\sigma_k, \emptyset)) = \ok$.
	By the hypothesis on $\hm$, $\pi_1(\sem{{<}\phi_H}(\sigma_k, \emptyset)) = \ok$
	implies $\hm({<}\phi_H) = \ok$.
	By the same argument as the case $\psi_i = P$, $\pi_1(\sem{P}(\sigma_k, \emptyset)) = \ok$
	implies $\pred{P}(e_k, \emptyset) = \mtrue$.
	
	Therefore by definition of $\eval{\cdot}{\cdot}{\cdot}$:
	\[
	\eval{{<}\phi_H \land P}{\hm}{e_k} = \mtrue.
	\]
\end{proof}

\subsubsection{Bounded-response Monitor}\label{sec:app:bresp}
We define the memory update function $U$ used in the bounded-response monitor automaton.
Consider a formula $P^j_1 \Fboundj P^j_2$.
Given an active instance set $I_j$,
a violation flag $f_j$, and an event $e$, we define:
\begin{align*}
	\mathsf{vi}(I_j, e)
	&\iff \exists\,(d,\rho') \in I_j.\;
	e.t > d 
	\\[4pt]
	\mathsf{rm}(I_j, e)
	&= I_j \setminus \{(d,\rho') \mid
	e.t > d \;\lor\;
	(e.t \leq d \;\land\;
	\pred{P^j_2}(e,\rho') = \mtrue)\}
	\\[4pt]
	\mathsf{new}(e)
	&= \begin{cases}
		\{(e.t + \Delta_j, \rho')\} &
		\text{if } \pred{P^j_1}(e,\emptyset) = (\mtrue, \rho') \\
		\emptyset & \text{otherwise}
	\end{cases}
\end{align*}
where the predicate $\mathsf{vi}(I_j, e)$ detects whether any active instance
has expired. 
The function $\mathsf{rm}(I_j, e)$ removes all instances that are either expired or satisfied by $e$. 
The function $\mathsf{new}(e)$ produces a new instance if
$P^j_1$ is triggered by $e$. The order of application
is fixed: $\mathsf{vi}$ is evaluated on the current $I_j$
before $\mathsf{rm}$ and $\mathsf{new}$ update it.

The memory update function is defined as $U((I_j, f_j), e) = (I'_j, f'_j)$
where:
\[
I'_j = \mathsf{rm}(I_j, e) \cup \mathsf{new}(e), \quad
f'_j = \begin{cases}
	\violated & \text{if } f_j = \violated \;\lor\;
	\mathsf{vi}(I_j, e) \\
	\ok & \text{otherwise.}
\end{cases}
\]

\subsubsection{Global Monitor}\label{sec:app:glob:mon}
Consider a DSL program $\pol$ and its normal form formula $\NF(\pol)$ that has safety and bounded response conjuncts indexed by the sets $I$ and $J$, respectively.
Recall that the states of the monitor $\mathcal{A}=\GenMon(\NF(\pol))$ are $(|I| + |J|)$ tuples, whose elements track the corresponding state of each local automaton. 
Given a state $q$ of $\mathcal{A}$, we denote by $q_k$ the $k$-th element of $q$.  
Recall that we denote by ${<}\phi_H$ the past formula $D(H)$ associated with the historical predicate $H$ in the definition environment $D$ built from the declarations $d^*$ of $\pol$.

Before presenting the rule for the global monitor we formalize the operation to update the historical memory.
More precisely, we define $\updH$ by recursion on the declaration sequence $d^*$ occurring inside a DSL program $\pol$:
\begin{align*}
&\updH(\hm, e, \epsilon) = \hm \\
&\updH(\hm, e, d^* \cdot \mlet\ H := \happened(P))
= \\
& \qquad \qquad \begin{cases}
	\hm'[{<}P \mapsto \ok] &
	\text{if } \pred{P}(e, \emptyset) = \mtrue \\
	\hm' &
	\text{otherwise}
\end{cases}
\\
& \updH(\hm, e, d^* \cdot \mlet\ H :=
\happened(P)\ \when\ H')
= \\
& \qquad \begin{cases}
	\hm'[{<}(D(H') \land P) \mapsto \ok] &
	\text{if } \hm'({<}\phi_{H'}) = \ok \;\land\; \\
	& \quad \pred{P}(e, \emptyset) = \mtrue \\
	\hm' &
	\text{otherwise}
\end{cases}
\end{align*}
where $\hm' = \updH(\hm, e, d^*)$ is the memory already updated for all preceding
declarations, according to the order in which the declarations occur in $d^*$. 
Since we know that declarations $d^*$ are acyclic by the type system, the updates follow a topological order.

Rule \rulename{Par} below captures the behavior of the global monitor: 
\[
\inference[\rulename{Par}]{
	\hm' = \updH(\hm, e, d^*)
	\\
	\forall i \in I.\;
	(q_i,\, (\hm', \jm)) \xrightarrow{\,e\,}^s_i (q'_i,\, (\hm',\jm))
	\\
	\forall j \in J.\;
	(q_j,\, (\hm', \jm)) \xrightarrow{\,e\,}^r_j (q'_j,\, (\hm',\jm_j))
	\\
	\jm' = \jm\bigl[j \mapsto \jm_j(j) \mid j \in J \bigr]
}{
	\left(q, (\hm,\jm)\right) \;\xrightarrow{\,e\,}\; \left(q', (\hm',\jm')\right)
}
\]
The first premise of \rulename{Par} updates the historical memory $\hm$ to $\hm'$ by setting to $\ok$ every entry ${<}\phi_H$ whose formula $\phi_H$ evaluates to $\mtrue$ on the current event $e$, leaving all other entries unchanged. We update the memory following the declaration list $d^*$ of the program $\pol$.
This update is monotone: entries already set to $\ok$ are never reset.
The second and the third premises fire the corresponding transition on the local automaton associated with the conjunct of the formula. 
The last premise changes the clause state of each bounded response $j$ to update the second component of the monitor memory. 
Since the updates of each bounded-response automaton are independent, we can simply merge the various updates to obtain $\jm'$.  

\begin{proposition}[Determinism]\label{prop:determinism}
For every DSL program $\pol$ and finite trace $\sigma_k$, there exists a unique run
\[
	(q_0, m_0) \xrightarrow{\sigma_k} (q, m).
\]
\end{proposition}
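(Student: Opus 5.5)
We argue by induction on the length $k$ of the trace, proving simultaneously \emph{existence} and \emph{uniqueness} of the run $(q_0, m_0) \xrightarrow{\sigma_k} (q, m)$. For $k = 0$ the only run is the empty one, ending in the initial configuration $(q_0, m_0)$ fixed by $\GenMon$: every $\hm$ entry is $\pending$ and every clause state is $(\emptyset, \ok)$. For the inductive step, the induction hypothesis gives a unique configuration $(q, m)$ reached on $\sigma_{k-1}$. Every run on $\sigma_k$ must extend that run by one $\rulename{Par}$ step on $e_k$. It therefore suffices to prove a one-step lemma: for every configuration $(q, (\hm,\jm))$ and event $e$, there is exactly one $(q', m')$ such that $(q,(\hm,\jm)) \xrightarrow{e} (q', m')$.

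To prove the one-step lemma, we inspect the premises of $\rulename{Par}$ in order.

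\textbf{Historical update.} $\hm' = \updH(\hm, e, d^*)$ is a function, defined by structural recursion on the finite list $d^*$. Each clause tests $\pred{P}(e,\emptyset)$, which is a function by the inductive definition in \Cref{sec:pred:sem}, and possibly reads the already computed $\hm'$. So $\hm'$ exists and is unique. The recursion is well-founded because $d^*$ is finite, and acyclicity (guaranteed by rule \rulename{DLet2}) ensures $D(H')$ is already defined.

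\textbf{Safety automata.} For each $i \in I$, the local relation $\to^s_i$ is total and deterministic. If $q_i = \violated$, only \rulename{S-Absorb} applies. If $q_i = \ok$, exactly one of \rulename{S-Viol} and \rulename{S-Ok} applies, because their premises $\eval{\psi_i}{\hm'}{e} = \mtrue$ and $\neq \mtrue$ are complementary. The evaluation $\mathcal{E}$ is itself a function, by its definition in \Cref{sec:app:smon}. In all three cases the memory is left unchanged.

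\textbf{Bounded-response automata.} For each $j \in J$, $U$ is a function: $\mathsf{vi}$ is a predicate, $\mathsf{rm}$ is a set difference, and $\mathsf{new}(e)$ is determined by $\pred{P^j_1}(e,\emptyset)$. The flag it returns lies in $\{\ok,\violated\}$, so exactly one of \rulename{R-Ok} and \rulename{R-Viol} fires when $q_j = \ok$, and only \rulename{R-Absorb} fires when $q_j = \violated$. Each such step yields a unique $q'_j$ and a unique $\jm_j$, which modifies only index $j$.

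\textbf{Merge.} $\jm' = \jm[j \mapsto \jm_j(j) \mid j\in J]$ is well defined because the updated indices are pairwise distinct, so the simultaneous update is independent of order. Hence $(q', (\hm', \jm'))$ exists and is uniquely determined.

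The main obstacle is that $\rulename{Par}$ is stated relationally, so uniqueness needs care. The safety premises require the memory $(\hm', \jm)$ to be preserved, while the response premises change only $\jm$, and we must ensure no two conflicting memories could both satisfy the premises. We handle this by showing that each premise determines its output as a function of $(q, \hm, \jm, e)$ alone. This works because each local automaton reads the same $\hm'$ and the untouched $\jm$, never another automaton's output. A secondary subtlety is the \rulename{R-Absorb} case, where $\jm_j$ is just $\jm$; the merge then correctly leaves index $j$ unchanged. With the one-step lemma in hand, the induction closes immediately.
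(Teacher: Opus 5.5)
Your proof is correct and follows the paper's argument essentially step for step. You reduce to one-step determinism of rule \rulename{Par}, check that $\updH$, the safety rules, the update function $U$ with the bounded-response rules, and the merge producing $\jm'$ each yield a unique result, and then close by induction on $k$ from the unique initial configuration $(q_0, m_0)$. Your extra remarks are accurate refinements of that same route: the \rulename{R-Absorb} case leaves index $j$ of $\jm$ unchanged, and each local automaton reads only $\hm'$ and the untouched $\jm$.
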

\begin{proof}
	It suffices to show that for every configuration $(q, m)$ and event $e$, there exists a unique
	configuration $(q', m')$ such that $(q, m) \xrightarrow{e} (q', m')$. 
	The result then follows by induction on $k$, with the base case given by
	the unique initial configuration $(q_0, m_0)$.
	
	We reason on how a transition is performed by rule \rulename{Par}: we show that every step in the premise of rule \rulename{Par} produces a unique output given a configuration $(q,m)$ and an event $e$.
	
	First, given $e$ and $m=(\hm, \jm)$ the next historical memory is obtained as
	$\hm' = \updH(\hm, e, d^*)$. 
	By definition $\updH$ is a function defined by structural recursion on $d^*$, so $\hm'$ is uniquely determined. 
	
	Consider now the transition of safety automata. Recall that $q$ is a tuple where each element records the state $q_i$ of a local automaton.  
	For each $i \in I$, the new control state $q'_i$ is uniquely determined by $q_i$ and by the evaluation of
	$\eval{\psi_i}{\hm'}{e}$. 
	There are two cases according to the state $q_i$. 
	If $q_i = \violated$: only rule \rulename{S-Absorb} applies, giving $q'_i = \violated$.
	Otherwise, if $q_i = \ok$, then exactly one of rules \rulename{S-Viol} or \rulename{S-Ok} applies, depending on whether $\eval{\psi_i}{\hm'}{e} = \mtrue$ or not. These two rules are mutually exclusive and exhaustive. 
		
	Consider now the transition of the bounded-response automata. 
	For each $j \in J$, the function $U((I_j, f_j), e)$ is defined as
	a composition of $\mathsf{vi}$, $\mathsf{rm}$, and $\mathsf{new}$, each of which is a function of its
	inputs. Hence $U$ produces a unique output $(I'_j, f'_j)$. 
	The new control state $q'_j$ is then uniquely determined as follows.
	If $q_j = \violated$, then only rule \rulename{R-Absorb} applies, giving $q'_j = \violated$.
	Otherwise, if $q_j = \ok$, then exactly one of rules \rulename{R-Viol}
	or \rulename{R-Ok} applies, depending on whether $f'_j = \violated$ or $f'_j = \ok$. 
	These two conditions are mutually exclusive and exhaustive. 
	
	Finally, observe that the component $J$ in $m$ is updated according to $\jm' = \jm[j \mapsto (I'_j, f'_j) \mid j \in J]$, which produces a unique output. 
	
	Therefore, every step by rule \rulename{Par} is deterministic and produces a unique configuration $(q', m')$ from $(q, m)$ and $e$.
\end{proof}

We now provide a bound on the size of the monitor: we show that the size of the monitor is linear in the size of the formula and in the number of historical-predicate declarations.
Note that in a real eBPF deployment we can only admit a finite and bounded number of active instances for each bounded response clause, denoted $\mathit{MAX\_INST}$. 
Let $\pol = \langle d^*\,p, A \rangle$ be a DSL program with normal form
\[
\NF(\pol) = G \Bigl(
\bigwedge_{i \in I} \lnot\psi_i
\;\land\;
\bigwedge_{j \in J}
\bigl(P^j_1\,\Fboundj\,P^j_2\bigr)
\Bigr).
\]
We use the following parameters throughout:
\begin{itemize}
	\item $|I|$: number of safety conjuncts;
	\item $|J|$: number of bounded-response conjuncts;
	\item $n = |d^*|$: number of historical-predicate declarations in $\pol$;
	\item $v_j = |\mathit{vars}(P^j_1)|$: number of correlation variables bound by $P^j_1$, for each $j \in J$;
	\item $w$: machine word size in bits;
	\item $\mathit{MAX\_INST} \in \mathbb{N}$: the maximum number of active instances per bounded-response
	clause, a fixed deployment parameter.
\end{itemize}

\begin{proposition}[Monitor size]\label{prop:monitor-size}
	Let $\pol$ be a DSL program and $\mathcal{A} = \GenMon(\NF(\pol))$ be its corresponding monitor. Then $\mathcal{A}$ requires:
	\begin{itemize}
		\item $|I| + |J|$ bits for the control state;
		\item $n$ bits for the historical memory;
		\item $|J|$ bits for the violation flags;
		\item at most $\mathit{MAX\_INST} \cdot (64 + v_j \cdot w)$ bits for the active instance
		set of each clause $j \in J$.
	\end{itemize}
	Its total size is:
	\[
	|I| + 2|J| + n \;+\; \mathit{MAX\_INST} \cdot \sum_{j \in J} (64 + v_j \cdot w) \quad\text{bits.}
	\]
\end{proposition}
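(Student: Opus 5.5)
The bound is a resource count, not a semantic fact, so I will read it off the shape of the memory $m=(\hm,\jm)$ and the control state $q$ produced by $\GenMon$. I will account for each component of the configuration $(q,m)$ separately and then sum. Soundness or completeness of the monitor is not needed; the only ingredients are the definitions of the local automata, of $\hm$, and of the clause states $(I_j,f_j)$.

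\emph{Control state.} By construction, $Q=\prod_{i\in I}Q^s\times\prod_{j\in J}Q^r$ with $Q^s=Q^r=\{\ok,\violated\}$. Each component therefore fits in one bit, giving $|I|+|J|$ bits.

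\emph{Historical memory.} $\hm$ is a function $\pastF\to\{\ok,\pending\}$, so it costs one bit per entry. I will argue that $|\pastF|\le n$. By the definition of $D$, each declaration introduces exactly one past formula: ${<}P$, or ${<}(D(H')\land P)$ for a conditional declaration. The function $\updH$ writes exactly one entry per declaration, and the past formulas occurring in safety conjuncts of $\NF(\pol)$ are all of the form $D(H)$, hence among these $n$ formulas. Nested past formulas such as the ${<}P_k$ inside ${<}({<}P_k\land P_c)$ are themselves $D(H')$ for an earlier declaration, so they are already counted. This step, showing that no past formula outside the range of $D$ needs storage, is the only point requiring care. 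The cleanest route is to define the memory domain as the range of $D$, which has at most $n$ elements.

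\emph{Bounded-response state.} For each $j\in J$, the flag $f_j\in\{\ok,\violated\}$ costs one bit, for $|J|$ bits in total. Each instance $(d,\rho)\in I_j$ stores a deadline $d$, which is a nanosecond timestamp taking 64 bits as in \Cref{sec:maps}, and an environment $\rho$. Instances are created only by $\mathsf{new}(e)$, which yields $\rho'$ from $\pred{P^j_1}(e,\emptyset)$. By the guard semantics, the domain of $\rho'$ is contained in the variables bound by $P^j_1$, so $\rho$ holds at most $v_j$ values of $w$ bits each. Under the deployment assumption $|I_j|\le\mathit{MAX\_INST}$, this gives at most $\mathit{MAX\_INST}\cdot(64+v_j w)$ bits per clause.

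Summing the control state, historical memory, flags, and instance sets gives $(|I|+|J|)+n+|J|+\mathit{MAX\_INST}\sum_{j}(64+v_j w)$, which is the stated total.
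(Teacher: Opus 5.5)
Your proposal is correct and follows the paper's own proof: count the bits of each component (control state, historical memory, violation flags, and per-clause instance storage under the $\mathit{MAX\_INST}$ bound), then sum them. Your extra justifications make explicit two points the paper takes for granted: every past formula in $\NF(\pol)$ lies in the range of $D$, so $|\pastF|\le n$, and $\mathsf{dom}(\rho')$ is contained in the variables bound by $P^j_1$.
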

\begin{proof}
	We account for each component of the monitor state.
	
	\medskip
	\noindent\emph{Control state.}
	Each local automaton $A^s_i$ and $A^r_j$ uses a single bit to represent its current state $q_i, q_j \in \{\ok, \violated\}$, for a total of $|I| + |J|$ bits. 
	
	\medskip
	\noindent\emph{Historical memory.}
	A historical memory $\hm$ maps each of the $n$ past formulas to a single
	bit, requiring $n$ bits in total. 
	
	\medskip
	\noindent\emph{Violation flags.}
	Each bounded-response clause $j \in J$ maintains a violation flag $f_j \in \{\ok, \violated\}$,
	one bit per clause, $|J|$ bits in total.

	\medskip
	\noindent\emph{Active instance sets.}
	Each clause $j \in J$ maintains at most $\mathit{MAX\_INST}$ active instances in a real deployment. 
	Each instance $(d, \rho')$ consists of:
	\begin{itemize}
		\item a deadline $d \in \mathbb{N}$, stored as a 64-bit nanosecond timestamp;
		\item a correlation environment $\rho'$ containing $v_j$ values, each stored as a
		$w$-bit machine word, for a total of $v_j \cdot w$ bits.
	\end{itemize}
	Each instance therefore requires $64 + v_j \cdot w$ bits, and the active instance set of clause
	$j$ requires at most $\mathit{MAX\_INST} \cdot (64 + v_j \cdot w)$ bits. 
	
	\medskip
	\noindent\emph{Total.}
	Summing all components:
	\[
	\underbrace{|I| + |J|}_{\text{control}}
	\;+\;
	\underbrace{n}_{\text{history}}
	\;+\;
	\underbrace{|J|}_{\text{flags}}
	\;+\;
	\underbrace{\mathit{MAX\_INST} \cdot
		\sum_{j \in J}(64 + v_j \cdot w)}_{\text{instances}}
	\quad\text{bits.}
	\]
	The first three terms give $|I| + 2|J| + n$ bits, which is $O(|I| + |J| + n)$ in the size of $\NF(\pol)$ and $d^*$. 
	The instance storage is linear in $\mathit{MAX\_INST} \cdot	|J| \cdot (64 + v_{\max} \cdot w)$ where
	$v_{\max} = \max_{j \in J} v_j$, which is $O(\mathit{MAX\_INST} \cdot |J|)$ for a fixed policy with bounded correlation variables.
\end{proof}

\subsubsection{Proof of Correctness}\label{sec:app:mcorr:proofs}
We first introduce some notation and two auxiliary results.
Given a DSL program $\pol$, the corresponding monitor $\mathcal{A} = \GenMon(\NF(\pol))$, and a finite trace $\sigma_k$,
we write $(q_0, m_0) \xrightarrow{\sigma_k} (q, m)$ to denote the run induced by the trace $\sigma_k$, i.e., a sequence of $k$ transitions from the initial configuration $(q_0, m_0)$ that consumes all events in $\sigma_k$ and reaches $(q, m)$. 

To prove soundness we need the following auxiliary lemmas:
\begin{lemma}[UpdH soundness]\label{lem:updh-step}
	For every event $e$, historical memory $\hm$, and declaration sequence $d^*$ of a DSL program $\pol$, if
	\[
	\forall\, {<}\phi_H \in \pastF(\pol).\;
	\hm({<}\phi_H) = \ok \;\implies\; \pi_1(\sem{{<}\phi_H}(\sigma, \emptyset)) = \ok
	\]
	for some finite trace $\sigma$, then:
	\[
	\updH(\hm, e, d^*)({<}\phi_H) = \ok \;\implies\;
	\pi_1(\sem{{<}\phi_H}(\sigma', \emptyset)) = \ok
	\]
	for any ${<}\phi_H \in \pastF(\pol)$ and finite trace $\sigma'$ that extends $\sigma$ with $e$ as its last event.
\end{lemma}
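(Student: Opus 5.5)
The plan is to argue by induction on the length of the declaration sequence $d^*$, mirroring the recursive definition of $\updH$. We will strengthen the statement slightly: for every prefix $d'$ of $d^*$, every past formula ${<}\phi_H$ such that $\updH(\hm, e, d')({<}\phi_H) = \ok$ satisfies $\pi_1(\sem{{<}\phi_H}(\sigma', \emptyset)) = \ok$. The lemma is the instance $d' = d^*$. Throughout we use one fact from the semantics of the once operator: ${<}\phi$ is monotone. If $\pi_1(\sem{{<}\phi}(\sigma, \emptyset)) = \ok$, then some $i \leq |\sigma|$ witnesses it, and the same $i$ witnesses $\ok$ on the extension $\sigma'$.

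In the base case $d' = \epsilon$ we have $\updH(\hm, e, \epsilon) = \hm$. The hypothesis and monotonicity of ${<}$ then give the claim on $\sigma'$.

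In the inductive step $d' = d'' \cdot d$, let $\hm'' = \updH(\hm, e, d'')$. The induction hypothesis covers every entry set in $\hm''$. We only need to treat the single entry that step $d$ may newly set to $\ok$; every other entry is unchanged from $\hm''$.

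\emph{Case $\mlet\ H := \happened(P)$.} The entry ${<}P$ is set to $\ok$ only when $\pred{P}(e, \emptyset) = \mtrue$. By the semantics of predicates as formulas, this gives $\pi_1(\sem{P}(\sigma', \emptyset)) = \ok$ at the last position of $\sigma'$. That position witnesses $\ok$ for ${<}P$.

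\emph{Case $\mlet\ H := \happened(P)\ \when\ H'$.} The entry ${<}(D(H') \land P)$ is set only when $\hm''({<}\phi_{H'}) = \ok$ and $\pred{P}(e, \emptyset) = \mtrue$. Acyclicity of the declarations means $H'$ is declared earlier, so its entry is covered by the induction hypothesis. Hence $\pi_1(\sem{D(H')}(\sigma', \emptyset)) = \ok$. Combining this with $\pi_1(\sem{P}(\sigma', \emptyset)) = \ok$ through the conjunction clause gives $\pi_1(\sem{D(H') \land P}(\sigma', \emptyset)) = \ok$ at the last position, so ${<}(D(H') \land P)$ is $\ok$ on $\sigma'$.

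The main obstacle is the environment bookkeeping in the conjunction clause. The semantics evaluates $P$ under the environment returned by $D(H')$, and it evaluates ${<}$ under the ambient environment. By well-formedness (rule \rulename{FPast}) and \Cref{prop:env-inv}, both environments stay $\emptyset$. We also need $\hm''({<}\phi_{H'}) = \ok$ to refer to exactly the formula $D(H')$. This holds because ${<}\phi_{H'}$ is notation for $D(H')$, which is built from the same declaration that $\updH$ processed earlier.

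A remaining subtlety is redefinition: two declarations with the same formula could share a map entry. This causes no difficulty, because every write to $\ok$ is individually justified by the argument above.
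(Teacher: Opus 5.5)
Your proof is correct and matches the paper's: induction on the declaration sequence, a case split on whether the last declaration is a plain or a conditioned \texttt{happened}, and the semantics of predicates, conjunction, and ${<}$ to justify the one entry that may be newly set to $\ok$. Your explicit use of the monotonicity of ${<}$ in the base case (to pass from $\sigma$ to $\sigma'$) and your environment bookkeeping fill in steps the paper leaves implicit; the appeal to acyclicity is harmless but not needed, since the induction hypothesis already covers every entry that is $\ok$ in $\hm''$.
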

\begin{proof}
	By induction on $d^*$.
	
	\medskip
	\noindent\emph{Base case} ($d^* = \epsilon$):
	By definition we have $\updH(\hm, e, \epsilon) = \hm$, so the lemma directly follows from the hypothesis.
	
	\medskip
	\noindent\emph{Inductive step}: 
	let $d^* = d^*_0 \cdot d$ where $d$ is the last declaration. Let
	$\hm_0 = \updH(\hm, e, d^*_0)$. 
	By the inductive hypothesis applied to $d^*_0$, we have:
	\[
	\forall\, {<}\phi_H \in \pastF(\pol).\;
	\hm_0({<}\phi_H) = \ok
	\;\implies\;
	\pi_1(\sem{{<}\phi_H}(\sigma', \emptyset)) = \ok.
	\]
	We proceed by cases on the last declaration $d$.
	
	\medskip
	\noindent\emph{Case} $d = \mlet\ H := \happened(P)$: by definition of
	$\updH$:
	\[
	\updH(\hm, e, d^*) =
	\begin{cases}
		\hm_0[{<}P \mapsto \ok] &
		\text{if } \pred{P}(e, \emptyset) = \mtrue
		\\
		\hm_0 & \text{otherwise.}
	\end{cases}
	\]
	Suppose $\updH(\hm, e, d^*)({<} \phi_H) = \ok$ for some ${<} \phi_H$.
	For ${<} \phi_H \neq {<} P$ the value is unchanged from $\hm_0$ ($\hm_0({<}\phi_H) = \ok$), so by induction hypothesis, $\pi_1(\sem{{<}\phi_H}(\sigma', \emptyset)) = \ok$.
	For ${<} \phi_H = {<} P$ we have two cases. 
	If $\hm_0({<}\phi_H) = \ok$, then by induction hypothesis, $\pi_1(\sem{{<}\phi_H}(\sigma', \emptyset)) = \ok$. 
	Otherwise, $\hm_0({<}\phi_H) = \pending$ and $\pred{P}(e, \emptyset) = \mtrue$.
	By the semantics of predicates as formulas,
	$\pi_1(\sem{P}(\sigma', \emptyset)) = \ok$.
	Then, by the semantics of ${<}$ we have:
	\[
		\pi_1(\sem{{<}P}(\sigma', \emptyset)) = \ok.
	\]

	\medskip
	\noindent\emph{Case} $d = \mlet\ H := \happened(P)\ \when\ H'$: 
	by definition of $\updH$:
	\[
	\updH(\hm, e, d^*) =
	\begin{cases}
		\hm_0[{<}({<}\phi_{H'} \land P)
		\mapsto \ok] &
		\text{if } \hm_0({<}\phi_{H'})
		= \ok \;\land\; \\
		& 
		\pred{P}(e, \emptyset) = \mtrue \\
		\hm_0 & \text{otherwise.}
	\end{cases}
	\]
	Suppose $\updH(\hm, e, d^*)({<}
	\phi_H) = \ok$. For ${<}\phi_H \neq {<}({<}\phi_{H'}
	\land P)$, the value is unchanged from $\hm_0$,
	so the conclusion follows from induction hypothesis. 
	For ${<}\phi_H = {<}({<}\phi_{H'} \land P)$, we have two cases.
	If its value was already $\ok$, $\hm_0({<}({<}\phi_{H'} \land P)) = \ok$, the thesis follows by induction hypothesis, 
	$\pi_1(\sem{{<}({<}\phi_{H'} \land P)}(\sigma',
	\emptyset)) = \ok$. 
	If its value was not already set, i.e., $\hm_0({<}({<}\phi_{H'} \land P)) = \pending$ and $\hm_0({<}\phi_{H'}) = \ok$
	and $\pred{P}(e, \emptyset) = \mtrue$, we have that by induction hypothesis applied to ${<}\phi_{H'}$, $\hm_0({<}\phi_{H'}) = \ok$ implies
		$\pi_1(\sem{{<}\phi_{H'}}(\sigma', \emptyset)) = \ok$. 
	By the semantics of predicates as formulas,
	$\pred{P}(e, \emptyset) = \mtrue$ implies
	$\pi_1(\sem{P}(\sigma', \emptyset)) = \ok$. 
	By the semantics of conjunction:
	\[
		\pi_1(\sem{{<}\phi_{H'} \land P}(\sigma',\emptyset)) = \ok.
	\]
	and by the semantics of ${<}$, we have:
	\[
		\pi_1(\sem{{<}({<}\phi_{H'} \land P)}(\sigma',\emptyset)) = \ok. 
	\]
\end{proof}

\begin{lemma}[Historical memory soundness]\label{lem:hm-sound}
	Let $\pol$ be a DSL program, and let $(q_0, m_0) \xrightarrow{\sigma_k} (q, m)$ be the
	run of the monitor $\mathcal{A} = \GenMon(\NF(\pol))$ on $\sigma_k$.
	For every ${<}\phi_H \in \pastF(\pol)$:
	\[
	m_\hm({<}\phi_H) = \ok 	\;\implies\; \pi_1(\sem{{<}\phi_H}(\sigma_k, \emptyset)) = \ok.
	\]
\end{lemma}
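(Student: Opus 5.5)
The plan is to prove the claim by induction on the length $k$ of the trace $\sigma_k$. The induction follows the unique run given by \Cref{prop:determinism}. The invariant is exactly the statement of the lemma: every entry of the historical memory marked $\ok$ corresponds to a past formula that semantically evaluates to $\ok$ on the prefix read so far.

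\emph{Base case ($k = 0$).} The initial memory $m_0$ maps every entry of $\pastF(\pol)$ to $\pending$, so the premise $m_\hm({<}\phi_H) = \ok$ never holds and the implication is vacuous. If one prefers to start at $k=1$, this base case is subsumed by the inductive step applied to the empty memory.

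\emph{Inductive step.} Assume the invariant for the run $(q_0,m_0)\xrightarrow{\sigma_{k-1}}(q,m)$. The next transition on $e_k$ is by rule \rulename{Par}, so the new historical component is $\hm' = \updH(m_\hm, e_k, d^*)$. The \emph{safety} and \emph{response} premises of \rulename{Par} leave this component untouched: the local automata read $\hm'$ but return it unchanged. Hence it suffices to show that $\hm'({<}\phi_H)=\ok$ implies $\pi_1(\sem{{<}\phi_H}(\sigma_k,\emptyset))=\ok$. This is exactly \Cref{lem:updh-step}, instantiated with $\sigma=\sigma_{k-1}$, $e=e_k$, and $\sigma'=\sigma_k$. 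Its hypothesis is precisely the inductive hypothesis.

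The step needing the most care is the interface between \Cref{lem:updh-step} and the inductive hypothesis. Inside that lemma, the case of entries already $\ok$ in $\hm_0$ silently requires the invariant with respect to the \emph{extended} trace $\sigma'$, whereas we only know it for $\sigma$. To bridge this I will invoke monotonicity of the once operator. If $\pi_1(\sem{{<}\phi_H}(\sigma_{k-1},\emptyset))=\ok$, the witness $i\le k-1$ is also a witness $i\le k$, so the value stays $\ok$ on $\sigma_k$. This uses the semantic clause for ${<}$ and the fact, guaranteed by rule \rulename{FPast}, that $\phi_H$ is evaluated under $\emptyset$ independently of context. With this monotonicity remark the inductive hypothesis transfers to $\sigma_k$ and the lemma applies directly, closing the induction.
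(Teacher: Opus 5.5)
Your proposal matches the paper's proof: induction on $k$, a vacuous base case from the all-$\pending$ initial memory, and an inductive step that reads the new historical memory off rule \rulename{Par} and applies \Cref{lem:updh-step} with $\sigma=\sigma_{k-1}$, $e=e_k$, $\sigma'=\sigma_k$, whose premise is exactly the inductive hypothesis. Your monotonicity remark on ${<}$ is correct and makes explicit a step the paper leaves implicit. It is needed in the base case $d^*=\epsilon$ of \Cref{lem:updh-step}, where the hypothesis on $\sigma$ must be lifted to $\sigma'$, so it patches that auxiliary lemma rather than changing the route here.
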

\begin{proof}
	By induction on the length of the trace $\sigma_k$.
	
	\medskip
	\noindent\emph{Base case} ($k = 0$): the historical memory $\hm$ of the initial memory $m_0$ is such that $\hm({<}\phi_H) = \pending \neq \ok$ by definition, so the lemma is vacuously satisfied. 
	
	\medskip
	\noindent\emph{Inductive step}: assume the lemma holds for a trace $\sigma_{k-1}$, namely
	for the run $(q_0, m_0) \xrightarrow{\sigma_{k-1}} (q'', m'')$,  we have that for all ${<}\phi_H \in \pastF(\pol)$:
	\[
	m''_\hm({<}\phi_H) = \ok \;\implies\; \pi_1(\sem{{<}\phi_H}(\sigma_{k-1}, \emptyset)) = \ok.
	\]
	Consider an extension of $\sigma_{k-1}$ with a new event $e_k$, and consider the last transition of the run 
	$(q_0, m_0) \xrightarrow{\sigma_{k}} (q, m)$.
	By the premise of rule \rulename{Par}, 
	$m_\hm = \updH(m''_\hm, e_k, d^*)$.
	The inductive hypothesis is exactly the premise of \Cref{lem:updh-step} instantiated with
	$m''_\hm$ and $\sigma =	\sigma_{k-1}$. By applying \Cref{lem:updh-step},
	taking $\sigma' = \sigma_{k-1} \cdot e_k = \sigma_k$, we conclude: for all ${<}\phi_H \in
	\pastF(\pol)$:
	\[
	\updH(m''_\hm, e_k, d^*)({<}\phi_H) = \ok 	\;\implies\; \pi_1(\sem{{<}\phi_H}(\sigma_k, \emptyset)) = \ok,
	\]
	which is exactly:
	\[
	m_\hm({<}\phi_H) = \ok 	\;\implies\; \pi_1(\sem{{<}\phi_H}(\sigma_k, \emptyset)) = \ok. 
	\]
\end{proof}

\monsound*
\begin{proof}
	By induction on the length $k$ of the trace $\sigma_k$.
	
	\medskip
	\noindent\emph{Base case} ($k = 0$): the initial configuration $(q_0, m_0)$ is not bad by definition,
	so the hypothesis is vacuously satisfied. 
	
	\medskip
	\noindent\emph{Inductive step}: assume the theorem holds for a trace $\sigma_{k-1}$, i.e., if the monitor $\mathcal{A}$ reaches a bad configuration after reading $\sigma_{k-1}$
	then $\sem{\pol}(\sigma_{k-1}) = \violated$. 
	We show it holds for $\sigma_k$.
	
	Suppose $\mathcal{A}$ reaches a bad configuration after reading $\sigma_k$. 
	There are two cases:
	
	\medskip
	\noindent\emph{Case 1}: $\mathcal{A}$ was already in a bad configuration after reading $\sigma_{k-1}$. 
	By the inductive hypothesis, $\sem{\pol}(\sigma_{k-1}) = \violated$, and by \Cref{thm:persistence},
	$\sem{\pol}(\sigma_k) = \violated$. 
	
	\medskip
	\noindent\emph{Case 2}: $\mathcal{A}$ was not in a bad configuration after reading $\sigma_{k-1}$, but entered a bad configuration at step $k$. 
	Then at least one local automaton transitioned to $\violated$ at step $k$
	via a non-absorbing rule. We proceed by cases on which automaton.
	
	\medskip
	\noindent\emph{Safety case.} Some $A^s_i$ transitioned to $\violated$ via \rulename{S-Viol}, which requires:
	\[
	\eval{\psi_i}{m_\hm}{e_k} = \mtrue
	\]
	where $m_\hm$ is the historical memory after the update at step $k$ in the premise of rule \rulename{Par}. 
	By \Cref{lem:hm-sound} applied to the
	run on $\sigma_k$, $m_\hm({<}\phi_H) = \ok$ implies $\pi_1(\sem{{<}\phi_H}(\sigma_k, \emptyset))
	= \ok$ for all ${<}\phi_H \in \pastF(\pol)$. 
	By applying \Cref{lem:eval-sound} with $m_\hm$, we have:
	\[
	\pi_1(\sem{\psi_i}(\sigma_k, \emptyset)) = \ok,
	\]
	hence $\pi_1(\sem{\lnot\psi_i}(\sigma_k, \emptyset))
	= \violated$ by the semantics of $\lnot$. 
	Thus, since $\NF(\pol)$ is a conjunction under the operator $G$, by the semantics of $\land$ and $G$ we have:
	\[
	\pi_1(\sem{\NF(\pol)}(\sigma_k,
	\emptyset)) = \violated.
	\]
	
	\medskip
	\noindent\emph{Bounded-response case.}
	Some $A^r_j$ transitioned to $\violated$ via \rulename{R-Viol}, which requires the
	update function to return the flag $\violated$. Since $A^r_j$ was not in a bad
	configuration after $\sigma_{k-1}$, its flag was $\ok$. 
	A bounded-response automaton is in the $\violated$ state exactly when its flag is,  hence, by the
	definition of $U$ $\mathsf{vi}(I_j, e_k)$ holds, and there exists
	$(d, \rho') \in I_j$ with $e_k.t > d$.
	The instance $(d, \rho')$ was inserted by $\mathsf{new}$ at some step $i < k$, where
	$\pred{P^j_1}(e_i, \emptyset) = (\mtrue, \rho')$ and $d = e_i.t + \Delta_j$.
	By the semantics of predicates as formulas, $\pi_1(\sem{P^j_1}(\sigma_i, \emptyset)) = \ok$,
	which means $(i, \rho', d) \in \mathcal{I}(\sigma_k, \emptyset)$.
	Since $(d, \rho')$ is still in $I_j$ at step $k$, because $\mathsf{vi}$ is evaluated before
	$\mathsf{rm}$ updates the set, the instance was never removed at any step
	$\ell \in (i, k)$. By the definition of $\mathsf{rm}$, there is then no $\ell \in (i,k)$ with
	$e_\ell.t \leq d$ and $\pred{P^j_2}(e_\ell, \rho') = \mtrue$; since $e_k.t > d$, no such step
	exists in $(i,k]$ either, that is, $\lnot\sat(i, \rho', d, \sigma_k)$.
	By the semantics of the operator $\Fbound$:
	\[
	\pi_1(\sem{P^j_1\,\Fboundj\,P^j_2}(\sigma_k, \emptyset)) = \violated.
	\]
	Thus, since $\NF(\pol)$ is a conjunction under the operator $G$, by the semantics of $\land$ and $G$ we have:
	\[
	\pi_1(\sem{\NF(\pol)}(\sigma_k,
	\emptyset)) = \violated.
	\]
	
	\medskip
	\noindent Thus, independently of how the local monitor enters the bad state, by
	\Cref{th:nf-sem}:
	\[
	\sem{\pol}(\sigma_k) \;=\;
	\pi_1(\sem{\NF(\pol)}(\sigma_k, \emptyset))
	= \violated.
	\]
\end{proof}

To prove completeness, we need the following auxiliary lemmas:
\begin{lemma}[UpdH completeness]\label{lem:updh-compl}
For every event $e$, historical memory $\hm$, declaration sequence $d^*$ of a DSL program $\pol$, and finite trace $\sigma_k$, if
\[
\forall\, {<}\phi_H \in \pastF(\pol).\;
\pi_1(\sem{{<}\phi_H}(\sigma_k, \emptyset)) = \ok
\;\implies\;
\hm({<}\phi_H) = \ok,
\]
then:
\[
\pi_1(\sem{{<}\phi_H}(\sigma', \emptyset)) = \ok \;\implies\;
\updH(\hm, e, d^*)({<}\phi_H) = \ok,
\]
for any ${<}\phi_H \in \pastF(\pol)$ and $\sigma'=\sigma_k \cdot e$.
\end{lemma}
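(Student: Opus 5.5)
We mirror the proof of \Cref{lem:updh-step} and proceed by induction on the declaration sequence $d^*$, read from its last element. We keep $\sigma_k$ and $\sigma' = \sigma_k \cdot e$ fixed throughout. The induction hypothesis for a prefix $d^*_0$ will state that every past formula ${<}\phi_H$ declared in $d^*_0$ and satisfying $\pi_1(\sem{{<}\phi_H}(\sigma', \emptyset)) = \ok$ has $\updH(\hm, e, d^*_0)({<}\phi_H) = \ok$.

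In the base case $d^* = \epsilon$ we have $\updH(\hm,e,\epsilon) = \hm$. In the inductive step we set $\hm_0 = \updH(\hm, e, d^*_0)$. Entries are only ever set to $\ok$ and never reset, so any formula already handled by $\hm_0$ or by the hypothesis stays $\ok$. The real work is therefore the formula ${<}\phi_H$ introduced by the last declaration $d$. We first unfold the semantics of ${<}$. If $\pi_1(\sem{{<}\phi_H}(\sigma', \emptyset)) = \ok$, then some $i \leq k+1$ has $\pi_1(\sem{\phi_H}(\sigma'_i, \emptyset)) = \ok$, and we split on whether $i \leq k$ or $i = k+1$. If $i \leq k$, then $\pi_1(\sem{{<}\phi_H}(\sigma_k, \emptyset)) = \ok$; the hypothesis on $\hm$ gives $\hm({<}\phi_H) = \ok$, and monotonicity of $\updH$ preserves this. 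If $i = k+1$, the witness is the new event $e$.

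For $d = \mlet\ H := \happened(P)$ with $i = k+1$, we have $\pi_1(\sem{P}(\sigma', \emptyset)) = \ok$. By the semantics of predicates as formulas this means $\pred{P}(e, \emptyset) = \mtrue$, so the first branch of $\updH$ sets ${<}P \mapsto \ok$. For $d = \mlet\ H := \happened(P)\ \when\ H'$, the formula $\phi_H = {<}\phi_{H'} \land P$ holding at $\sigma'$ gives two facts by the conjunction semantics. First, $\pi_1(\sem{{<}\phi_{H'}}(\sigma', \emptyset)) = \ok$. Second, $\pred{P}(e,\emptyset) = \mtrue$; here \Cref{prop:env-inv} justifies evaluating $P$ under $\emptyset$, since ${<}$ introduces no bindings. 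Acyclicity puts $H'$ in $d^*_0$, so the induction hypothesis yields $\hm_0({<}\phi_{H'}) = \ok$. Both guards of the $\when$ branch of $\updH$ then hold, and the entry is set to $\ok$.

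The main obstacle is setting up the induction hypothesis correctly. The lemma's conclusion concerns the fully updated memory, but the conditional case needs the \emph{intermediate} memory $\hm_0$ to already be correct at the new trace $\sigma'$ for $H'$. This is exactly why the induction is over $d^*$ with the hypothesis about $\hm_0$ relative to $\sigma'$, using the topological order guaranteed by rule \rulename{DLet2}. A secondary subtlety is that the same past formula could in principle be declared twice or shared syntactically. Monotonicity of updates handles this: once an entry is $\ok$, no later declaration resets it.
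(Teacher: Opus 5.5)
Your proposal is correct and follows the paper's proof. Both argue by induction on $d^*$, peeling off the last declaration, and split on whether the witness of ${<}$ lies in $\sigma_k$ or is the new event $e$; both treat plain and conditional declarations separately, with the induction hypothesis supplying $\hm_0({<}\phi_{H'}) = \ok$ in the conditional case. Your version is in fact tighter than the paper's: you restrict the induction hypothesis to formulas declared in $d^*_0$ and invoke monotonicity of $\updH$ and acyclicity explicitly, whereas the paper's hypothesis quantifies over all of $\pastF(\pol)$ and so is not literally valid for a proper prefix of $d^*$.
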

\begin{proof}
	We assume $\pi_1(\sem{{<}\phi_H}(\sigma', \emptyset)) = \ok$ and proceed by induction on $d^*$.
	
	\medskip
	\noindent\emph{Base case} ($d^* = \epsilon$): 
	$\updH(\hm, e, \epsilon) = \hm$, so the lemma directly holds. 
	
	\medskip
	\noindent\emph{Inductive step}: 
	Let $d^* = d^*_0 \cdot d$ where $d$ is the last declaration, and let $\hm_0 = \updH(\hm, e, d^*_0)$. 
	By the inductive hypothesis applied to $d^*_0$, we have:
	\[
	\forall\, {<}\phi_H \in \pastF(\pol).\;
	\pi_1(\sem{{<}\phi_H}(\sigma', \emptyset)) = \ok
	\;\implies\;
	\hm_0({<}\phi_H) = \ok.
	\]
	We proceed by cases on $d$.
	
	\medskip
	\noindent\emph{Case} $d = \mlet\ H := \happened(P)$: 
	by definition of $\updH$:
	\[
	\updH(\hm, e, d^*) =
	\begin{cases}
		\hm_0[{<}P \mapsto \ok] &
		\text{if } \pred{P}(e, \emptyset) = \mtrue
		\\
		\hm_0 & \text{otherwise.}
	\end{cases}
	\]
	Our hypothesis is $\pi_1(\sem{{<}\phi_H}(\sigma', \emptyset)) = \ok$ and $\sigma' = \sigma_{k} \cdot e$.
	For ${<}\phi_H \neq {<}P$, the conclusion follows from the induction hypothesis. 
	For ${<}\phi_H = {<}P$, by the semantics of ${<}$, $\exists\, i
	\leq k + 1$ such that $\pi_1(\sem{P}(\sigma_i, \emptyset)) = \ok$. 
	We have two cases. 
	If $i \leq k$, by the semantics of ${<}$ applied to $\sigma_k$,
	$\pi_1(\sem{{<}P}(\sigma_k, \emptyset)) = \ok$.
	By the hypothesis of the lemma, $\hm({<}P) = \ok$. By induction hypothesis,
	$\hm_0({<}P) = \ok$, hence $\updH(\hm, e, d^*)({<}P) = \ok$.
	Otherwise if $i = k+1$, then $\pi_1(\sem{P}(\sigma', \emptyset)) = \ok$, 
	hence $\pred{P}(e, \emptyset) = \mtrue$, and by definition of $\updH$ we have $\updH(\hm, e, d^*)({<}P) = \ok$.
	
	\medskip	
	\noindent\emph{Case} $d = \mlet\ H := \happened(P)\ \when\ H'$: 
	by definition of $\updH$:
	\[
	\updH(\hm, e, d^*) =
	\begin{cases}
		\hm_0[{<}({<}\phi_{H'} \land P) \mapsto \ok] &
		\text{if } \hm_0({<}\phi_{H'})
		= \ok \;\land\; \\
		& \pred{P}(e, \emptyset) = \mtrue \\
		\hm_0 & \text{otherwise.}
	\end{cases}
	\]
	Our hypothesis is $\pi_1(\sem{{<}\phi_H}(\sigma_k \cdot e, \emptyset)) = \ok$. 
	For ${<}\phi_H \neq {<}({<}\phi_{H'} \land P)$, the conclusion follows from induction hypothesis. 
	For ${<}\phi_H = {<}({<}\phi_{H'} \land P)$, by the semantics of ${<}$,
	$\exists\, i \leq k+1$ such that $\pi_1(\sem{{<}\phi_{H'} \land P}(\sigma_i, \emptyset)) = \ok$. 
	This could happen in two cases:
	If $i \leq k$, then  by the semantics of ${<}$ applied to $\sigma_k$,
	$\pi_1(\sem{{<}({<}\phi_{H'} \land P)}(\sigma_k, \emptyset)) = \ok$. 
	By the hypothesis of the lemma, $\hm({<}({<}\phi_{H'} \land P)) = \ok$. 
	By induction hypothesis, $\hm_0({<}({<}\phi_{H'} \land P)) = \ok$, 
	hence $\updH(\hm, e, d^*)({<}({<}\phi_{H'} \land P)) = \ok$.
	Otherwise if $i = k+1$, then $\pi_1(\sem{{<}\phi_{H'} \land P}(\sigma_k \cdot e, \emptyset)) = \ok$,
	which implies both $\pi_1(\sem{{<}\phi_{H'}}(\sigma_k \cdot e, \emptyset)) = \ok$ and
	$\pi_1(\sem{P}(\sigma_k \cdot e, \emptyset)) = \ok$, meaning that $\pred{P}(e, \emptyset) = \mtrue$. 
	By induction hypothesis applied to ${<}\phi_{H'}$, 	$\hm_0({<}\phi_{H'}) = \ok$. 
	So, by definition of $\updH$:
	\[
	\updH(\hm, e, d^*)({<}({<}\phi_{H'} \land P)) = \ok.
	\]
\end{proof}

\begin{lemma}[Historical memory completeness]\label{lem:hm-compl}
	Let $\pol$ be a DSL program, and let $(q_0, m_0) \xrightarrow{\sigma_k} (q, m)$ be the
	run of $\mathcal{A} = \GenMon(\NF(\pol))$ on $\sigma_k$.
	For every ${<}\phi_H \in \pastF(\pol)$:
	\[
	\pi_1(\sem{{<}\phi_H}(\sigma_k, \emptyset)) = \ok
	\;\implies\;
	m_\hm({<}\phi_H) = \ok.
	\]
\end{lemma}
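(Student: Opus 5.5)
We argue by induction on the length $k$ of the trace $\sigma_k$, mirroring the proof of \Cref{lem:hm-sound}, with \Cref{lem:updh-compl} playing the role that \Cref{lem:updh-step} played there. The inductive hypothesis is exactly the premise that \Cref{lem:updh-compl} needs.

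\emph{Base case} ($k = 0$). We must show that no past formula is $\ok$ on the empty trace. By the semantics of ${<}$, $\pi_1(\sem{{<}\phi_H}(\sigma_0, \emptyset)) = \ok$ requires some $i \leq 0$ with $\phi_H$ being $\ok$ on $\sigma_i$. No event position exists, so this fails. We adopt the reading that positions range over $1,\dots,k$; then the existential is empty, the result is $\pending$, and the implication holds vacuously. This reading is also what makes the initial memory, which maps every entry to $\pending$, consistent with the semantics.

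\emph{Inductive step.} Assume the statement for the run $(q_0,m_0)\xrightarrow{\sigma_{k-1}}(q'',m'')$, namely
\[
\forall\,{<}\phi_H\in\pastF(\pol).\ \pi_1(\sem{{<}\phi_H}(\sigma_{k-1},\emptyset))=\ok \implies m''_\hm({<}\phi_H)=\ok .
\]
By \Cref{prop:determinism}, the run on $\sigma_k$ extends the run on $\sigma_{k-1}$ by a single transition on $e_k$. By the first premise of rule \rulename{Par}, this transition yields $m_\hm=\updH(m''_\hm,e_k,d^*)$. We then instantiate \Cref{lem:updh-compl} with $\hm=m''_\hm$, the trace $\sigma_{k-1}$, and the event $e_k$, so that $\sigma'=\sigma_{k-1}\cdot e_k=\sigma_k$. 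This gives, for every ${<}\phi_H\in\pastF(\pol)$,
\[
\pi_1(\sem{{<}\phi_H}(\sigma_k,\emptyset))=\ok \implies m_\hm({<}\phi_H)=\ok ,
\]
which is the thesis.

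The argument is essentially bookkeeping, since the real content (case analysis on declarations, topological ordering of the $\when$ dependencies) is already discharged in \Cref{lem:updh-compl}. The only delicate point is the base case convention on the empty trace. A secondary check is that the run on $\sigma_k$ really factors through the run on $\sigma_{k-1}$, which follows from \Cref{prop:determinism}.
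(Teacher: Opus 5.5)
Your proposal is correct and follows the paper's proof essentially step for step. Both argue by induction on $k$: the base case is vacuous because ${<}\phi_H$ is $\pending$ on the empty trace, and the inductive step reads $m_\hm=\updH(m''_\hm,e_k,d^*)$ off rule \rulename{Par} and concludes with \Cref{lem:updh-compl}, whose premise is exactly the induction hypothesis; your remarks on the empty-trace convention and on the run on $\sigma_k$ factoring through $\sigma_{k-1}$ (via \Cref{prop:determinism}) only spell out what the paper leaves implicit.
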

\begin{proof}
	By induction on the length of the trace $k$.
	
	\medskip
	\noindent\emph{Base case} ($k = 0$):
	$\pi_1(\sem{{<}\phi_H}(\sigma_0, \emptyset)) = \pending \neq \ok$ since $\sigma_0$ is the empty
	trace. The hypothesis is vacuously satisfied.
	
	\medskip
	\noindent\emph{Inductive step}: assume the lemma holds for a trace $\sigma_{k-1}$ and the corresponding run $(q_0, m_0) \xrightarrow{\sigma_{k-1}} (q'', m'')$, i.e.,
	for every ${<}\phi_H \in \pastF(\pol)$:
	\[
		\pi_1(\sem{{<}\phi_H}(\sigma_{k-1}, \emptyset)) = \ok \;\implies\; m''_\hm({<}\phi_H) = \ok.
	\]
	Consider $\sigma_k$ and the last transition of the run. 
	By rule \rulename{Par}, $m_\hm = \updH(m''_\hm, e_k, d^*)$.
	We apply \Cref{lem:updh-compl} with $\hm = m''_\hm$, $\sigma_{k-1}$
	as the trace of length $k-1$, and $e = e_k$.
	
	The premise of \Cref{lem:updh-compl} requires:
	\[
	\pi_1(\sem{{<}\phi_H}(\sigma_{k-1}, \emptyset))
	= \ok
	\;\implies\;
	m''_\hm({<}\phi_H) = \ok,
	\]
	which holds because it is exactly our induction hypothesis. 
	Then, by applying \Cref{lem:updh-compl}, for every ${<}\phi_H \in \pastF(\pol)$:
	\[
	\pi_1(\sem{{<}\phi_H}(\sigma_k, \emptyset)) = \ok \;\implies\; m_\hm({<}\phi_H) = \ok.
	\]
\end{proof}

\moncompl*
\begin{proof}
	By \Cref{th:nf-sem}, $\sem{\pol}(\sigma_k) = \violated$ implies that the normal form formula is violated too: $\pi_1(\sem{\NF(\pol)}(\sigma_k, \emptyset)) = \violated$. 
	By the semantics of $G$, there exists a conjunct of $\NF(\pol)$ that is
	violated.  Let
	$i^*$ be the index of the first violating prefix of
	$\sigma_k$ (well-defined by \Cref{prop:fvp}). 
	We show that the run
	\[
	(q_0, m_0) \xrightarrow{\sigma_{i^*}}
	(q, m)
	\]
	of $\mathcal{A}$ on $\sigma_{i^*}$ reaches a bad configuration. 
	We proceed by cases on which conjunct is violated.
	
	\medskip
	\noindent\emph{Safety case.} We assume that a conjunct of the form $\lnot\psi_i$ is violated, namely $\pi_1(\sem{\lnot\psi_i}(\sigma_{i^*}, \emptyset)) = \violated$. 
	By the semantics of $\lnot$ at the formula level,
	we have $\pi_1(\sem{\psi_i}(\sigma_{i^*},\emptyset)) = \ok$.
	
	By applying \Cref{lem:hm-compl} to the run $(q_0, m_0)\xrightarrow{\sigma_{i^*}} (q, m)$, we have for all ${<}\phi_H \in \pastF(\pol)$,
	$\pi_1(\sem{{<}\phi_H}(\sigma_{i^*}, \emptyset)) = \ok$ implies 
	$m_\hm({<}\phi_H) = \ok$.
	
	By applying \Cref{lem:eval-compl} at step $i^*$, we have:
	\[
	\eval{\psi_i}{m_\hm}{e_{i^*}} = \mtrue.
	\]
	The premise of rule \rulename{S-Viol} is satisfied and $A^s_i$
	transitions to the $\violated$ state at step $i^*$, hence $(q, m) \in \bad$ with
	$i^* \leq k$. 
	
	\medskip
	\noindent\emph{Bounded-response case.} 
	We assume that the violated conjunct is of the form $P^j_1 \Fboundj P^j_2$.
	By the semantics of $\Fboundj$, there exists $(i, \rho_i, d_i) \in \mathcal{I}(\sigma_{i^*},
	\emptyset)$ with $e_{i^*}.t > d_i$ and $\lnot\sat(i, \rho_i, d_i, \sigma_{i^*})$.
	We show that the run $(q_0, m_0) \xrightarrow{\sigma_{i^*}} (q, m)$
	reaches a bad configuration, i.e., that \rulename{R-Viol} fires at step $i^*$.
	
	\medskip
	\noindent\emph{Step 1 --- instance insertion.}
	By definition of $\mathcal{I}(\sigma_{i^*},\emptyset)$, there exists $i < i^*$ such that
	$\pred{P^j_1}(e_i, \emptyset) = (\mtrue,\rho_i)$ and $d_i = e_i.t + \Delta_j$, where $e_i$ is the last event of $\sigma_i$.
	Since $i < i^*$, the run on $\sigma_{i^*}$ passes through step $i$, and this is precisely the
	condition under which $\mathsf{new}$ fires: it inserts $(d_i, \rho_i)$ into $I_j$ at step $i$.
	
	\medskip
	\noindent\emph{Step 2 --- instance preservation.}
	The instance $(d_i, \rho_i)$ remains in $I_j$ at every step $\ell \in (i, i^*]$. 
	By definition of $\mathsf{rm}$, the instance $(d_i, \rho_i)$ is removed at step $\ell$
	only in two cases. 
	When $e_\ell.t \leq d_i$ and $\pred{P^j_2}(e_\ell, \rho_i) = \mtrue$: this is precisely the
	condition defining $\sat(i, \rho_i, d_i, \sigma_{i^*})$, contradicting the hypothesis
	$\lnot\sat(i, \rho_i, d_i, \sigma_{i^*})$.
	When $e_\ell.t > d_i$: since $i^*$ is the first violating prefix, $e_{i^*}$ is the
	first event with $e_{i^*}.t > d_i$. 
	Therefore, no step $\ell \in (i, i^*)$ has $e_\ell.t > d_i$, and the only step in $(i, i^*]$ with $e_\ell.t > d_i$ is $\ell = i^*$ itself, at which \rulename{R-Viol} fires before $\mathsf{rm}$
	can remove the instance.
	In both cases the instance is not removed before step $i^*$.

	\medskip
	\noindent\emph{Step 3 --- violation detection.}
	At step $i^*$, we verify that $\mathsf{vi}(I_j, e_{i^*}$ holds. 
	By definition of $\mathsf{vi}$, this requires $(d_i, \rho_i) \in I_j$
	with $e_{i^*}.t > d_i$.
	The instance $(d_i, \rho_i)$ is in $I_j$ at step $i^*$ by Step~2,
	and $e_{i^*}.t > d_i$ by the semantics of $\Fboundj$ applied to $\sigma_{i^*}$.
	Therefore $\mathsf{vi}(I_j,	e_{i^*})$ holds, $U$ returns the flag $\violated$, and by rule \rulename{R-Viol},
	$A^r_j$ transitions to $\violated$ at step
	$i^*$. Hence $(q, m) \in \bad$
	with $i^* \leq k$. 
\end{proof}

%% file: sections/appendix-impl.tex

This appendix details the additional language features mentioned in \Cref{sec:impl:tool}: four temporal operators (\Cref{app:temporal}), two reuse mechanisms (\Cref{app:reuse}), and the standard library (\Cref{app:stdlib}).

\subsection{Temporal Operators}
\label{app:temporal}

Bounded sequences and bounded safety are desugared into the base temporal logic of \Cref{sec:lang}, so the soundness and completeness results of \Cref{sec:synthesis} apply directly.
Bounded until and bounded count require dedicated compilation schemes, described below.

\paragraph{Bounded sequences.}
A \emph{bounded sequence} clause specifies a multi-step protocol in which each step must occur within a given deadline after the previous one:
\begin{center}
\texttt{sequence $E_1$ then within $\Delta_1$ $E_2$ then within $\Delta_2$ $E_3$ \ldots}
\end{center}
The desugaring translates a sequence of $n$ steps into $n{-}1$ independent bounded-response clauses.
For each consecutive pair $(E_i, E_{i+1})$, the compiler generates a synthetic let-declaration that records whether $E_i$ has occurred and a clause
$\texttt{when}\; E_i \;\texttt{then within}\; \Delta_i\; E_{i+1}$
requiring the next step within the specified deadline.
At runtime, each clause is handled by the standard bounded-response mechanism described in \Cref{sec:processing}: it creates an entry in the response map when $E_i$ fires and raises a violation if $E_{i+1}$ does not arrive before the deadline expires.

\paragraph{Bounded until.}
A \emph{bounded until} clause asserts that once a condition $P_1$ holds, a response $P_2$ must occur within $\Delta$:
\begin{center}
\texttt{while $P_1$ then within $\Delta$ $P_2$}
\end{center}
Unlike bounded sequences, this construct is not desugared but compiled directly into a two-state \emph{extended finite-state machine} (EFSM) maintained in a dedicated BPF map, keyed by $(\kappa, \mathit{clause\_id})$.
The machine operates as follows:
\begin{itemize}[nosep]
  \item \emph{Idle.} The initial state. When an event matching $P_1$ arrives, the machine transitions to \emph{Active} and records a deadline of $e.t + \Delta$.
  \item \emph{Active.} On each subsequent event: if $P_2$ matches, the obligation is fulfilled and the machine returns to \emph{Idle}; if the deadline has expired, a violation is raised and the machine resets.
\end{itemize}
As with all bounded-response clauses, if no event arrives before the deadline, the timeout scanner in the runtime daemon (\Cref{sec:runtime}) detects the expiry.

\paragraph{Bounded count.}
A \emph{bounded count} clause raises a violation when an event $E$ occurs $N$ or more times within a sliding time window:
\begin{center}
\texttt{count $E \geq N$ within $\Delta$}
\end{center}
Note that \texttt{exists($E$, $\Delta$)} is syntactic sugar for \texttt{count $E \geq 1$ within $\Delta$}.
The construct compiles to a per-key counter stored in a dedicated BPF map, keyed by $(\kappa, \mathit{clause\_id})$.
Each entry holds a counter and a window-start timestamp.
When an event matching $E$ arrives: if the current time exceeds the window start by more than $\Delta$, the window is reset (counter set to~1, timestamp updated); otherwise, the counter is incremented.
A violation is raised whenever the counter reaches $N$.

\paragraph{Bounded safety.}
A \emph{bounded safety} clause raises a violation when a predicate $P$ holds while a
guard $G$ has been observed in the recent past:
\begin{center}
\texttt{forbid $P$ when $G$ within $\Delta$}
\end{center}
Intuitively, the clause forbids $P$ whenever the guard $G$ has been observed within the preceding $\Delta$ time units; for example, ``forbid an outbound connection to a destination that was already contacted in the last ten seconds'' (beacon detection).
Unlike \texttt{when $G$ then within $\Delta$ $P$} of \Cref{sec:lang}, which is
a \emph{liveness} obligation (a missing $P$ is the violation), bounded safety
is a \emph{safety} property: the violation is raised \emph{inline} in the BPF
hook when $P$ fires, with no need for a timeout scanner.
The construct is syntactic sugar for the conjunction of a bounded count on
$G$ over the same window and a standard \texttt{forbid $P$ when $H$} clause
guarded by the resulting historical predicate $H$; the compiler implements it
directly by emitting the same per-key counter of \emph{bounded count} and
checking it at the firing of $P$, avoiding the creation of an intermediate
history entry.

\subsection{Reuse Constructs}
\label{app:reuse}

Both constructs below are resolved entirely at desugaring time; therefore, the \thetool compilation pipeline (type checking, normalization, code generation) remains unchanged.

\paragraph{Parameterized templates.}
A \emph{parameterized template} is a named, reusable block of policy clauses with formal parameters:
\begin{lstlisting}[language=bpfence]
let rate_check(ev, threshold, window) = {
  forbid count ev(any) >= threshold within window
}
\end{lstlisting}
A template is instantiated at the use site via the \texttt{use} construct, which binds actual arguments to the formal parameters:
\begin{lstlisting}[language=bpfence]
use rate_check(connect, 100, 60s)
\end{lstlisting}
The desugaring phase resolves each \texttt{use} by looking up the template definition, validating that all formal parameters are supplied, and substituting the actual arguments into the template body by replacing event names in predicates and duration literals as appropriate.
The expansion is recursive: if a template body contains further \texttt{use} invocations, they are expanded in turn.
Cyclic template references are detected at compile time and reported as an error.

\paragraph{Policy inheritance.}
A policy can \emph{extend} another policy, inheriting all of its clauses:
\begin{lstlisting}[language=bpfence]
policy base_sandbox {
  apply to pid action deny
  on violation { log }
  forbid exec("/bin/sh")
  forbid load_module(_)
}
policy strict_sandbox extends base_sandbox {
  apply to pid action kill
  on violation { log }
  forbid connect(_, _)
}
\end{lstlisting}
The desugaring phase resolves inheritance transitively: if policy $A$ extends $B$ and $B$ extends $C$, the clauses of $C$ and $B$ are prepended to those of $A$ in order.
The child policy retains its own declared action (e.g., \Deny, \Kill); only the clause body is inherited.
Circular inheritance chains are detected via a visited set during resolution and reported as a compile-time error.

\subsection{Standard Library}
\label{app:stdlib}

\thetool ships a standard library that provides pre-defined schema modules and policy templates, allowing users to write policies against common Linux subsystems without specifying low-level hook signatures.

\paragraph{Schema modules.}
The standard library ships four schema modules covering 14 events across file access, process lifecycle, network connections, and kernel operations (\texttt{linux.files}, \texttt{linux.process}, \texttt{linux.network}, and \texttt{linux.kernel}).
Each event is mapped to a specific kernel hook and declares the fields extracted from the hook arguments.
Of the 14 events, 13 attach to LSM hooks (controllable: enabling \Deny and \Kill), and one (\texttt{clone}, via tracepoint) is only observable (\Alert).
The full list is in the online repository.

\paragraph{Policy templates.}
Four reusable policy templates are provided, each implementing a well-known security pattern:
\begin{itemize}[nosep]
  \item The \emph{Chinese Wall} template prevents an entity from accessing resources in two competing classes (parameterized by two path patterns), implementing the Brewer--Nash model~\cite{brewer1989chinese} via historical predicates.
  \item The \emph{Sandboxing} template restricts a set of dangerous operations (\texttt{exec}, network \texttt{send}, \texttt{mount}, \texttt{load\_module}) after a triggering event, parameterized by the trigger predicate.
  \item The \emph{Data-loss prevention (DLP)} template forbids network transmission to a given destination after data has been read from a sensitive path, using cross-event correlation.
  \item The \emph{Rate limiting} template raises a violation when an event exceeds a fixed threshold within a time window, implemented via the bounded-count operator.
\end{itemize}
Each template is instantiated via the \texttt{use} construct described in \Cref{app:reuse}, requiring only the policy-specific parameters (e.g., path patterns, trigger events) to produce a fully functional policy.